\newtheorem{thm}{Theorem}[section]
\newtheorem{prop}[thm]{Proposition}
\newtheorem{rem}{Remark}[section]
\begin{document}
\vspace{-.5in}
\title{Compound Multiple Access Channels with Partial Cooperation \thanks{This work
was supported by NSF under grants CNS-06-26611 and CNS-06-25637, the DARPA
ITMANET program under grant 1105741-1-TFIND and the ARO under MURI award
W911NF-05-1-0246. The work of S. Shamai has been supported by the Israel
Science Foundation and the European Commission in the framework of the FP7
Network of Excellence in Wireless COMmunications NEWCOM++.}}
\author{O. Simeone$^{1}$, D. G\"{u}nd\"{u}z$^{2,3}$, H. V. Poor$^{2}$, A.
Goldsmith$^{3}$ and S. Shamai (Shitz)$^{4}$\\
$^{1}$CWCSPR, New Jersey Institute of Technology, Newark, NJ 07102, USA\\
\and $^{2}$Dept. of Electrical Engineering, Princeton Univ., Princeton, NJ 08544, USA\\
\and $^{3}$Dept. of Electrical Engineering, Stanford Univ., Stanford, CA 94305, USA\\
\and $^{4}$Dept. of Electrical Engineering, Technion, Haifa, 32000, Israel\\
\and \texttt{{\small \{osvaldo.simeone@njit.edu, dgunduz@princeton.edu,
poor@princeton.edu, andrea@stanford.edu, sshlomo@ee.technion.ac.il\}.}}}
\maketitle
\vspace{-.8in} 
\begin{abstract}
A two-user discrete memoryless compound multiple access channel with a common
message and conferencing decoders is considered. The capacity region is
characterized in the special cases of physically degraded channels and
unidirectional cooperation, and achievable rate regions are provided for the
general case. The results are then extended to the corresponding Gaussian
model. In the Gaussian setup, the provided achievable rates are shown to lie
within some constant number of bits from the boundary of the capacity region
in several special cases. An alternative model, in which the encoders are
connected by conferencing links rather than having a common message, is
studied as well, and the capacity region for this model is also determined for
the cases of physically degraded channels and unidirectional cooperation.
Numerical results are also provided to obtain insights about the potential
gains of conferencing at the decoders and encoders.
\end{abstract}

\thispagestyle{empty} \pagestyle{empty}


\section{Introduction}

In today's complex communication networks there are often multiple ``signal
paths'' to utilize in delivering data between a given transmitter and
receiver. Such signal paths may take the form of (generalized) feedback from
the channel to the transmitters or additional (orthogonal) communication links
between either the transmitters or the receivers. The first case corresponds
to scenarios in which the additional signal paths share the spectral resources
with the direct transmitter-receiver links (in-band signalling), while the
second case refers to scenarios in which orthogonal spectral resources are
available at the transmit and/or the receive side (out-of-band signalling).

In this work, we focus on the latter case discussed above and model the
out-of-band signal paths as finite-capacity directed links. This framework is
typically referred to as ``conferencing'' (or ``partial cooperation'') in the
literature to emphasize the possibly interactive nature of communication on
such links. Conferencing encoders in a two-user multiple access channel (MAC)
have been investigated in \cite{Willems} \cite{Lapidoth}\footnote{It is noted
that a MAC with conferencing encoders can be seen as a special case of a
MAC\ with generalized feedback.} and in \cite{Maric Yates Kramer 2007} for a
two-user interference channel. These works show that conferencing encoders can
create dependence between the transmitted signals by coordinating the
transmission via the out-of-band links, thus mimicking multiantenna
transmitters. Conferencing decoders have been studied in \cite{Ng Maric et al
2006} for a relay channel and in \cite{Dabora Servetto 2006} - \cite{Lasaulce}
for a broadcast channel. Such decoders can use the out-of-band links to
exchange messages about the received signals so as to mimic a multiantenna
receiver (see also \cite{ETH master}).

This work extends the state of the art described above by considering the
compound MAC with conferencing decoders and a common message (see Fig.
\ref{model}) and then with both conferencing encoders and decoders (see Fig.
\ref{model1}). These models generalize the setup of a single-message broadcast
(multicast) channel with two conferencing decoders studied in \cite{Dabora
Servetto 2006}\footnote{Reference \cite{Dabora Servetto 2006} also considers a
broadcast channel with private messages to the two users.} - \cite{Lasaulce},
in that there are two transmitters that want to broadcast their messages to
the conferencing receivers.\ Moreover, the transmitters can have a common
message (Fig. \ref{model}) or be connected by conferencing links (Fig.
\ref{model1}). The model also generalizes the compound MAC with common
messages studied, among other models, in \cite{Maric Yates Kramer 2007}, by
allowing conferencing among the decoders. The main contributions of this work
are summarized as follows:

\begin{itemize}
\item The capacity region is derived for the two-user discrete-memoryless
compound MAC with a common message and conferencing decoders in the special
cases of physically degraded channels and unidirectional cooperation (Sec.
\ref{sec_capacity_1});

\item Achievable rate regions are given for the general model of Fig.
\ref{model} (Sec. \ref{sec_achievability});

\item Extension to the corresponding Gaussian case is provided, establishing
the capacity region with unidirectional cooperation and deriving general
achievable rates. Such achievable rates are also shown to be within some
constant number of bits of the capacity region in several special cases (Sec.
\ref{sec_Gaussian});

\item The capacity region is determined for the compound MAC\ with both
conferencing encoders and decoders as in Fig. \ref{model1} in the special
cases of physically degraded channels and unidirectional cooperation (Sec.
\ref{sec_enc dec}).
\end{itemize}

Finally, numerical results are also provided to obtain further insight into
the main conclusions.

\section{System Model and Main Definitions\label{sec_model}}

We start by considering the model in Fig. \ref{model}, which is a
discrete-memoryless compound MAC with conferencing decoders and common
information (here, for short, we will refer to this channel as the
CM\ channel). The CM\ channel is characterized by $(\mathcal{X}_{1},$
$\mathcal{X}_{2},$ $p^{\ast}(y_{1},y_{2}|x_{1},x_{2}),$ $\mathcal{Y}_{1},$
$\mathcal{Y}_{2})$ with input alphabets $\mathcal{X}_{1},\mathcal{X}_{2}$ and
output alphabets $\mathcal{Y}_{1},\mathcal{Y}_{2}.$ The $i$th encoder,
$i=1,2$, is interested in sending a private message $W_{i}\in\mathcal{W}%
_{i}=\{1,2,...,2^{nR_{i}}\}$ of rate $R_{i}$ [bits/channel use] to both
receivers and, in addition, there is a common message $W_{0}\in\mathcal{W}%
_{0}=\{1,2,...,2^{nR_{0}}\}$ of rate $R_{0}$ to be delivered by both encoders
to both receivers. The channel is memoryless and time-invariant in that the
conditional distribution of the output symbols at any time $j=1,2,...,n$
satisfies%
\begin{equation}
p(y_{1,j},y_{2,j}|x_{1}^{n},x_{2}^{n},y_{1}^{j-1},y_{2}^{j-1},\bar{w}%
)=p^{\ast}(y_{1,j},y_{2,j}|x_{1,j},x_{2,j})
\end{equation}
with $\bar{w}=(w_{0},w_{1},w_{2})\in\mathcal{W}_{0}\times\mathcal{W}_{1}%
\times\mathcal{W}_{2}$ being a given triplet of messages. Notation-wise, we
employ standard conventions (see, e.g., \cite{Cover}), where the probability
distributions are defined by the arguments, upper-case letters represent
random variables and the corresponding lower-case letters represent
realizations of the random variables. The superscripts identify the number of
samples to be included in a given vector, e.g., $y_{1}^{j-1}=[y_{1,1}\cdots
y_{1,j-1}].$ It is finally noted that the channel defines the conditional
marginals $p(y_{1}|x_{1},x_{2})=\sum_{y_{2}\in\mathcal{Y}_{2}}p^{\ast}%
(y_{1},y_{2}|x_{1},x_{2})$ and similarly for $p(y_{2}|x_{1},x_{2}).$ Further
definitions are in order.

\textit{Definition 1}: A $((2^{nR_{0}},2^{nR_{1}},2^{nR_{2}}),n,K)$ code for
the CM\ channel consists of two encoding functions ($i=1,2$)%
\begin{equation}
f_{i}\text{: }\mathcal{W}_{0}\times\mathcal{W}_{i}\rightarrow\mathcal{X}%
_{i}^{n},\text{ }%
\end{equation}
a set of $2K$ ``conferencing'' functions and corresponding output alphabets
$\mathcal{V}_{i,k}$ ($k=1,2,...,K$):
\begin{subequations}
\label{conferencing functions}%
\begin{align}
g_{1,k}  &  \text{:}\text{ }\mathcal{Y}_{1}^{n}\times\mathcal{V}_{2,1}%
\times\cdots\times\mathcal{V}_{2,k-1}\rightarrow\mathcal{V}_{1,k}\\
g_{2,k}  &  \text{:}\text{ }\mathcal{Y}_{2}^{n}\times\mathcal{V}_{1,1}%
\times\cdots\times\mathcal{V}_{1,k-1}\rightarrow\mathcal{V}_{2,k},
\end{align}
and decoding functions:
\end{subequations}
\begin{subequations}
\label{decoding}%
\begin{align}
h_{1}  &  \text{:}\text{ }\mathcal{Y}_{1}^{n}\times\mathcal{V}_{2,1}%
\times\cdots\times\mathcal{V}_{2,K}\rightarrow\mathcal{W}_{0}\times
\mathcal{W}_{1}\\
h_{2}  &  \text{:}\text{ }\mathcal{Y}_{2}^{n}\times\mathcal{V}_{1,1}%
\times\cdots\times\mathcal{V}_{1,K}\rightarrow\mathcal{W}_{0}\times
\mathcal{W}_{2}.
\end{align}

Notice that the conferencing functions (\ref{conferencing functions})
prescribe $K$\ conferencing rounds between the decoders that start as soon as
the two decoders receive the entire block of $n$ output symbols $y_{1}^{n}$
and $y_{2}^{n}.$ Each conference round, say the $k$th, corresponds to a
simultaneous and bidirectional exchange of messages between the two decoders
taken from the alphabets $\mathcal{V}_{1,k}$ and $\mathcal{V}_{2,k}$,
similarly to \cite{Willems}, \cite{Liang Kramer}. It is noted that other works
have used slightly different definitions of conferencing rounds \cite{Draper
et al 2003}, \cite{Steiner}. After the $K$ conferencing rounds, the receivers
perform decoding with functions (\ref{decoding}) by capitalizing on the
exchanged conferencing messages. Due to the orthogonality between the main
channel and the conferencing links, the transmission from the users on one
channel and conferencing/ decoding on the other can take place simultaneously.%

\begin{figure}
\begin{center}
\includegraphics[width=4.0421in]%
{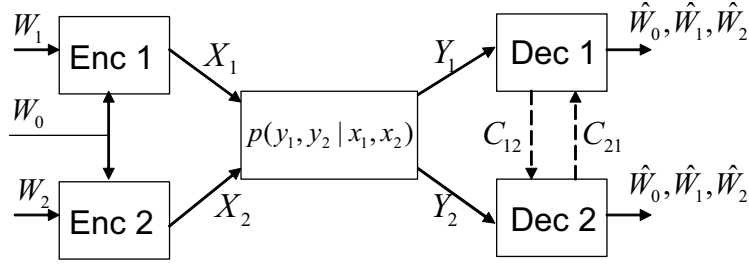}%
\caption{A discrete-memoryless compound MAC channel with conferencing decoders
and common information (for short, CM).}%
\label{model}%
\end{center}
\end{figure}

\textit{Definition 2}: A rate triplet ($R_{0},R_{1},R_{2}$) is said to be
\textit{achievable} for the CM\ channel with decoders connected by
conferencing links with capacities $(C_{12},C_{21})$ (see Fig. \ref{model}) if
for any $\varepsilon>0$ there exists, for all $n$ sufficiently large, a
$((2^{nR_{0}},2^{nR_{1}},2^{nR_{2}}),n,K)$ code with any $K\geq0$ such that
the probability of error satisfies%

\end{subequations}
\begin{equation}
P_{e}\triangleq\frac{1}{2^{n(R_{0}+R_{1}+R_{2})}}\sum_{\bar{w}\in
\mathcal{W}_{0}\times\mathcal{W}_{1}\times\mathcal{W}_{2}}\Pr\left[
\begin{array}
[c]{c}%
\{h_{1}(Y_{1}^{n},V_{2}^{k})\neq\bar{w}\}\cup\\
\{h_{2}(Y_{2}^{n},V_{1}^{k})\neq\bar{w}\}|\bar{w}\text{ sent}%
\end{array}
\right]  \leq\varepsilon\label{Pe1}%
\end{equation}
and the conferencing alphabets are such that
\begin{equation}%
{\displaystyle\sum\limits_{k=1}^{K}}
|\mathcal{V}_{1,k}|\leq nC_{12}\text{ and }%
{\displaystyle\sum\limits_{k=1}^{K}}
|\mathcal{V}_{2,k}|\leq nC_{21}. \label{C constraint}%
\end{equation}
The \textit{capacity region} $\mathcal{C}_{CM}(C_{12},C_{21})$ is the closure
of the set of all achievable rates ($R_{0},R_{1},R_{2}$).

\section{Preliminaries and an Outer bound}

Similarly to \cite{Maric Yates Kramer 2007}, it is useful to define the rate
region $\mathcal{R}_{MAC,i}(p(u),p(x_{1}|u),p(x_{2}|u))$ for the MAC\ seen at
the $i$th receiver ($i=1,2$) as the set of rates
\begin{subequations}
\begin{align}
\mathcal{R}_{MAC,i}\big(p(u),p(x_{1}|u),p(x_{2}|u)\big) = \bigg\{(R_{0}%
,R_{1},R_{2})  &  \text{:}R_{0}\geq0,\text{ }R_{1}\geq0,\text{ } R_{2}%
\geq0,\nonumber\\
R_{1}  &  \leq I(X_{1};Y_{i}|X_{2}U),\text{ }\\
R_{2}  &  \leq I(X_{2};Y_{i}|X_{1}U),\\
R_{1}+R_{2}  &  \leq I(X_{1}X_{2};Y_{i}|U),\\
R_{0}+R_{1}+R_{2}  &  \leq I(X_{1}X_{2};Y_{i})\bigg\},
\end{align}
where the joint distributions of the involved variables is given by
\end{subequations}
\[
p(u)p(x_{1}|u)p(x_{2}|u)p(y_{i}|x_{1},x_{2}).
\]
If $C_{12}=C_{21}=0,$ the capacity region $\mathcal{C}_{CM}(0,0)$ is given by
\cite{Maric Yates Kramer 2007}:
\begin{subequations}
\label{Ccm00}%
\begin{align}
\mathcal{C}_{CM}(0,0)  &  =%
{\displaystyle\bigcup}
\left\{  \underset{i=1,2}{%
{\displaystyle\bigcap}
}\mathcal{R}_{MAC,i}(p(u),p(x_{1}|u),p(x_{2}|u))\right\} \\
&  =%
{\displaystyle\bigcup}
\bigg\{(R_{0},R_{1},R_{2})\text{:}\text{ }R_{0}\geq0,\text{ }R_{1}\geq0,\text{
}R_{2}\geq0,\\
R_{1}  &  \leq\min\{I(X_{1};Y_{1}|X_{2}U),\text{ }I(X_{1};Y_{2}|X_{2}U)\},\\
R_{2}  &  \leq\min\{I(X_{2};Y_{1}|X_{1}U),\text{ }I(X_{2};Y_{2}|X_{1}U)\},\\
R_{1}+R_{2}  &  \leq\min\{I(X_{1}X_{2};Y_{1}|U),\text{ }I(X_{1}X_{2}%
;Y_{2}|U)\},\\
R_{0}+R_{1}+R_{2}  &  \leq\min\{I(X_{1}X_{2};Y_{1}),\text{ }I(X_{1}X_{2}%
;Y_{2})\}\bigg\},
\end{align}
where the union is taken over all joint distributions of the form
\end{subequations}
\[
p(u)p(x_{1}|u)p(x_{2}|u)p^{\ast}(y_{1},y_{2}|x_{1},x_{2}).
\]
It is remarked that no convex hull operation is necessary in (\ref{Ccm00}) as
the region $\mathcal{C}_{CM}(0,0)$ is convex \cite{Maric Yates Kramer 2007}
(see also \cite{Willems}, Appendix \ref{app:1}).

We now derive an outer bound on the capacity region $\mathcal{C}_{CM}%
(C_{12},C_{21})$. To this end, it is useful to define the capacity region
achievable when the two receivers are allowed to fully cooperate (FC), thus
forming a two-antenna receiver. In this case, we have
\begin{subequations}
\begin{align}
\mathcal{R}_{MAC,FC}\big(p(u),p(x_{1}|u),p(x_{2}|u)\big) = \bigg\{(R_{0}%
,R_{1},R_{2})  &  \text{:}R_{0}\geq0,\text{ }R_{1}\geq0,\text{ }R_{2}\geq0,\\
R_{1}  &  \leq I(X_{1};Y_{1}Y_{2}|X_{2}U),\text{ }\\
R_{2}  &  \leq I(X_{2};Y_{1}Y_{2}|X_{1}U),\\
R_{1}+R_{2}  &  \leq I(X_{1}X_{2};Y_{1}Y_{2}|U),\\
R_{0}+R_{1}+R_{2}  &  \leq I(X_{1}X_{2};Y_{1}Y_{2})\bigg\},
\end{align}
where the joint distributions of the involved variables is given by
\end{subequations}
\begin{equation}
p(u)p(x_{1}|u)p(x_{2}|u)p^{\ast}(y_{1},y_{2}|x_{1},x_{2}). \label{pupxpy}%
\end{equation}

\begin{prop}
\label{prop1} We have $\mathcal{C}_{CM}(C_{12},C_{21})\subseteq\mathcal{C}%
_{CM-out}(C_{12},C_{21})$ where (dropping the dependence on $p(u),p(x_{1}%
|u),p(x_{2}|u)$ to simplify the notation)
\begin{subequations}
\label{outer bound}%
\begin{align}
\mathcal{C}_{CM-out}(C_{12},C_{21})  &  =%
{\displaystyle\bigcup}
\big\{ (\mathcal{R}_{MAC,1}+C_{12})\cap(\mathcal{R}_{MAC,2}+C_{21}%
)\cap(\mathcal{R}_{MAC,FC})\big\}\\
&  =%
{\displaystyle\bigcup}
\bigg\{(R_{0},R_{1},R_{2})\text{:}\text{ }R_{0}\geq0,\text{ }R_{1}\geq0,\text{
}R_{2}\geq0,\\
&  R_{1} \leq\min\{I(X_{1};Y_{1}|X_{2}U)+C_{21},\text{ }I(X_{1};Y_{2}%
|X_{2}U)+C_{12},\text{ }\nonumber\\
&  ~~~~~~~ I(X_{1};Y_{1}Y_{2}|X_{2}U)\},\\
&  R_{2} \leq\min\{I(X_{2};Y_{1}|X_{1}U)+C_{21},\text{ }I(X_{2};Y_{2}%
|X_{1}U)+C_{12},\text{ }\nonumber\\
&  ~~~~~~~ I(X_{2};Y_{1}Y_{2}|X_{1}U)\},\\
&  R_{1}+R_{2} \leq\min\{I(X_{1}X_{2};Y_{1}|U)+C_{21},\text{ }I(X_{1}%
X_{2};Y_{2}|U)+C_{12},\text{ }\label{last-conv}\\
&  ~~~~~~~ I(X_{1}X_{2};Y_{1}Y_{2}|U)\},\\
&  R_{0}+R_{1}+R_{2} \leq\min\{I(X_{1}X_{2};Y_{1})+C_{21},\text{ }I(X_{1}%
X_{2};Y_{2})+C_{12},\nonumber\\
&  ~~~~~~~ I(X_{1}X_{2};Y_{1}Y_{2})\}\bigg\},
\end{align}
in which the union is taken over all the joint distributions that factorize as
(\ref{pupxpy}).
\end{subequations}
\end{prop}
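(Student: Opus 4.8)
The plan is to establish the outer bound by a standard Fano-plus-genie converse, treating the finite conferencing links as side information whose total entropy is bounded by the link capacities. Write $V_1^K=(V_{1,1},\dots,V_{1,K})$ and $V_2^K=(V_{2,1},\dots,V_{2,K})$ for the aggregate conferencing messages. Two structural facts drive everything. First, by the definition of the conferencing functions in (\ref{conferencing functions}), every $V_{i,k}$ is a deterministic function of $(Y_1^n,Y_2^n)$, so that $(Y_1^n,V_2^K)$ and $(Y_2^n,V_1^K)$ are both deterministic functions of $(Y_1^n,Y_2^n)$. Second, the rate constraint (\ref{C constraint}) gives $H(V_1^K)\le nC_{12}$ and $H(V_2^K)\le nC_{21}$. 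Since each decoder reconstructs the full triple $\bar W=(W_0,W_1,W_2)$, Fano's inequality yields $H(\bar W\mid Y_1^nV_2^K)\le n\epsilon_n$ and $H(\bar W\mid Y_2^nV_1^K)\le n\epsilon_n$ with $\epsilon_n\to0$.

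I would then derive each of the three families of constraints separately. For the single-receiver bounds carrying a $+C$ term, consider e.g. $R_1$ at decoder $1$. Using independence of the messages and Fano, $nR_1=H(W_1\mid W_0W_2)\le I(W_1;Y_1^nV_2^K\mid W_0W_2)+n\epsilon_n$. The crucial step is the chain rule $I(W_1;Y_1^nV_2^K\mid W_0W_2)=I(W_1;Y_1^n\mid W_0W_2)+I(W_1;V_2^K\mid Y_1^nW_0W_2)$, where the second term is at most $H(V_2^K)\le nC_{21}$ and the first is a pure MAC expression to be single-letterized below. The companion bound $R_1\le I(X_1;Y_2\mid X_2U)+C_{12}$ follows identically at decoder $2$ by splitting off $V_1^K$, bounded by $nC_{12}$, and the sum-rate and total-rate versions follow by replacing $W_1$ with $(W_1,W_2)$ conditioned on $W_0$, and with $\bar W$ with no conditioning, respectively. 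For the full-cooperation bounds I would instead invoke data processing: since $(Y_1^n,V_2^K)$ is a function of $(Y_1^n,Y_2^n)$, one has $I(W_1;Y_1^nV_2^K\mid W_0W_2)\le I(W_1;Y_1^nY_2^n\mid W_0W_2)$, which single-letterizes to $I(X_1;Y_1Y_2\mid X_2U)$; the remaining FC constraints are obtained the same way.

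The single-letterization of a term such as $I(W_1;Y_1^n\mid W_0W_2)$ is where the only real subtlety arises, namely producing auxiliary inputs satisfying the required product form $p(u)p(x_1\mid u)p(x_2\mid u)$. I would take $U_j=W_0$. Writing $I(W_1;Y_1^n\mid W_0W_2)=H(Y_1^n\mid W_0W_2)-H(Y_1^n\mid \bar W)$, the negative term equals $\sum_j H(Y_{1,j}\mid X_{1,j}X_{2,j})$ by the memoryless property, while the positive term is bounded by $\sum_j H(Y_{1,j}\mid W_0X_{2,j})$ after dropping the past outputs $Y_1^{j-1}$ (conditioning reduces entropy) and noting that $X_{2,j}$ is a function of $(W_0,W_2)$. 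This yields $\sum_j I(X_{1,j};Y_{1,j}\mid X_{2,j}W_0)$. A time-sharing variable $Q\sim\mathrm{Unif}\{1,\dots,n\}$, with $U=(W_0,Q)$, $X_1=X_{1,Q}$ and $X_2=X_{2,Q}$, then gives the single-letter bound $I(X_1;Y_1\mid X_2U)$, and letting $n\to\infty$ removes $\epsilon_n$.

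The step I expect to be the main obstacle is exactly this choice of auxiliary. The naive auxiliary $U_j=(W_0,Y_1^{j-1})$ arising from a direct chain-rule expansion does not satisfy $X_{1,j}\perp X_{2,j}\mid U_j$, because conditioning on the past outputs $Y_1^{j-1}$ couples $W_1$ and $W_2$. The fix is to refrain from carrying the output history into the auxiliary and instead to upper bound the leading entropy term by dropping $Y_1^{j-1}$; with $U_j=W_0$, the conditional independence $X_{1,j}\perp X_{2,j}\mid W_0$ holds because $X_{1,j}$ and $X_{2,j}$ are functions of $(W_0,W_1)$ and $(W_0,W_2)$ respectively and $W_1\perp W_2\mid W_0$. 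Carrying this product structure through the time-sharing step then delivers the union over $p(u)p(x_1\mid u)p(x_2\mid u)$ in (\ref{outer bound}).
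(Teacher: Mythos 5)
Your proposal is correct and takes essentially the same route as the paper's own proof: Fano's inequality at each decoder, splitting off the conferencing messages via the chain rule and bounding their contribution by $H(V_2^K)\leq nC_{21}$ (resp.\ $H(V_1^K)\leq nC_{12}$), obtaining the full-cooperation constraints from the fact that the conferencing messages are deterministic functions of $(Y_1^n,Y_2^n)$, and single-letterizing with the auxiliary $U=(W_0,Q)$, whose validity rests on $W_1\perp W_2\mid W_0$. The only difference is one of detail: you write out the Willems-style single-letterization (including why the output history must be dropped rather than absorbed into the auxiliary), whereas the paper delegates exactly this step to \cite{Willems} and \cite{Maric Yates Kramer 2007}.
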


Similarly to (\ref{Ccm00}), region (\ref{outer bound}) can be proven to be
convex following \cite{Willems}.

\begin{proof}
See Appendix \ref{app:1}.
\end{proof}

\section{Capacity Region with Physically Degraded Channels and Unidirectional
Cooperation}

\label{sec_capacity_1}

The next proposition establishes the capacity region $\mathcal{C}%
_{CM-DEG}(C_{12},C_{21})$ in the case of physically degraded outputs.

\begin{prop}
\label{prop2} If the CM\ channel is physically degraded in the sense that
$(X_{1}X_{2})-Y_{1}-Y_{2}$ forms a Markov chain, then the capacity region is
obtained as
\begin{subequations}
\begin{align}
\mathcal{C}_{CM-DEG}(C_{12},C_{21})  &  =\mathcal{C}_{CM-out}(C_{12},0)=\\
&  =%
{\displaystyle\bigcup}
\{(R_{0},R_{1},R_{2})\text{:}\text{ }R_{0}\geq0,\text{ }R_{1}\geq0,\text{
}R_{2}\geq0,\\
R_{1}  &  \leq\min\{I(X_{1};Y_{1}|X_{2}U),\text{ }I(X_{1};Y_{2}|X_{2}%
U)+C_{12}\},\\
R_{2}  &  \leq\min\{I(X_{2};Y_{1}|X_{1}U),\text{ }I(X_{2};Y_{2}|X_{1}%
U)+C_{12}\},\\
R_{1}+R_{2}  &  \leq\min\{I(X_{1}X_{2};Y_{1}|U),\text{ }I(X_{1}X_{2}%
;Y_{2}|U)+C_{12}\},\\
R_{0}+R_{1}+R_{2}  &  \leq\min\{I(X_{1}X_{2};Y_{1}),\text{ }I(X_{1}X_{2}%
;Y_{2})+C_{12}\}\}.
\end{align}
Notice that here $p^{\ast}(y_{1}y_{2}|x_{1},x_{2})=p(y_{1}|x_{1},x_{2}%
)p(y_{2}|y_{1})$ due to degradedness.
\end{subequations}
\end{prop}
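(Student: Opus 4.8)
The plan is to prove the two inclusions separately. The converse will follow almost immediately from Proposition \ref{prop1} once I show that degradedness makes the outer bound independent of $C_{21}$, while the achievability will require a single-round binning scheme on the conferencing link running from the strong decoder to the degraded one.

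For the converse, I would first note that the degradedness hypothesis $(X_1X_2)-Y_1-Y_2$, combined with the fact that under the factorization (\ref{pupxpy}) the channel law $p^\ast(y_1,y_2|x_1,x_2)$ does not depend on $U$, yields the longer Markov chain $(U,X_1,X_2)-Y_1-Y_2$, hence $p(y_2|y_1,x_1,x_2,u)=p(y_2|y_1)$. This collapses every full-cooperation mutual information appearing in (\ref{outer bound}) to its $Y_1$ counterpart, e.g.
\begin{equation}
I(X_1;Y_1Y_2|X_2U)=I(X_1;Y_1|X_2U)+I(X_1;Y_2|Y_1X_2U)=I(X_1;Y_1|X_2U),
\end{equation}
and likewise $I(X_2;Y_1Y_2|X_1U)=I(X_2;Y_1|X_1U)$, $I(X_1X_2;Y_1Y_2|U)=I(X_1X_2;Y_1|U)$ and $I(X_1X_2;Y_1Y_2)=I(X_1X_2;Y_1)$. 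Substituting these into (\ref{outer bound}), in each line the $\mathcal{R}_{MAC,FC}$ term coincides with the plain $Y_1$ term; since $C_{21}\ge 0$ the enhanced term $I(\cdot;Y_1|\cdot)+C_{21}$ is then never active, so each minimum reduces to $\min\{I(\cdot;Y_1|\cdot),\,I(\cdot;Y_2|\cdot)+C_{12}\}$ regardless of $C_{21}$. Therefore $\mathcal{C}_{CM-out}(C_{12},C_{21})=\mathcal{C}_{CM-out}(C_{12},0)$, and Proposition \ref{prop1} immediately gives $\mathcal{C}_{CM}(C_{12},C_{21})\subseteq\mathcal{C}_{CM-out}(C_{12},0)$, which matches the stated region.

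For achievability I would use the standard compound-MAC-with-common-message superposition codebook: draw $U^n(w_0)$ i.i.d.\ from $p(u)$ and conditionally independent codewords $X_1^n(w_0,w_1)$, $X_2^n(w_0,w_2)$ from $p(x_1|u)$, $p(x_2|u)$. The strong receiver decodes the full triplet $(w_0,w_1,w_2)$ from $Y_1^n$ alone by joint typicality, which succeeds whenever $(R_0,R_1,R_2)$ is interior to $\mathcal{R}_{MAC,1}$. A single conferencing round in the $1\to 2$ direction then suffices: fix a random binning $\beta:\mathcal{W}_0\times\mathcal{W}_1\times\mathcal{W}_2\to\{1,\dots,2^{nC_{12}}\}$, let decoder $1$ forward the index $\beta(\hat w_0,\hat w_1,\hat w_2)$ over the link (respecting (\ref{C constraint})), and have the degraded receiver search for the unique triplet inside the announced bin that is jointly typical with $Y_2^n$. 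Because $Y_2$ is degraded the reverse link carries nothing useful, consistent with the converse, so taking $K=1$ and using only $C_{12}$ is enough.

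The main work, and the step I expect to be most delicate, is the error analysis at receiver $2$. I would bound the four competing error types at decoder $2$ --- wrong $w_1$ only, wrong $w_2$ only, wrong $(w_1,w_2)$, and wrong $(w_0,w_1,w_2)$ --- each restricted to the announced bin. Since independent binning places any fixed competing triplet in the correct bin with probability $2^{-nC_{12}}$, the effective number of competitors of each type is scaled by $2^{-nC_{12}}$; combining this with the joint-typicality probabilities $2^{-nI(X_1;Y_2|X_2U)}$, $2^{-nI(X_2;Y_2|X_1U)}$, $2^{-nI(X_1X_2;Y_2|U)}$ and $2^{-nI(X_1X_2;Y_2)}$ reproduces exactly the four inequalities of $\mathcal{R}_{MAC,2}+C_{12}$. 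A union bound over this event, the decoder-$1$ error event, and the event that the forwarded bin index is incorrect (which is contained in the decoder-$1$ error) then drives $P_e\to 0$ for any rate triple interior to $\mathcal{R}_{MAC,1}\cap(\mathcal{R}_{MAC,2}+C_{12})$. The one point requiring care is confirming that global binning of the entire triplet simultaneously boosts all four MAC inequalities by $C_{12}$, and not merely the sum-rate bound; a counting argument conditioned on which of $w_0,w_1,w_2$ are held fixed shows that the reduction factor $2^{-nC_{12}}$ applies uniformly to each sub-class of competitors, which closes the gap with the converse.
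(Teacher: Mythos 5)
Your proof is correct, but both halves take routes that differ from the paper's. For the converse, the paper re-runs the $n$-letter Fano argument of Proposition \ref{prop1}, observing that the Markov chain $(W_0,W_1,W_2)-Y_1^n-V_2^K$ makes the conference messages $V_2^K$ useless to receiver 1, so that no $+C_{21}$ term ever enters receiver 1's bounds; you instead take the outer bound (\ref{outer bound}) as a black box and collapse it with the single-letter identities $I(\,\cdot\,;Y_1Y_2|\,\cdot\,)=I(\,\cdot\,;Y_1|\,\cdot\,)$, which is legitimate precisely because the full-cooperation constraints $\mathcal{R}_{MAC,FC}$ are included in (\ref{outer bound}); the two arguments are equivalent in content, yours being somewhat more modular. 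The more substantive difference is in achievability. The paper partitions each message set $\mathcal{W}_0,\mathcal{W}_1,\mathcal{W}_2$ \emph{separately} into $2^{n\alpha_i C_{12}}$ bins with $\sum_{i}\alpha_i=1$, obtains the $\alpha$-dependent region (\ref{alphas}) at receiver 2, and then unions over the splits $(\alpha_0,\alpha_1,\alpha_2)$; seeing that this union actually fills out $\mathcal{R}_{MAC,1}\cap(\mathcal{R}_{MAC,2}+C_{12})$ requires a small covering argument (feasible because $I(X_1X_2;Y_2|U)\le I(X_1;Y_2|X_2U)+I(X_2;Y_2|X_1U)$ for inputs conditionally independent given $U$, so the required excesses never sum to more than $C_{12}$), a step the paper leaves implicit. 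Your scheme instead bins the product set $\mathcal{W}_0\times\mathcal{W}_1\times\mathcal{W}_2$ \emph{jointly} into $2^{nC_{12}}$ bins, so every competing triplet --- whichever coordinates it shares with the true one --- passes the bin test with probability $2^{-nC_{12}}$, and all four typicality constraints at receiver 2 are boosted by the full $C_{12}$ in one shot. This eliminates the rate-splitting parameters and the covering step at no cost; the caveat you flag at the end, namely verifying that the $2^{-nC_{12}}$ reduction applies uniformly to each sub-class of competitors, is indeed the crux of that scheme, and your counting argument for it is sound.
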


\begin{proof}
See Appendix \ref{app:2}.
\end{proof}

\begin{rem}
A symmetric result clearly holds for the physically degraded channel
$(X_{1}X_{2})-Y_{2}-Y_{1}.$
\end{rem}

Establishment of the capacity region is also possible in the special case
where only unidirectional cooperation is allowed, that is $C_{12}=0$ or
$C_{21}=0.$ This result is akin to \cite{Lasaulce} where a broadcast channel
with two receiver under unidirectional cooperation was considered.

\begin{prop}
\label{prop3} In the case of unidirectional cooperation ($C_{12}=0$ or
$C_{21}=0$), the capacity region of the CM channel is given by
\begin{align}
\mathcal{C}_{CM}(0,C_{21})  &  =\mathcal{C}_{CM-out}(0,C_{21})
\end{align}
or
\begin{align}
\mathcal{C}_{CM}(C_{12},0)  &  =\mathcal{C}_{CM-out}(C_{12},0).
\end{align}

\end{prop}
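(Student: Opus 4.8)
The plan is to prove both identities by combining the converse already supplied by Proposition~\ref{prop1} (which gives ``$\subseteq$'' for any $(C_{12},C_{21})$) with a matching achievability argument; I treat $\mathcal{C}_{CM}(0,C_{21})$ in detail, the case $C_{21}=0$ following by relabeling the two receivers. First I would simplify the outer bound in the unidirectional regime. Setting $C_{12}=0$ in (\ref{outer bound}), every full-cooperation term $I(\,\cdot\,;Y_{1}Y_{2}|\,\cdot\,)$ is lower bounded by the corresponding receiver-$2$ term $I(\,\cdot\,;Y_{2}|\,\cdot\,)$, since adjoining the observation $Y_{1}$ cannot decrease mutual information; hence each such term is redundant inside its own $\min$. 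Consequently, for every fixed $p(u)p(x_{1}|u)p(x_{2}|u)$,
\[
\mathcal{C}_{CM-out}(0,C_{21})=\bigcup\big\{\mathcal{R}_{MAC,2}\cap(\mathcal{R}_{MAC,1}+C_{21})\big\},
\]
and it suffices to show that every interior point of this region is achievable.

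For achievability I would use superposition coding together with a single conferencing round ($K=1$) in which receiver $2$ decodes and forwards a hash of the messages to receiver $1$. Fix an input distribution and generate $U^{n}(w_{0})\sim\prod p(u_{j})$ together with conditionally independent codewords $X_{1}^{n}(w_{0},w_{1})$ and $X_{2}^{n}(w_{0},w_{2})$ drawn according to $\prod p(x_{1,j}|u_{j})$ and $\prod p(x_{2,j}|u_{j})$. In addition, draw a uniform random binning map $\phi:\mathcal{W}_{0}\times\mathcal{W}_{1}\times\mathcal{W}_{2}\rightarrow\{1,\dots,2^{nC_{21}}\}$, revealed to both receivers, and set $\mathcal{V}_{2,1}=\{1,\dots,2^{nC_{21}}\}$ (so the link to receiver $1$ carries at most $nC_{21}$ bits, meeting (\ref{C constraint})) while $\mathcal{V}_{1,1}$ is trivial since $C_{12}=0$. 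Receiver $2$ performs joint-typicality decoding of $(w_{0},w_{1},w_{2})$ from $Y_{2}^{n}$ alone, which succeeds with vanishing error probability whenever the rate triple lies in $\mathcal{R}_{MAC,2}$, and transmits $\phi(\hat{w}_{0},\hat{w}_{1},\hat{w}_{2})$ to receiver $1$.

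Receiver $1$ then looks for the unique triple that is jointly typical with $Y_{1}^{n}$ and consistent with the announced bin index. The key step is the error analysis at receiver $1$: for each of the four dominant error events (the ones producing the individual, sum, and total rate constraints) a competing triple must be simultaneously jointly typical with $Y_{1}^{n}$ and fall in the announced bin, and the random binning makes the ``same-bin'' event independent of the typicality event, contributing a factor $2^{-nC_{21}}$. This adds exactly $C_{21}$ to each receiver-$1$ rate constraint, so decoding succeeds with vanishing error probability whenever the rate triple lies in $\mathcal{R}_{MAC,1}+C_{21}$. A union bound over the receiver-$2$ error event and the receiver-$1$ error event (conditioned on a correct hash) then gives $P_{e}\rightarrow 0$, so $\mathcal{R}_{MAC,2}\cap(\mathcal{R}_{MAC,1}+C_{21})$ is achievable for every input distribution; taking the union over distributions and the closure recovers $\mathcal{C}_{CM-out}(0,C_{21})$, and the case $\mathcal{C}_{CM}(C_{12},0)$ follows by symmetry.

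I expect the main obstacle to be the joint error analysis at receiver $1$, since it rests on a bin index produced by receiver $2$ and must yield the clean additive $+C_{21}$ in all four constraints at once; care is needed to condition on receiver $2$ decoding correctly (so the forwarded hash is reliable) and to verify the independence of the same-bin and typicality events for each class of wrong triples. The preliminary observation that the full-cooperation terms drop out in the unidirectional case is precisely what guarantees that this elementary decode-and-bin strategy, rather than a more elaborate compress-and-forward, already attains the outer bound.
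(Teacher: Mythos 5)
Your proposal is correct and takes essentially the same route as the paper: the converse is inherited from Proposition~\ref{prop1} (with the full-cooperation terms redundant once $C_{12}=0$), and achievability is the same one-round decode-and-bin scheme the paper uses to prove Proposition~\ref{prop2} --- the unaided receiver decodes all three messages from its own output via joint typicality and forwards a binned description of them, and the aided receiver decodes using bin consistency, which yields the $+C_{21}$ additions at that receiver. The only difference is cosmetic but worth noting: you hash the triple $(w_{0},w_{1},w_{2})$ jointly into $2^{nC_{21}}$ bins, which puts the full $+C_{21}$ on all four receiver-1 constraints at once, whereas the paper partitions $\mathcal{W}_{0},\mathcal{W}_{1},\mathcal{W}_{2}$ separately with rate fractions $\alpha_{i}$ and must then take a union over $(\alpha_{0},\alpha_{1},\alpha_{2})$ (together with the implicit polymatroid feasibility argument showing this union fills the outer bound); your joint hashing sidesteps that step.
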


\begin{proof}
Achievability follows by using the same scheme as in the proof of Proposition
\ref{prop2}. The converse is immediate.
\end{proof}

\section{General Achievable Rates\label{sec_achievability}}

Achievable rates can be derived for the general CM\ channel, extending the
analysis of \cite{Dabora Servetto 2006} from the broadcast setting with one
transmitter to the CM channel. Notice that \cite{Dabora Servetto 2006} uses a
different definition for the operation over the conferencing channels but this
turns out to be immaterial for the achievable rates discussed below.

\begin{prop}
\label{prop4} The following region is achievable with one-round
conferencing, i.e., $K=1$:
\begin{subequations}
\label{OS}%
\begin{align}
\mathcal{R}_{OR}(C_{12},C_{21})  &  =%
{\displaystyle\bigcup}
\{(R_{0},R_{1},R_{2})\text{:}\text{ }R_{0}\geq0,\text{ }R_{1}\geq0,\text{
}R_{2}\geq0,\\
R_{1}  &  \leq\min\{I(X_{1};Y_{1}\hat{Y}_{2}|X_{2}U),\text{ }I(X_{1};Y_{2}%
\hat{Y}_{1}|X_{2}U)\},\\
R_{2}  &  \leq\min\{I(X_{2};Y_{1}\hat{Y}_{2}|X_{1}U),\text{ }I(X_{2};Y_{2}%
\hat{Y}_{1}|X_{1}U)\},\\
R_{1}+R_{2}  &  \leq\min\{I(X_{1}X_{2};Y_{1}\hat{Y}_{2}|U),\text{ }%
I(X_{1}X_{2};Y_{2}\hat{Y}_{1}|U)\}\\
R_{0}+R_{1}+R_{2}  &  \leq\min\{I(X_{1}X_{2};Y_{1}\hat{Y}_{2}),\text{ }%
I(X_{1}X_{2};Y_{2}\hat{Y}_{1})\}\}
\end{align}
subject to
\end{subequations}
\begin{subequations}
\label{cap cond}%
\begin{align}
C_{12}  &  \geq I(Y_{1};\hat{Y}_{1}|Y_{2})\\
C_{21}  &  \geq I(Y_{2};\hat{Y}_{2}|Y_{1})
\end{align}
with $|\widehat{\mathcal{Y}}_{i}|\leq|\mathcal{Y}_{i}|+1$, and the union is
taken over all the joint distributions that factorize as
\end{subequations}
\[
p(u)p(x_{1}|u)p(x_{2}|u)p^{\ast}(y_{1},y_{2}|x_{1},x_{2})p(\hat{y}_{1}%
|y_{1})p(\hat{y}_{2}|y_{2}).
\]

\end{prop}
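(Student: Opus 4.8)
The plan is to combine a Wyner--Ziv (compress-and-forward) strategy on the conferencing links with a standard superposition MAC code at the encoders, exactly in the spirit of \cite{Dabora Servetto 2006}. The common message $W_0$ is carried by a cloud-center codeword $u^n(w_0)$, and the two private messages by satellite codewords $x_1^n(w_0,w_1)$ and $x_2^n(w_0,w_2)$ drawn according to $p(x_1|u)$ and $p(x_2|u)$. On the receiver side, each decoder quantizes its own output and forwards a binned description to the other decoder, which reconstructs it using its own output as side information. After this single round, decoder $1$ holds $(Y_1^n,\hat{Y}_2^n)$ and decoder $2$ holds $(Y_2^n,\hat{Y}_1^n)$, and each runs a joint-typicality MAC decoder on its augmented output. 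Requiring both decoders to succeed produces the pairwise-minimum structure in \eqref{OS}.

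Concretely, the first step is codebook generation: draw $u^n(w_0)\sim\prod_j p(u_j)$, then $x_1^n(w_0,w_1)$ and $x_2^n(w_0,w_2)$ conditioned on $u^n(w_0)$; independently draw $2^{n\hat{R}_1}$ reconstruction codewords $\hat{y}_1^n(s_1)\sim\prod_j p(\hat{y}_{1,j})$ and distribute their indices uniformly into $2^{nC_{12}}$ bins, and symmetrically draw $2^{n\hat{R}_2}$ codewords $\hat{y}_2^n(s_2)$ binned into $2^{nC_{21}}$ bins. The second step is the conferencing operation: decoder $1$ finds an index $s_1$ with $(y_1^n,\hat{y}_1^n(s_1))$ jointly typical --- which succeeds with high probability whenever $\hat{R}_1 > I(Y_1;\hat{Y}_1)$ by the covering lemma --- and sends the bin index of $s_1$ to decoder $2$ over the link $C_{12}$, with decoder $2$ performing the symmetric operation over $C_{21}$. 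The third step is Wyner--Ziv reconstruction: decoder $2$, knowing the bin of $s_1$ and using $Y_2^n$ as side information, finds the unique index in that bin with $(y_2^n,\hat{y}_1^n(\cdot))$ jointly typical; this is correct with high probability provided $\hat{R}_1 - C_{12} < I(\hat{Y}_1;Y_2)$, which combined with the covering requirement and the Markov relation $\hat{Y}_1-Y_1-Y_2$ yields exactly $C_{12}\ge I(Y_1;\hat{Y}_1|Y_2)$, and the symmetric chain yields $C_{21}\ge I(Y_2;\hat{Y}_2|Y_1)$. The final step is MAC decoding on the augmented outputs together with a union-bound error analysis; the usual joint-typicality counting gives the four rate inequalities at decoder $1$ with output $(Y_1\hat{Y}_2)$ and at decoder $2$ with output $(Y_2\hat{Y}_1)$, and intersecting the two regions produces the $\min\{\cdot,\cdot\}$ in each constraint of \eqref{OS}.

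The main obstacle I anticipate is the error analysis for the \emph{coupled} quantization-and-decoding events rather than either piece in isolation: one must first establish that the description index recovered at the far decoder coincides with the one actually sent (so that the MAC step may legitimately treat $\hat{Y}_{i}^n$ as a known, correct side output), and then show that the joint distribution induced by the whole scheme is exactly $p(u)p(x_1|u)p(x_2|u)p^{\ast}(y_1,y_2|x_1,x_2)p(\hat{y}_1|y_1)p(\hat{y}_2|y_2)$, so that the empirical typicality used in MAC decoding is consistent with the claimed mutual-information expressions. In particular, the Markov structure $\hat{Y}_1-Y_1-(U,X_1,X_2,Y_2)$ and its counterpart $\hat{Y}_2-Y_2-(U,X_1,X_2,Y_1)$ must be invoked to argue that the augmented output behaves as a genuine memoryless channel output given the inputs. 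Finally, the cardinality bounds $|\widehat{\mathcal{Y}}_i|\le|\mathcal{Y}_i|+1$ follow from a standard support-lemma (Carath\'eodory) argument preserving the mutual-information terms that define both the region boundaries and the conferencing-rate constraints.
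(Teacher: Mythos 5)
Your proposal is correct and follows essentially the same route as the paper's own (sketched) proof: superposition encoding for the MAC with common message, Wyner--Ziv compression of each received signal with the other output as side information, exchange of the binned compression indices in the single conferencing round, and joint-typicality MAC decoding at each receiver on its augmented output $(Y_i^n,\hat{Y}_j^n)$, with the intersection of the two MAC regions yielding the $\min\{\cdot,\cdot\}$ structure and the Wyner--Ziv covering/binning analysis yielding the constraints $C_{12}\geq I(Y_1;\hat{Y}_1|Y_2)$ and $C_{21}\geq I(Y_2;\hat{Y}_2|Y_1)$. The technical points you flag (the Markov-lemma argument ensuring the compressed sequence is jointly typical with the codewords, and the correctness of the recovered description index before MAC decoding) are precisely what the paper defers to Theorem 3 of the Dabora--Servetto reference.
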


\begin{proof}
(Sketch): The proof is similar to that of Theorem 3 in \cite{Dabora Servetto
2006} and is thus only sketched here. A one-step conference ($K=1$) is used.
Encoding and transmission are performed as for a MAC with common information
(see proof of Proposition \ref{prop2}). Each receiver compresses its received
signal using Wyner-Ziv compression exploiting the fact that the other receiver
has a correlated observation as well. The compression indices are exchanged
during the one conferencing round via symbols $\mathcal{V}_{1,1}$ and
$\mathcal{V}_{2,1}.$ Decoding is then carried out at each receiver using joint
typicality: For instance, receiver 1 looks for jointly typical sequences
$(u^{n}(w_{0}),x_{1}^{n}(w_{0},w_{1}),x_{2}^{n}(w_{0},w_{2}),y_{1}^{n},\hat
{y}_{2}^{n})$ with $w_{i}\in\mathcal{W}_{i},$ where $\hat{y}_{2}^{n}$ is the
compressed sequence received by the second decoder.
\end{proof}

The achievable strategy of Proposition \ref{prop4} is based on $K=1$ round of
conferencing. It is easy to construct examples where such a strategy fails to
achieve the outer bound (\ref{outer bound}) as discussed in the example below.

\textit{Example 1}. Consider a symmetric scenario with $R_{0}=0 $ and
equal private rates $R_{1}=R_{2}=R$ (i.e., $p^{\ast}(y_{1}%
,y_{2}|x_{1},x_{2})$ $=p^{\ast}(y_{2},y_{1}|x_{1},x_{2})$ $=p^{\ast}(y_{1}%
,y_{2}|x_{2},x_{1})$ $=p^{\ast}(y_{2},y_{1}|x_{2},x_{1})$). Fix $U$ to a constant without loss of
generality (given the absence of a common message) and the input distribution
to $p(x_{1})p(x_{2}).$ We are interested in finding the maximum achievable
equal rate $R_{1}=R_{2}=R.$ Assume that the conferencing capacities satisfy
$C_{12}=H(Y_{1}|Y_{2})=H(Y_{2}|Y_{1})$ and $1/2\cdot I(X_{1}X_{2};Y_{2}%
|Y_{1})\leq C_{21}<H(Y_{1}|Y_{2}).$ In this case, it can be seen that the
maximum equal rate is upper bounded as $R\leq R_{out}=1/2\cdot I(X_{1}%
X_{2};Y_{1}Y_{2})$ by the outer bound (\ref{outer bound})$,$ which corresponds
to the maximum equal rate of a system with full cooperation at the receiver
side$.$ This bound can be achieved if both receivers have access to both
outputs $Y_{1}$ and $Y_{2}.$ With the one-round strategy, since $C_{12}%
=H(Y_{1}|Y_{2})$ receiver 1 can provide $Y_{1}$ to receiver 2 via Slepian-Wolf
compression, but receiver 2 cannot do the same with receiver 1 since
$C_{21}<H(Y_{1}|Y_{2}).$ Therefore, rate $R_{out}$ cannot be achieved by this
strategy, which in fact attains equal rate $R_{OR}=1/2\cdot I(X_{1}X_{2}%
;Y_{1}\hat{Y}_{2})<R_{out}$ (recall (\ref{cap cond})).

We now consider a second strategy that generalizes the previous one and is
based on two rounds of conferencing $(K=2).$ As will be shown below, this
strategy is able to improve upon the one-round scheme, while still failing to
achieve the outer-bound (\ref{outer bound}) in the general case.

\begin{prop}
\label{prop5} The following rate region is achievable with two rounds of
conferencing, i..e., $K=2$:
\begin{equation}
\mathcal{R}_{TR}(C_{12},C_{21})=\mathrm{co}%
{\displaystyle\bigcup}
\{\mathcal{R}_{TR,12}\cup\mathcal{R}_{TR,21}\} \label{Rts}%
\end{equation}
where ``co'' indicates the convex hull operation, and we have
\begin{subequations}
\label{Rtr1}%
\begin{align}
\mathcal{R}_{TR,12}  &  =\{(R_{0},R_{1},R_{2})\text{:}\text{ }R_{0}%
\geq0,\text{ }R_{1}\geq0,\text{ }R_{2}\geq0,\\
R_{1}  &  \leq\min\{I(X_{1};Y_{1}|X_{2}U)+C_{21},\text{ }I(X_{1};Y_{2}\hat
{Y}_{1}|X_{2}U)\},\\
R_{2}  &  \leq\min\{I(X_{2};Y_{1}|X_{1}U)+C_{21},\text{ }I(X_{2};Y_{2}\hat
{Y}_{1}|X_{1}U)\},\\
R_{1}+R_{2}  &  \leq\min\{I(X_{1}X_{2};Y_{1}|U)+C_{21},\text{ }I(X_{1}%
X_{2};Y_{2}\hat{Y}_{1}|U)\},\\
R_{0}+R_{1}+R_{2}  &  \leq\min\{I(X_{1}X_{2};Y_{1})+C_{21},\text{ }%
I(X_{1}X_{2};Y_{2}\hat{Y}_{1})\}\},
\end{align}
$\mathcal{R}_{TR,21\text{ }}$is similarly defined:
\end{subequations}
\begin{subequations}
\label{Rtr2}%
\begin{align}
\mathcal{R}_{TR,21}  &  =\{(R_{0},R_{1},R_{2})\text{:}\text{ }R_{0}%
\geq0,\text{ }R_{1}\geq0,\text{ }R_{2}\geq0,\\
R_{1}  &  \leq\min\{I(X_{1};Y_{1}\hat{Y}_{2}|X_{2}U),\text{ }I(X_{1}%
;Y_{2}|X_{2}U)+C_{12}\},\\
R_{2}  &  \leq\min\{I(X_{2};Y_{1}\hat{Y}_{2}|X_{1}U),\text{ }I(X_{1}%
;Y_{2}|X_{2}U)+C_{12}\},\\
R_{1}+R_{2}  &  \leq\min\{I(X_{1}X_{2};Y_{1}\hat{Y}_{2}|U),\text{ }%
I(X_{1};Y_{2}|X_{2}U)+C_{12}\},\\
R_{0}+R_{1}+R_{2}  &  \leq\min\{I(X_{1}X_{2};Y_{1}\hat{Y}_{2}),\text{ }%
I(X_{1};Y_{2}|X_{2})+C_{12}\}\},
\end{align}
subject to
\end{subequations}
\begin{align*}
C_{12}  &  \geq I(Y_{1};\hat{Y}_{1}|Y_{2})\\
C_{21}  &  \geq I(Y_{2};\hat{Y}_{2}|Y_{1})
\end{align*}
with $|\widehat{\mathcal{Y}}_{i}|\leq|\mathcal{Y}_{i}|+1$, and the union is
taken over all the joint distributions that factorize as
\[
p(u)p(x_{1}|u)p(x_{2}|u)p^{\ast}(y_{1}y_{2}|x_{1},x_{2})p(\hat{y}_{1}%
|y_{1})p(\hat{y}_{2}|y_{2}).
\]

\end{prop}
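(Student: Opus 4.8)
The plan is to realize $\mathcal{R}_{TR,12}$ by a two-round conferencing protocol in which the first round sends a raw (Slepian--Wolf-type) description of $y_1^n$ from decoder~1 to decoder~2, and the second round sends a Wyner--Ziv compressed description $\hat y_1^n$ in the other direction, from decoder~2 back to decoder~1. The region $\mathcal{R}_{TR,21}$ is obtained by symmetry (swapping the roles of the two decoders and of $C_{12}, C_{21}$), and the convex hull in \eqref{Rts} is achieved by the usual time-sharing argument, so it suffices to establish achievability of $\mathcal{R}_{TR,12}$. Encoding at the two transmitters is identical to the MAC-with-common-information scheme invoked in Propositions~\ref{prop2} and~\ref{prop4}: fix $p(u)p(x_1|u)p(x_2|u)$, generate $u^n(w_0)$, and conditionally generate $x_1^n(w_0,w_1)$ and $x_2^n(w_0,w_2)$. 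Decoder~2 additionally uses a Wyner--Ziv codebook $\{\hat y_1^n\}$ drawn i.i.d.\ from $p(\hat y_1)$, binned so that the compression index costs rate $I(Y_1;\hat Y_1|Y_2)$ after exploiting $y_2^n$ as side information; this is exactly the constraint $C_{12}\geq I(Y_1;\hat Y_1|Y_2)$.

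The two rounds are used as follows. In round one, decoder~1 transmits to decoder~2 a Slepian--Wolf description of its output $y_1^n$, which decoder~2 recovers (with high probability) using $y_2^n$ as side information at a conferencing cost of $I(X_1X_2;Y_1|Y_2)$ or, more precisely, just enough to let decoder~2 decode the messages as a full-cooperation receiver would; the budget $C_{21}$ is what appears additively in the first arguments of the minima for $\mathcal{R}_{TR,12}$. In round two, decoder~2 computes the Wyner--Ziv compression index of its own output and sends $\hat y_1^n$'s index back to decoder~1 over the $C_{12}$ link, so that decoder~1 obtains the side description $\hat y_1^n$ jointly typical with $(y_2^n,\text{codewords})$. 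Decoding is then by joint typicality: decoder~1 seeks the unique message triple for which $(u^n,x_1^n,x_2^n,y_1^n,\hat y_2^n)$ are jointly typical — giving the bounds $I(\cdot;Y_1\hat Y_2|\cdot)$ — while decoder~2, now in possession of a reliable copy of $y_1^n$, decodes as if it observed both outputs, giving the bounds $I(\cdot;Y_2\hat Y_1|\cdot)$ augmented by the round-one budget $C_{21}$. The error analysis follows the standard packing and covering lemmas: covering (for the Wyner--Ziv codebook) controls the compression step, the Markov structure $p(\hat y_1|y_1)p(\hat y_2|y_2)$ guarantees the correct joint distribution, and packing yields the listed rate constraints. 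The cardinality bounds $|\widehat{\mathcal Y}_i|\leq|\mathcal Y_i|+1$ follow from the usual support-reduction (Carathéodory) argument.

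The step I expect to be the main obstacle is bookkeeping the \emph{asymmetry} between the two conferencing rounds: the first round carries $y_1^n$ essentially uncompressed (a quantity that enters the rate bounds as an additive $C_{21}$ term rather than through a $\hat Y$ auxiliary), whereas the second round carries a genuinely Wyner--Ziv-compressed $\hat y_1^n$. Making precise how decoder~2, after round one, effectively has both outputs and hence achieves the single-constraint bounds $I(X_1;Y_2\hat Y_1|X_2U)$ etc., while simultaneously keeping the round-one rate within $C_{21}$, is where the argument must be handled carefully; in particular one must verify that the "raw" first-round description is decodable at decoder~2 given $y_2^n$ before the Wyner--Ziv step of round two is even needed, and that the two budgets $C_{12}$ and $C_{21}$ are charged to the correct directions. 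Once this asymmetric two-round bookkeeping is in place, the remaining joint-typicality error terms reproduce \eqref{Rtr1} verbatim, and $\mathcal{R}_{TR,21}$ and the convex hull follow by symmetry and time-sharing.
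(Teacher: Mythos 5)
There is a genuine gap: your protocol is not the one that achieves (\ref{Rtr1}), and as described it cannot be. The region $\mathcal{R}_{TR,12}$ has an \emph{additive} $+C_{21}$ term attached to decoder 1's single-output bounds ($R_{1}\leq I(X_{1};Y_{1}|X_{2}U)+C_{21}$, etc.), and such additive terms cannot come from forwarding a compressed description of an output: any compress-and-forward round yields bounds of the form $I(\,\cdot\,;Y_{1}\hat{Y}_{2}|\,\cdot\,)$, which is exactly the one-round region of Proposition \ref{prop4}. The paper's scheme (following Theorem 4 of Dabora--Servetto) is asymmetric in a different way than yours: in round one, decoder 1 sends the \emph{Wyner--Ziv index} of $\hat{y}_{1}^{n}$ to decoder 2 over the $C_{12}$ link (this is where $C_{12}\geq I(Y_{1};\hat{Y}_{1}|Y_{2})$ is used); decoder 2 then decodes the messages from $(y_{2}^{n},\hat{y}_{1}^{n})$, giving the $I(\,\cdot\,;Y_{2}\hat{Y}_{1}|\,\cdot\,)$ bounds; in round two, decoder 2 sends, over the $C_{21}$ link, the \emph{bin indices of the decoded messages}, the message sets $\mathcal{W}_{0},\mathcal{W}_{1},\mathcal{W}_{2}$ having been randomly partitioned into $2^{n\alpha_{0}C_{21}},2^{n\alpha_{1}C_{21}},2^{n\alpha_{2}C_{21}}$ bins; decoder 1 finally performs joint-typicality decoding restricted to the indicated bins, which is what produces $I(X_{1};Y_{1}|X_{2}U)+\alpha_{1}C_{21}$, etc., and a union over the $\alpha_{i}$ gives (\ref{Rtr1}). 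This decode-and-bin step at decoder 2 is entirely absent from your proposal, and it is the key idea of the two-round scheme.

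Two further concrete errors. First, your round one sends a ``raw Slepian--Wolf description of $y_{1}^{n}$'' so that decoder 2 obtains $y_{1}^{n}$ losslessly; this requires rate $H(Y_{1}|Y_{2})$, which in general exceeds $C_{12}$, whereas the proposition only assumes $C_{12}\geq I(Y_{1};\hat{Y}_{1}|Y_{2})$ (the lossless exchange is feasible only in special cases such as Example 1 of the paper, where $C_{12}=H(Y_{1}|Y_{2})$ is assumed). Second, the link capacities are charged to the wrong directions throughout: transmissions from decoder 1 to decoder 2 use the $C_{12}$ link and those from decoder 2 to decoder 1 use the $C_{21}$ link, while you charge round one (decoder 1 to decoder 2) to $C_{21}$ and round two (decoder 2 to decoder 1) to $C_{12}$. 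As a consequence, the region your decoding step actually delivers --- $I(\,\cdot\,;Y_{1}\hat{Y}_{2}|\,\cdot\,)$ bounds at decoder 1 and $I(\,\cdot\,;Y_{2}\hat{Y}_{1}|\,\cdot\,)$ bounds ``augmented by $C_{21}$'' at decoder 2 --- is not $\mathcal{R}_{TR,12}$ in (\ref{Rtr1}), where the $+C_{21}$ term sits on decoder 1's $Y_{1}$-only bounds and decoder 2's bounds carry no additive term at all.
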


\begin{proof}
(Sketch): The proof is quite similar to Theorem 4 in \cite{Dabora Servetto
2006}, and here we only sketch the main points. Conferencing takes place via
$K=2$ conferencing rounds. Moreover, two possible strategies are considered,
giving rise to the convex hull operation in (\ref{Rts}) by time-sharing. The
two corresponding rate regions $\mathcal{R}_{TR,12}$ in (\ref{Rtr1}) and
$\mathcal{R}_{TR,21}$ in (\ref{Rtr2}) are obtained as follows. Consider
$\mathcal{R}_{TR,12}.$ Receiver 2 randomly partitions the message sets
$\mathcal{W}_{0},$ $\mathcal{W}_{1}$ and $\mathcal{W}_{2}$ into $2^{n\alpha
_{0}C_{12}}$, $2^{n\alpha_{1}C_{12}}$ and $2^{n\alpha_{2}C_{12}}$ subsets$, $
respectively, for a given $0\leq\alpha_{i}\leq1$ and $\sum_{i=0}^{2}\alpha
_{i}=1,$ as in the proof of Proposition \ref{prop2}.\ Encoding and
transmission are performed as for the MAC with common information. Receiver 1
compresses its received signal using Wyner-Ziv quantization as for the scheme
discussed in the proof of Proposition \ref{prop4}. This index is sent in the
first conferencing round (notice that $|\mathcal{V}_{1,1}|=nC_{12}$ and
$|\mathcal{V}_{2,1}|=0$). Upon reception of the compression index
$\mathcal{V}_{1,1}$, receiver 2 proceeds to decoding via joint typicality and
then sends the subset indices (see proof of Proposition \ref{prop2}) to
receiver 1 via $\mathcal{V}_{2,2}$ (now, $|\mathcal{V}_{1,2}|=0$ and
$|\mathcal{V}_{2,2}|=nC_{21}$). The latter decoder performs joint-typicality
decoding on the subsets of messages left undecided by the conferencing message
$\mathcal{V}_{1,1}$ received by 1. The rate region $\mathcal{R}_{TR,21}$ is
obtained similarly by simply swapping the roles of decoder 1 and decoder 2.
\end{proof}

\textit{Example 1 (cont'd)}: To see the impact of the two-round scheme, here
we reconsider Example 1 discussed above. It was shown that, for the scenario
discussed therein, the one-round scheme is not able to achieve the outer bound
$R_{out}$. However, it can be seen that the two-round scheme does indeed
achieve the outer bound. In fact, receiver 1 can provide $Y_{1}$ to receiver 2
via Slepian-Wolf compression as for the one-round case, while receiver 2 does
not send anything in the first conferencing round ($\hat{Y}_{2}$ is a
constant). Now, receiver 2 decodes and sends the bin index of the decoded
messages to receiver 1 in the second conferencing round according to the
two-round strategy discussed above (receiver 1 is silent in the second round).
Since $C_{21}\geq1/2\cdot I(X_{1}X_{2};Y_{2}|Y_{1})$ by assumption, it can be
seen from Proposition \ref{prop5} that the maximum equal rate achieved by the
two round scheme is $R_{TR}=R_{out}.$

We finally remark that it is possible in principle to extend the achievable
rate regions derived above to more than two conferencing rounds, following
\cite{Dabora Servetto 2006 bis} \cite{Draper et al 2003}. This is generally
advantageous in terms of achievable rates. While conceptually not difficult, a
description of the achievable rate region would require cumbersome notation
and is thus omitted here.

\section{Gaussian Compound MAC\label{sec_Gaussian}}

Here we consider the Gaussian version of the CM channel:
\begin{subequations}
\label{GCM}%
\begin{align}
Y_{1}  &  =\gamma_{11}X_{1}+\gamma_{21}X_{2}+Z_{1}\\
Y_{2}  &  =\gamma_{22}X_{2}+\gamma_{12}X_{1}+Z_{2},
\end{align}
with channel gains $\gamma_{ij}\geq0,$ independent white zero-mean unit-power
Gaussian noise $\{Z_{i}\}_{i=1}^{n}$ and per-symbol power constraints
$E[X_{i}^{2}]\leq P_{i}.$ Notice that the channel described by (\ref{GCM}) is
not physically degraded.

The outer bound of Proposition \ref{prop1} can be extended to (\ref{GCM}) by
using standard arguments. In particular, the capacity region of the Gaussian
CM, $\mathcal{C}_{CM}^{\mathcal{G}}(C_{12},C_{21})$ satisfies the following.
\end{subequations}
\begin{prop}
\label{prop6} We have $\mathcal{C}_{CM}^{\mathcal{G}}(C_{12},C_{21}%
)\subseteq\mathcal{C}_{CM-out}^{\mathcal{G}}(C_{12},C_{21})$ where:
\begin{subequations}
\label{outer bound Gaussian}%
\begin{align}
\mathcal{C}_{CM-out}^{\mathcal{G}}  &  (C_{12},C_{21}) =%
{\displaystyle\bigcup\limits_{\substack{0\leq P_{i}^{\prime}\leq P_{i}
\\i=1,2}}}
\bigg\{ (R_{0},R_{1},R_{2})\text{:} R_{0}\geq0,\text{ }R_{1}\geq0,\text{
}R_{2}\geq0,\\
R_{1}  &  \leq\min\left\{  \mathcal{C}(\gamma_{11}^{2}P_{1}^{\prime}%
)+C_{21},\text{ }\mathcal{C}(\gamma_{12}^{2}P_{1}^{\prime})+C_{12},\text{
}\mathcal{C}(P_{1}^{\prime}\left(  \gamma_{11}^{2}+\gamma_{12}^{2}\right)
)\right\}  ,\\
R_{2}  &  \leq\min\left\{  \mathcal{C}(\gamma_{21}^{2}P_{2}^{\prime}%
)+C_{21},\text{ }\mathcal{C}(\gamma_{22}^{2}P_{2}^{\prime})+C_{12},\text{
}\mathcal{C}(P_{2}^{\prime}\left(  \gamma_{21}^{2}+\gamma_{22}^{2}\right)
)\right\}  ,\\
R_{1}+R_{2}  &  \leq\min\left\{
\begin{array}
[c]{l}%
\mathcal{C}(\gamma_{11}^{2}P_{1}^{\prime}+\gamma_{21}^{2}P_{2}^{\prime
})+C_{21},\text{ }\mathcal{C}(\gamma_{22}^{2}P_{2}^{\prime}+\gamma_{12}%
^{2}P_{1}^{\prime})+C_{12},\\
\mathcal{C}\left(  P_{1}^{\prime}\left(  \gamma_{11}^{2}+\gamma_{12}%
^{2}\right)  +P_{2}^{\prime}\left(  \gamma_{21}^{2}+\gamma_{22}^{2}\right)
+\mathcal{K}\right)
\end{array}
\right\} \\
R_{0}+R_{1}+R_{2}  &  \leq\min\left\{
\begin{array}
[c]{c}%
\mathcal{C}(\gamma_{11}^{2}P_{1}^{\prime}+\gamma_{21}^{2}P_{2}^{\prime}%
+\rho_{1})+C_{21},\text{ }\mathcal{C}(\gamma_{22}^{2}P_{2}^{\prime}%
+\gamma_{12}^{2}P_{1}^{\prime}+\rho_{2})+C_{12},\\
\mathcal{C}\left(
\begin{array}
[c]{l}%
P_{1}^{\prime}\left(  \gamma_{11}^{2}+\gamma_{12}^{2}\right)  +P_{2}^{\prime
}\left(  \gamma_{21}^{2}+\gamma_{22}^{2}\right)  +\mathcal{K}\\
+\rho_{1}\left(  1+P_{1}^{\prime}\gamma_{12}^{2}+P_{2}^{\prime}\gamma_{22}%
^{2}\right)  +\rho_{2}(1+P_{1}^{\prime}\gamma_{11}^{2}+P_{2}^{\prime}%
\gamma_{21}^{2})\\
-2\sqrt{\rho_{1}\rho_{2}}(P_{1}^{\prime}\gamma_{11}\gamma_{12}+P_{2}^{\prime
}\gamma_{21}\gamma_{22})
\end{array}
\right)
\end{array}
\right\}  \bigg\} .\nonumber
\end{align}
with
\end{subequations}
\begin{subequations}
\label{rho function}%
\begin{align}
\mathcal{K}  &  =P_{1}^{\prime}P_{2}^{\prime}(\gamma_{12}\gamma_{21}%
-\gamma_{11}\gamma_{22})^{2}\\
\rho_{1}  &  =(\gamma_{11}\sqrt{P_{1}-P_{1}^{\prime}}+\gamma_{21}\sqrt
{P_{2}-P_{2}^{\prime}})^{2}\\
\rho_{2}  &  =(\gamma_{22}\sqrt{P_{2}-P_{2}^{\prime}}+\gamma_{12}\sqrt
{P_{1}-P_{1}^{\prime}})^{2}%
\end{align}
and $\mathcal{C}(x)\triangleq\frac{1}{2}\log(1+x).$
\end{subequations}
\end{prop}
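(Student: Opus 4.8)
The plan is to derive Proposition~\ref{prop6} as the Gaussian specialization of the discrete outer bound in Proposition~\ref{prop1}, reusing the region $\mathcal{C}_{CM-out}(C_{12},C_{21})$ evaluated under a jointly Gaussian input distribution. Since Proposition~\ref{prop1} already establishes $\mathcal{C}_{CM}(C_{12},C_{21})\subseteq\mathcal{C}_{CM-out}(C_{12},C_{21})$ for the discrete-memoryless model, the main work is to (i) justify that the bound carries over to the Gaussian channel~(\ref{GCM}) with a per-symbol power constraint, and (ii) show that the maximizing auxiliary/input distribution is Gaussian, so that each mutual-information term reduces to a $\mathcal{C}(\cdot)=\frac12\log(1+\cdot)$ expression. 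First I would invoke the standard discretization/quantization argument (as alluded to by ``standard arguments'' in the text) to extend the converse of Proposition~\ref{prop1} to the continuous-alphabet Gaussian setting, taking care that the power constraint $E[X_i^2]\le P_i$ is respected by the codewords and hence by the single-letterized input law.

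Next I would parametrize the input distribution. The key structural choice is to write each input as the superposition of a common-message part carrying the auxiliary $U$ and a private-message part: concretely, take $X_i = X_i' + T_i$ where $X_i'$ (power $P_i'$) is independent across users and carries the private information, while the remaining power $P_i-P_i'$ is devoted to the common message $U$, with the two common-message components correlated across users so as to beamform coherently toward each receiver. With this split, $U$ governs the $\rho_1,\rho_2$ coherent-combining terms. I would then compute the conditional covariances: conditioning on $X_2$ and $U$ leaves only $X_1'$ active, giving single-user terms like $\mathcal{C}(\gamma_{11}^2 P_1')$ and, for the fully-cooperative receiver $Y_1Y_2$, the determinant-based term $\mathcal{C}(P_1'(\gamma_{11}^2+\gamma_{12}^2))$. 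The sum-rate cooperative bound produces the cross term $\mathcal{K}=P_1'P_2'(\gamma_{12}\gamma_{21}-\gamma_{11}\gamma_{22})^2$, which is exactly the off-diagonal contribution to the determinant of the $2\times2$ received covariance matrix at the virtual two-antenna receiver when both private signals are present.

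The heart of the calculation is the last inequality, $R_0+R_1+R_2\le\cdots$, under full cooperation: here all of $X_1',X_2'$ and the correlated common-message components contribute, and I would evaluate $I(X_1X_2;Y_1Y_2)=\frac12\log\det(\Sigma_{Y_1Y_2})$ for the $2\times2$ output covariance. Expanding this determinant, with the common-message power correlated across the two users to maximize the received signal energy, yields precisely the $\rho_1,\rho_2$ single-receiver enhancements $\rho_1(1+P_1'\gamma_{12}^2+P_2'\gamma_{22}^2)$ and $\rho_2(1+P_1'\gamma_{11}^2+P_2'\gamma_{21}^2)$ together with the negative coherent cross term $-2\sqrt{\rho_1\rho_2}(P_1'\gamma_{11}\gamma_{12}+P_2'\gamma_{21}\gamma_{22})$. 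To obtain each scalar bound ($+C_{21}$, $+C_{12}$, or the cooperative version) I would use the corresponding marginal covariance at $Y_1$ alone or $Y_2$ alone, noting that $\gamma_{11}^2P_1'+\gamma_{21}^2P_2'+\rho_1$ is the total power seen at receiver~1 including its coherent common-message gain. Finally, taking the union over all feasible power allocations $0\le P_i'\le P_i$ (and over the correlation within the common-message component) recovers the stated region.

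The main obstacle I anticipate is the determinant expansion for the fully-cooperative sum term and verifying that the Gaussian input with the postulated power split and correlation structure is genuinely optimal, i.e.\ that no other covariance (in particular a different correlation between $X_1'$ and $X_2'$, or between the common-message parts) enlarges any of the mutual-information expressions. I would address optimality of the Gaussian law via the standard maximum-entropy argument—among all inputs with a given covariance, the Gaussian maximizes each (conditional) differential entropy appearing in the bounds—and I would verify by direct algebra that the chosen correlations make the determinant match~(\ref{outer bound Gaussian}); the bookkeeping of which cross terms survive after conditioning (so that the private-message bounds see only $P_i'$ while the common-message bound sees $\rho_i$) is where most of the care is required, but this is routine covariance algebra rather than a conceptual difficulty.
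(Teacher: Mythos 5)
Your proposal takes essentially the same route as the paper's own proof: the paper likewise extends the discrete outer bound of Proposition~\ref{prop1} to the Gaussian channel (\ref{GCM}), asserts that the jointly Gaussian superposition input $X_{i}=\sqrt{P_{i}-P_{i}^{\prime}}\,U+\sqrt{P_{i}^{\prime}}\,V_{i}$ with $U,V_{1},V_{2}$ independent is optimal --- deferring this step to the argument of \cite{Lapidoth} rather than spelling out the maximum-entropy, Jensen and Cauchy--Schwarz bookkeeping you sketch --- and then concludes with the same covariance/determinant algebra yielding $\mathcal{K}$, $\rho_{1}$, $\rho_{2}$. The only difference is packaging: what you propose to verify by hand (in particular the care needed so that the full-cooperation $\log\det$ bound holds at the correlation implied by the chosen $P_{i}^{\prime}$) is exactly the content of the cited reference, so there is no substantive divergence in approach.
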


\begin{proof}
Similarly to Proposition \ref{prop1}, one can prove that the rate region
(\ref{outer bound}) is an outer bound on the achievable rates. It then remains
to be proved that a Gaussian joint distribution $p(u)p(x_{1}|u)p(x_{2}|u)$
with $X_{i}=\sqrt{P-P_{i}^{\prime}}U+\sqrt{P_{i}^{\prime}}V_{i},$ where is
$U,$ $V_{1}$ and $V_{2}$ are independent Gaussian zero-mean unit-power random
variables, is optimal. This can be done following the steps of \cite{Lapidoth}%
, where the proof is given for a single MAC\ channel with common information
(see also \cite{Simeone Asilomar}). The proof is concluded with some algebra.
\end{proof}

The achievable rates in Proposition \ref{prop4} (for $K=1$) and Proposition
\ref{prop5} (for $K=2$) can also be extended to the Gaussian CM channel. In so
doing, we focus on jointly Gaussian auxiliary random variables for Wyner-Ziv
compression. While no general claim of optimality is put forth here, some
conclusion on the optimality of such schemes can be drawn as discussed later
in Sec. \ref{sec_discussion}.

\begin{prop}
\label{prop7} The following rate region is achievable with one-round
conferencing, $K=1$:
\begin{subequations}
\label{one step Gaussian}%
\begin{align}
\mathcal{R}_{OR}^{\mathcal{G}}(C_{12},C_{21}) &  =%
{\displaystyle\bigcup\limits_{\substack{0\leq P_{i}^{\prime}\leq P_{i}%
\\i=1,2}}}
\{(R_{0},R_{1},R_{2})\text{:}\text{ }R_{0}\geq0,\text{ }R_{1}\geq0,\text{
}R_{2}\geq0,\\
R_{1} &  \leq\min\left\{  \mathcal{C}\left(  P_{1}^{\prime}\left(  \gamma
_{11}^{2}+\frac{\gamma_{12}^{2}}{1+\sigma_{2}^{2}}\right)  \right)  ,\text{
}\mathcal{C}\left(  P_{1}^{\prime}\left(  \gamma_{12}^{2}+\frac{\gamma
_{11}^{2}}{1+\sigma_{1}^{2}}\right)  \right)  \right\}  ,\\
R_{2} &  \leq\min\left\{  \mathcal{C}\left(  P_{2}^{\prime}\left(  \gamma
_{21}^{2}+\frac{\gamma_{22}^{2}}{1+\sigma_{2}^{2}}\right)  \right)  ,\text{
}\mathcal{C}\left(  P_{2}^{\prime}\left(  \gamma_{22}^{2}+\frac{\gamma
_{21}^{2}}{1+\sigma_{1}^{2}}\right)  \right)  \right\}  \\
R_{1}+R_{2} &  \leq\min\left\{
\begin{array}
[c]{c}%
\mathcal{C}\left(
\begin{array}
[c]{l}%
P_{1}^{\prime}\left(  \gamma_{11}^{2}+\frac{\gamma_{12}^{2}}{1+\sigma_{2}^{2}%
}\right)  +P_{2}^{\prime}\left(  \gamma_{21}^{2}+\frac{\gamma_{22}^{2}%
}{1+\sigma_{2}^{2}}\right)  \\
+\frac{\mathcal{K}}{1+\sigma_{2}^{2}}%
\end{array}
\right)  ,\text{ }\\
\mathcal{C}\left(
\begin{array}
[c]{l}%
P_{2}^{\prime}\left(  \gamma_{22}^{2}+\frac{\gamma_{21}^{2}}{1+\sigma_{1}^{2}%
}\right)  +P_{1}^{\prime}\left(  \gamma_{21}^{2}+\frac{\gamma_{11}^{2}%
}{1+\sigma_{1}^{2}}\right)  \\
+\frac{\mathcal{K}}{1+\sigma_{1}^{2}}%
\end{array}
\right)
\end{array}
\right\}  \\
R_{0}+R_{1}+R_{2} &  \leq\min\left\{
\begin{array}
[c]{c}%
\mathcal{C}\left(
\begin{array}
[c]{l}%
P_{1}^{\prime}\left(  \gamma_{11}^{2}+\frac{\gamma_{12}^{2}}{1+\sigma_{2}^{2}%
}\right)  +P_{2}^{\prime}\left(  \gamma_{21}^{2}+\frac{\gamma_{22}^{2}%
}{1+\sigma_{2}^{2}}\right)  +\frac{\mathcal{K}}{1+\sigma_{2}^{2}}\\
+\rho_{1}\left(  1+\frac{P_{1}^{\prime}\gamma_{12}^{2}+P_{2}^{\prime}%
\gamma_{22}^{2}}{1+\sigma_{2}^{2}}\right)  +\frac{\rho_{2}}{1+\sigma_{2}^{2}%
}(1+P_{1}^{\prime}\gamma_{11}^{2}+P_{2}^{\prime}\gamma_{21}^{2})\\
-\frac{2\sqrt{\rho_{1}\rho_{2}}(P_{1}\gamma_{11}\gamma_{12}+P_{2}\gamma
_{21}\gamma_{22})}{1+\sigma_{2}^{2}}%
\end{array}
\right)  ,\\
\mathcal{C}\left(
\begin{array}
[c]{l}%
P_{1}^{\prime}\left(  \gamma_{12}^{2}+\frac{\gamma_{11}^{2}}{1+\sigma_{1}^{2}%
}\right)  +P_{2}^{\prime}\left(  \gamma_{22}^{2}+\frac{\gamma_{21}^{2}%
}{1+\sigma_{1}^{2}}\right)  +\frac{\mathcal{K}}{1+\sigma_{1}^{2}}\\
+\rho_{2}\left(  1+\frac{P_{1}^{\prime}\gamma_{11}^{2}+P_{2}^{\prime}%
\gamma_{21}^{2}}{1+\sigma_{1}^{2}}\right)  +\frac{\rho_{1}}{1+\sigma_{1}^{2}%
}(1+P_{1}^{\prime}\gamma_{12}^{2}+P_{2}^{\prime}\gamma_{22}^{2})\\
-\frac{2\sqrt{\rho_{1}\rho_{2}}(P_{1}\gamma_{11}\gamma_{12}+P_{2}\gamma
_{21}\gamma_{22})}{1+\sigma_{1}^{2}}%
\end{array}
\right)
\end{array}
\right\}  \},\nonumber
\end{align}
with (\ref{rho function}) and quantization noise variances satisfying
\end{subequations}
\begin{subequations}
\label{sigmas1}%
\begin{align}
\sigma_{1}^{2}  & \geq\frac{1+(\gamma_{11}^{2}+\gamma_{12}^{2})P_{1}%
+(\gamma_{21}^{2}+\gamma_{22}^{2})P_{2}+(\gamma_{12}\gamma_{21}-\gamma
_{11}\gamma_{22})^{2}P_{1}P_{2}}{(2^{2C_{12}}-1)(1+\gamma_{12}^{2}P_{1}%
+\gamma_{22}^{2}P_{2})}\\
\sigma_{2}^{2}  & \geq\frac{1+(\gamma_{11}^{2}+\gamma_{12}^{2})P_{1}%
+(\gamma_{21}^{2}+\gamma_{22}^{2})P_{2}+(\gamma_{12}\gamma_{21}-\gamma
_{11}\gamma_{22})^{2}P_{1}P_{2}}{(2^{2C_{21}}-1)(1+\gamma_{11}^{2}P_{1}%
+\gamma_{21}^{2}P_{2})}.
\end{align}

\end{subequations}
\end{prop}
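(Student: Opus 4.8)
The plan is to evaluate the general one-round region $\mathcal{R}_{OR}(C_{12},C_{21})$ of Proposition \ref{prop4} on the Gaussian channel (\ref{GCM}) by taking both the channel inputs and the Wyner--Ziv auxiliary variables jointly Gaussian. Concretely, I would set $X_{i}=\sqrt{P_{i}-P_{i}'}\,U+\sqrt{P_{i}'}\,V_{i}$ with $U,V_{1},V_{2}$ independent zero-mean unit-variance Gaussians (so $E[X_{i}^{2}]=P_{i}$ and the common part carries power $P_{i}-P_{i}'$), and choose Gaussian test channels $\hat{Y}_{i}=Y_{i}+\hat{Z}_{i}$ with $\hat{Z}_{i}\sim\mathcal{N}(0,\sigma_{i}^{2})$ independent of everything else. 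Since the region of Proposition \ref{prop4} is achievable for \emph{any} distribution of the prescribed product form, these choices produce an achievable subregion, and every information quantity in (\ref{OS}) and (\ref{cap cond}) reduces to a Gaussian (conditional) mutual information, i.e.\ to one-half the log of a ratio of covariance determinants.

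First I would dispose of the rate constraints (\ref{cap cond}). With $\hat{Y}_{1}=Y_{1}+\hat{Z}_{1}$ one has $I(Y_{1};\hat{Y}_{1}|Y_{2})=\tfrac{1}{2}\log\bigl(1+\mathrm{Var}(Y_{1}|Y_{2})/\sigma_{1}^{2}\bigr)$, so $C_{12}\geq I(Y_{1};\hat{Y}_{1}|Y_{2})$ is equivalent to $\sigma_{1}^{2}\geq\mathrm{Var}(Y_{1}|Y_{2})/(2^{2C_{12}}-1)$. Computing the $2\times2$ output covariance under the full powers $P_{1},P_{2}$ (the compression acts on the actual received signal, whose unconditional variance uses the full power) gives $\mathrm{Var}(Y_{1}|Y_{2})=[\mathrm{Var}(Y_{1})\mathrm{Var}(Y_{2})-\mathrm{Cov}(Y_{1},Y_{2})^{2}]/\mathrm{Var}(Y_{2})$, whose numerator factors as $1+(\gamma_{11}^{2}+\gamma_{12}^{2})P_{1}+(\gamma_{21}^{2}+\gamma_{22}^{2})P_{2}+(\gamma_{12}\gamma_{21}-\gamma_{11}\gamma_{22})^{2}P_{1}P_{2}$ and whose denominator is $1+\gamma_{12}^{2}P_{1}+\gamma_{22}^{2}P_{2}$; this is exactly (\ref{sigmas1}a), and (\ref{sigmas1}b) follows by the symmetric computation.

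Next I would evaluate the rate bounds in (\ref{OS}). For the single-user and sum-rate constraints, conditioning on $U$ (and, where present, on $X_{j}$) removes the common part, so the effective input is $\sqrt{P_{i}'}\,V_{i}$. The term $I(X_{1};Y_{1}\hat{Y}_{2}|X_{2}U)$ is then the capacity of a two-branch SIMO channel carrying $V_{1}$ through gain $\gamma_{11}$ (noise variance $1$) and gain $\gamma_{12}$ (noise variance $1+\sigma_{2}^{2}$); maximal-ratio combining gives $\mathcal{C}\bigl(P_{1}'(\gamma_{11}^{2}+\gamma_{12}^{2}/(1+\sigma_{2}^{2}))\bigr)$, matching the first entry of the $R_{1}$ bound, and the other single-user terms follow identically. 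For $I(X_{1}X_{2};Y_{1}\hat{Y}_{2}|U)$ I would use the MIMO formula $\tfrac{1}{2}\log\det(I+N^{-1}HQH^{\mathsf{T}})$ with $H=\bigl(\begin{smallmatrix}\gamma_{11}&\gamma_{21}\\\gamma_{12}&\gamma_{22}\end{smallmatrix}\bigr)$, $Q=\mathrm{diag}(P_{1}',P_{2}')$ and $N=\mathrm{diag}(1,1+\sigma_{2}^{2})$; the cross term contributes $\det(HQH^{\mathsf{T}})=(\gamma_{12}\gamma_{21}-\gamma_{11}\gamma_{22})^{2}P_{1}'P_{2}'=\mathcal{K}$, which supplies the $\mathcal{K}/(1+\sigma_{2}^{2})$ in the claimed expression.

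Finally, the total-rate bound $I(X_{1}X_{2};Y_{1}\hat{Y}_{2})$ is the one term \emph{not} conditioned on $U$, so here the inputs form the full correlated pair with $\mathrm{Cov}(X_{1},X_{2})=\sqrt{(P_{1}-P_{1}')(P_{2}-P_{2}')}$, and the variance injected by the common signal into $Y_{i}$ is precisely $\rho_{i}$ of (\ref{rho function}). I would again compute $\tfrac{1}{2}\log\det$ of the $2\times2$ output covariance, whose determinant expansion produces the $\rho_{1},\rho_{2}$ terms together with the cross term $-2\sqrt{\rho_{1}\rho_{2}}(\cdots)/(1+\sigma_{2}^{2})$. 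This determinant expansion is the main obstacle: it is the only place where the common-message power, the private powers $P_{i}'$, the quantization noise $\sigma_{2}^{2}$, and the channel cross-gains all enter at once, so the bookkeeping is heavy and one must group $\mathcal{K}$, $\rho_{1}$ and $\rho_{2}$ correctly; every other term reduces to a routine SIMO/MIMO Gaussian evaluation, and assembling them with (\ref{sigmas1}) and (\ref{rho function}) yields (\ref{one step Gaussian}).
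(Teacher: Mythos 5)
Your proposal is correct and takes essentially the same route as the paper's (very terse) proof: evaluate the discrete one-round region (\ref{OS}) under the Gaussian input split $X_{i}=\sqrt{P_{i}-P_{i}^{\prime}}\,U+\sqrt{P_{i}^{\prime}}\,V_{i}$ and forward Gaussian test channels $\hat{Y}_{i}=Y_{i}+Z_{q,i}$ with $Z_{q,i}\sim\mathcal{N}(0,\sigma_{i}^{2})$, reducing the constraints (\ref{cap cond}) to the conditional-variance bounds (\ref{sigmas1}) and the rate bounds to covariance-determinant algebra yielding the $\mathcal{K}$ and $\rho_{1},\rho_{2}$ terms of (\ref{rho function}). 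The one caveat in your write-up---computing $\mathrm{Var}(Y_{1}|Y_{2})$ for the compression constraint with full powers but without the input correlation induced by $U$---is exactly what the paper's expression (\ref{sigmas1}) does as well, so it is an inherited feature of the paper's derivation rather than a deviation from it.
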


\begin{proof}
As stated above, we consider Gaussian auxiliary random variables and evaluate
the region (\ref{OS}). In particular, the test channels for Wyner-Ziv
compression are selected as $\hat{Y}_{i}=Y_{i}+Z_{q,i}$ where the compression
noise $Z_{q,i}$\ is zero-mean Gaussian with variance $\sigma_{i}^{2}$ and
independent of $Y_{i}$. The proposition follows from some algebraic manipulation.
\end{proof}

The one-round strategy can be generalized by enabling two rounds of
conferencing ($K=2$), obtaining the following achievable rate region:

\begin{prop}
\label{prop8} The following rate region is achievable with two rounds of
conferencing, $K=2$:
\begin{equation}
\mathcal{R}_{TR}^{\mathcal{G}}(C_{12},C_{21})=\text{co}%
{\displaystyle\bigcup\limits_{\substack{0\leq P_{i}^{\prime}\leq P_{i}%
\\i=1,2}}}
\{\mathcal{R}_{TR,12}^{\mathcal{G}}\cup\mathcal{R}_{TR,21}^{\mathcal{G}%
}\}\label{two step Gaussian}%
\end{equation}
with
\begin{subequations}
\begin{align}
\mathcal{R}_{TR,21}^{\mathcal{G}} &  =\bigg\{(R_{0},R_{1},R_{2})\text{:}\text{
}R_{0}\geq0,\text{ }R_{1}\geq0,\text{ }R_{2}\geq0,\\
R_{1} &  \leq\min\left\{  \mathcal{C}\left(  \gamma_{11}^{2}P_{1}^{\prime
}\right)  +C_{21},\text{ }\mathcal{C}\left(  P_{1}^{\prime}\left(  \gamma
_{12}^{2}+\frac{\gamma_{11}^{2}}{1+\sigma_{1}^{2}}\right)  \right)  \right\}
\\
R_{2} &  \leq\min\left\{  \mathcal{C}\left(  \gamma_{21}^{2}P_{2}^{\prime
}\right)  +C_{21},\text{ }\mathcal{C}\left(  P_{2}^{\prime}\left(  \gamma
_{22}^{2}+\frac{\gamma_{21}^{2}}{1+\sigma_{1}^{2}}\right)  \right)  \right\}
\\
R_{1}+R_{2} &  \leq\min\left\{
\begin{array}
[c]{l}%
\mathcal{C}\left(  \gamma_{11}^{2}P_{1}^{\prime}+\gamma_{21}^{2}P_{2}^{\prime
}\right)  +C_{21},\text{ }\\
\mathcal{C}\left(
\begin{array}
[c]{l}%
P_{1}^{\prime}\left(  \gamma_{12}^{2}+\frac{\gamma_{11}^{2}}{1+\sigma_{1}^{2}%
}\right)  +P_{2}^{\prime}\left(  \gamma_{22}^{2}+\frac{\gamma_{21}^{2}%
}{1+\sigma_{1}^{2}}\right)  +\frac{\mathcal{K}}{1+\sigma_{1}^{2}}%
\end{array}
\right)
\end{array}
\right\}  \\
R_{0}+R_{1}+R_{2} &  \leq\min\left\{
\begin{array}
[c]{l}%
\mathcal{C}(\gamma_{11}^{2}P_{1}^{\prime}+\gamma_{21}^{2}P_{2}^{\prime}%
+\rho_{1})+C_{21},\\
\mathcal{C}\left(
\begin{array}
[c]{l}%
P_{1}^{\prime}\left(  \gamma_{12}^{2}+\frac{\gamma_{11}^{2}}{1+\sigma_{1}^{2}%
}\right)  +P_{2}^{\prime}\left(  \gamma_{22}^{2}+\frac{\gamma_{21}^{2}%
}{1+\sigma_{1}^{2}}\right)  +\frac{\mathcal{K}}{1+\sigma_{1}^{2}}\\
+\rho_{2}\left(  1+\frac{P_{1}^{\prime}\gamma_{11}^{2}+P_{2}^{\prime}%
\gamma_{21}^{2}}{1+\sigma_{1}^{2}}\right)  +\frac{\rho_{1}}{1+\sigma_{1}^{2}%
}(1+P_{1}^{\prime}\gamma_{12}^{2}+P_{2}^{\prime}\gamma_{22}^{2})\\
-\frac{2\sqrt{\rho_{1}\rho_{2}}(P_{1}^{\prime}\gamma_{11}\gamma_{12}%
+P_{2}^{\prime}\gamma_{21}\gamma_{22})}{1+\sigma_{1}^{2}}%
\end{array}
\right)
\end{array}
\right\}  \bigg\}\nonumber
\end{align}
$\mathcal{R}_{TR,12\text{ }}$is similarly defined:
\end{subequations}
\begin{subequations}
\begin{align}
\mathcal{R}_{TR,12}^{\mathcal{G}} &  =\bigg\{(R_{0},R_{1},R_{2})\text{:}\text{
}R_{0}\geq0,\text{ }R_{1}\geq0,\text{ }R_{2}\geq0,\\
R_{1} &  \leq\min\left\{  \mathcal{C}\left(  P_{1}^{\prime}\left(  \gamma
_{11}^{2}+\frac{\gamma_{12}^{2}}{1+\sigma_{2}^{2}}\right)  \right)  ,\text{
}\mathcal{C}\left(  \gamma_{12}^{2}P_{1}^{\prime}\right)  +C_{12}\right\}  \\
R_{2} &  \leq\min\left\{  \mathcal{C}\left(  P_{2}^{\prime}\left(  \gamma
_{21}^{2}+\frac{\gamma_{22}^{2}}{1+\sigma_{2}^{2}}\right)  \right)  ,\text{
}\mathcal{C}\left(  \gamma_{22}^{2}P_{2}^{\prime}\right)  +C_{12}\right\}  \\
R_{1}+R_{2} &  \leq\min\left\{
\begin{array}
[c]{l}%
\mathcal{C}\left(
\begin{array}
[c]{c}%
P_{1}^{\prime}\left(  \gamma_{11}^{2}+\frac{\gamma_{12}^{2}}{1+\sigma_{2}^{2}%
}\right)  +P_{2}^{\prime}\left(  \gamma_{21}^{2}+\frac{\gamma_{22}^{2}%
}{1+\sigma_{2}^{2}}\right)  \\
+\frac{P_{1}^{\prime}P_{2}^{\prime}(\gamma_{12}\gamma_{21}-\gamma_{11}%
\gamma_{22})^{2}}{1+\sigma_{2}^{2}}%
\end{array}
\right)  ,\\
\mathcal{C}\left(  \gamma_{22}^{2}P_{2}^{\prime}+\gamma_{12}^{2}P_{1}^{\prime
}\right)  +C_{12}%
\end{array}
\right\}  
\end{align}
\begin{align}
R_{0}+R_{1}+R_{2} &  \leq\min\left\{
\begin{array}
[c]{l}%
\mathcal{C}\left(
\begin{array}
[c]{l}%
P_{1}^{\prime}\left(  \gamma_{11}^{2}+\frac{\gamma_{12}^{2}}{1+\sigma_{2}^{2}%
}\right)  +P_{2}^{\prime}\left(  \gamma_{21}^{2}+\frac{\gamma_{22}^{2}%
}{1+\sigma_{2}^{2}}\right)  +\frac{\mathcal{K}}{1+\sigma_{2}^{2}}\\
+\rho_{1}\left(  1+\frac{P_{1}^{\prime}\gamma_{12}^{2}+P_{2}^{\prime}%
\gamma_{22}^{2}}{1+\sigma_{2}^{2}}\right)  +\frac{\rho_{2}}{1+\sigma_{2}^{2}%
}(1+P_{1}^{\prime}\gamma_{11}^{2}+P_{2}^{\prime}\gamma_{21}^{2})\\
-\frac{2\sqrt{\rho_{1}\rho_{2}}(P_{1}\gamma_{11}\gamma_{12}+P_{2}\gamma
_{21}\gamma_{22})}{1+\sigma_{2}^{2}}%
\end{array}
\right)  ,\\
\mathcal{C}(\gamma_{22}^{2}P_{2}^{\prime}+\gamma_{12}^{2}P_{1}^{\prime}%
+\rho_{2})+C_{12}%
\end{array}
\right\}  \bigg\},\nonumber
\end{align}
with (\ref{rho function}) and (\ref{sigmas1}).
\end{subequations}
\end{prop}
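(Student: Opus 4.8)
The plan is to specialize the discrete-memoryless two-round region of Proposition \ref{prop5} to the Gaussian channel (\ref{GCM}), in exactly the way Proposition \ref{prop7} specializes the one-round region (\ref{OS}). First I would fix the input distribution $p(u)p(x_1|u)p(x_2|u)$ to the jointly Gaussian form $X_i = \sqrt{P_i - P_i'}\,U + \sqrt{P_i'}\,V_i$ with $U, V_1, V_2$ independent zero-mean unit-power Gaussians, so that $U$ carries the common component and $V_i$ the private component of user $i$; this is the same choice shown to be optimal for the outer bound in Proposition \ref{prop6}. For the Wyner--Ziv test channels I would again take $\hat{Y}_i = Y_i + Z_{q,i}$ with $Z_{q,i}\sim\mathcal{N}(0,\sigma_i^2)$ independent of $Y_i$, as in the proof of Proposition \ref{prop7}. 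Each of the two sub-regions of Proposition \ref{prop5} is then evaluated term by term under this distribution, and the union over $0\le P_i'\le P_i$ together with the convex hull (time-sharing) produces (\ref{two step Gaussian}).

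Second, I would evaluate the two kinds of mutual-information terms that appear. The terms carrying an additive $C_{12}$ or $C_{21}$ come from a receiver decoding on its own output assisted by the forwarded bin indices; these reduce to ordinary Gaussian-MAC expressions, e.g. $I(X_1;Y_1|X_2U) = \mathcal{C}(\gamma_{11}^2 P_1')$ and $I(X_1X_2;Y_1) = \mathcal{C}(\gamma_{11}^2 P_1' + \gamma_{21}^2 P_2' + \rho_1)$, the common-message contribution entering through $\rho_1 = (\gamma_{11}\sqrt{P_1-P_1'} + \gamma_{21}\sqrt{P_2-P_2'})^2$. The terms involving a compressed observation, such as $I(X_1;Y_2\hat{Y}_1|X_2U)$, are mutual informations of the private signal seen through two independent parallel Gaussian observations --- the direct link $Y_2$ with unit noise and the compressed link $\hat{Y}_1$ with effective noise $1+\sigma_1^2$ --- which combine as $\mathcal{C}(P_1'(\gamma_{12}^2 + \gamma_{11}^2/(1+\sigma_1^2)))$. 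The sum-rate bounds conditioned on $U$ are the corresponding $2\times 2$ log-determinant computations, where the cross term of the determinant yields precisely the MIMO coupling factor $\mathcal{K}/(1+\sigma_i^2)$.

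Third, the admissible compression-noise variances follow from the conferencing-rate constraints $C_{12}\ge I(Y_1;\hat{Y}_1|Y_2)$ and $C_{21}\ge I(Y_2;\hat{Y}_2|Y_1)$ of Proposition \ref{prop5}. Evaluating these Gaussian conditional mutual informations gives $I(Y_1;\hat{Y}_1|Y_2) = \frac{1}{2}\log\big((\mathrm{Var}(Y_1|Y_2)+\sigma_1^2)/\sigma_1^2\big)$, and inverting for $\sigma_1^2$ (with $\mathrm{Var}(Y_1|Y_2)$ read off from the jointly Gaussian pair $(Y_1,Y_2)$) reproduces the bounds (\ref{sigmas1}); since these are identical to the variances already derived in Proposition \ref{prop7}, I would simply invoke them rather than rederive them.

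The main obstacle is the $R_0+R_1+R_2$ bound of the form $I(X_1X_2;Y_2\hat{Y}_1)$, in which $U$, $V_1$ and $V_2$ are all treated as signal and the two observations $Y_2$ and $\hat{Y}_1$ are correlated through the shared common component $U$. This requires expanding the $2\times 2$ covariance determinant of $(Y_2,\hat{Y}_1)$ --- including the cross-correlation induced by $U$ --- and dividing by the noise determinant $1\cdot(1+\sigma_1^2)$, after which one must verify that the result collapses to the stated expression containing $\rho_1$, $\rho_2$, $\mathcal{K}$ and the coupling term $-2\sqrt{\rho_1\rho_2}(P_1'\gamma_{11}\gamma_{12}+P_2'\gamma_{21}\gamma_{22})/(1+\sigma_1^2)$. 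This is a lengthy but mechanical determinant manipulation; the parallel computation for the swapped sub-region $\mathcal{R}_{TR,12}^{\mathcal{G}}$ is then obtained for free by interchanging the roles of the two receivers (indices $1\leftrightarrow 2$, $C_{12}\leftrightarrow C_{21}$, $\sigma_1^2\leftrightarrow\sigma_2^2$).
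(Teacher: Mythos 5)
Your proposal is correct and follows exactly the route the paper intends: the paper gives no explicit proof of this proposition, but it is presented as the direct two-round analogue of Proposition \ref{prop7}, i.e., evaluating the discrete-memoryless region of Proposition \ref{prop5} under the Gaussian inputs $X_i=\sqrt{P_i-P_i'}\,U+\sqrt{P_i'}\,V_i$ and Gaussian test channels $\hat{Y}_i=Y_i+Z_{q,i}$, with the compression-noise constraints (\ref{sigmas1}) carried over unchanged. Your term-by-term evaluation (Gaussian-MAC terms for the binning bounds, parallel-observation and log-determinant computations for the compressed-observation bounds) is precisely the ``algebraic manipulation'' the paper invokes.
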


\subsection{Discussion}

\label{sec_discussion}

Here we draw some conclusions on the optimality of the one and two-round
schemes discussed above for the Gaussian CM channel. We start with the
one-round scheme of Proposition \ref{prop7} and notice that, by comparison
with the outer bound (\ref{outer bound Gaussian}), it can be easily seen that
the scheme at hand is optimal in the asymptotic regime of large conferencing
capacities $C_{12}\rightarrow\infty$ and $C_{21}\rightarrow\infty.$ Further
conclusions on the gap between the upper bound (\ref{outer bound Gaussian})
and the performance achievable with one round of conferencing at the decoders
can be drawn in two special cases. Consider first the case of a broadcast
channel with conferencing encoders \cite{Dabora Servetto 2006} \cite{Draper et
al 2003}, which is obtained as $R_{0}=0$ and $R_{2}=0$ and thus $P_{2}=0$
without loss of generality (a symmetric statement can be straightforwardly
obtained for $R_{0}=0$ and $R_{1}=0$). In this case, we show below that the
one-round scheme achieves the upper bound (\ref{outer bound Gaussian}) to
within half a bit, irrespective of the channel gains of the broadcast channel
and the capacities of the conferencing links. To elaborate, notice that the
outer bound (\ref{outer bound Gaussian}) for the case at hand is given by
\begin{equation}
R_{1}\leq R_{1,out}=\min\{\mathcal{C}(\gamma_{11}^{2}P_{1})+C_{21}%
,\mathcal{C}(\gamma_{12}^{2}P_{1})+C_{12},\mathcal{C}((\gamma_{11}^{2}%
+\gamma_{12}^{2})P_{1})\}, \label{R1out}%
\end{equation}
whereas the rate achievable with one-round conferencing is given by
\begin{equation}
R_{1,OR}=\min\left\{  \mathcal{C}\left(  \gamma_{11}^{2}P_{1}+\frac
{\gamma_{12}^{2}P_{1}}{1+\sigma_{2}^{2}}\right)  ,\mathcal{C}\left(
\gamma_{12}^{2}P_{1}+\frac{\gamma_{11}^{2}P_{1}}{1+\sigma_{1}^{2}}\right)
\right\}  , \label{R1or}%
\end{equation}
where
\begin{align}
\sigma_{1}^{2}  &  = \frac{1+(\gamma_{11}^{2} + \gamma_{12}^{2}) P_{1}%
}{(2^{2C_{12}}-1)(1+\gamma_{12}^{2}P_{1})} ,\nonumber
\end{align}
and
\begin{align}
\sigma_{2}^{2}  &  = \frac{1+(\gamma_{11}^{2} + \gamma_{12}^{2}) P_{1}%
}{(2^{2C_{21}}-1)(1+\gamma_{11}^{2}P_{1})}.\nonumber
\end{align}
Using these two expressions, we can prove the following proposition (see
Appendix \ref{app:9} for a full proof).

\begin{prop}
\label{prop9} We have $R_{1,OR}\geq R_{1,out}-\frac{1}{2}.$ Moreover, for the
symmetric channel case, i.e., $\gamma_{11}^{2}=\gamma_{12}^{2}$, we have
$R_{1,OR}\geq R_{1,out}-\frac{\log3-1}{2}.$
\end{prop}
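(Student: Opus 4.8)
The plan is to bound the two arguments of the minimum defining $R_{1,OR}$ in (\ref{R1or}) separately against the outer bound (\ref{R1out}) and then recombine. Write $a=\gamma_{11}^{2}P_{1}$ and $b=\gamma_{12}^{2}P_{1}$, abbreviate the three terms of (\ref{R1out}) as $S_{1}=\mathcal{C}(a)+C_{21}$, $S_{2}=\mathcal{C}(b)+C_{12}$, $S_{3}=\mathcal{C}(a+b)$, and the two terms of (\ref{R1or}) as $T_{1}=\mathcal{C}(a+\frac{b}{1+\sigma_{2}^{2}})$ and $T_{2}=\mathcal{C}(b+\frac{a}{1+\sigma_{1}^{2}})$. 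Substituting the given $\sigma_{2}^{2}$ one has $1+\sigma_{2}^{2}=\frac{2^{2C_{21}}(1+a)+b}{(2^{2C_{21}}-1)(1+a)}$, and the first step I would carry out is to reduce the argument of $T_{1}$ to the factored form
\[
1+a+\frac{b}{1+\sigma_{2}^{2}}=\frac{2^{2C_{21}}(1+a)(1+a+b)}{2^{2C_{21}}(1+a)+b}.
\]

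The core claim is $T_{1}\geq\min\{S_{1},S_{3}\}-\frac{1}{2}$, proved by splitting on the sign of $S_{1}-S_{3}$, i.e.\ on whether $2^{2C_{21}}(1+a)\lessgtr 1+a+b$. Using the identity above, a direct computation collapses the gaps into single logarithms, $S_{1}-T_{1}=\frac{1}{2}\log\frac{2^{2C_{21}}(1+a)+b}{1+a+b}$ and $S_{3}-T_{1}=\frac{1}{2}\log\big(1+\frac{b}{2^{2C_{21}}(1+a)}\big)$. When $2^{2C_{21}}(1+a)\leq 1+a+b$ (so $\min\{S_{1},S_{3}\}=S_{1}$) the ratio in the first expression is at most $\frac{1+a+2b}{1+a+b}\leq 2$; in the complementary regime (so $\min\{S_{1},S_{3}\}=S_{3}$) the argument $1+\frac{b}{2^{2C_{21}}(1+a)}$ of the second logarithm is below $1+\frac{b}{1+a+b}<2$. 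Either way the gap is at most $\frac{1}{2}\log 2=\frac{1}{2}$. The same argument with $a,b$ and $C_{21},C_{12}$ interchanged gives $T_{2}\geq\min\{S_{2},S_{3}\}-\frac{1}{2}$, and since $\min\{\min\{S_{1},S_{3}\},\min\{S_{2},S_{3}\}\}=\min\{S_{1},S_{2},S_{3}\}=R_{1,out}$, taking the minimum of the two bounds yields $R_{1,OR}\geq R_{1,out}-\frac{1}{2}$.

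For the symmetric refinement I set $a=b$ and assume without loss of generality $C_{21}\leq C_{12}$, so that $\sigma_{2}^{2}\geq\sigma_{1}^{2}$, hence $T_{1}\leq T_{2}$ and $R_{1,OR}=T_{1}$, while $S_{1}\leq S_{2}$ gives $R_{1,out}=\min\{S_{1},S_{3}\}$. Reusing the two exact-gap formulas with $b=a$: in the case $\min\{S_{1},S_{3}\}=S_{1}$ the gap equals $\frac{1}{2}\log\frac{2^{2C_{21}}(1+a)+a}{1+2a}$, which under its defining constraint $2^{2C_{21}}(1+a)\leq 1+2a$ is increasing in $2^{2C_{21}}(1+a)$ and hence maximized at $2^{2C_{21}}(1+a)=1+2a$, giving $\frac{1}{2}\log\frac{1+3a}{1+2a}$; this is increasing in $a$ with supremum $\frac{1}{2}\log\frac{3}{2}=\frac{\log 3-1}{2}$ as $a\to\infty$. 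In the complementary case the gap $\frac{1}{2}\log\big(1+\frac{a}{2^{2C_{21}}(1+a)}\big)$ is bounded by the same value because $\frac{a}{2^{2C_{21}}(1+a)}<\frac{a}{1+2a}<\frac{1}{2}$. This establishes $R_{1,OR}\geq R_{1,out}-\frac{\log 3-1}{2}$.

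The only genuine obstacle is the opening algebraic identity: once $1+a+\frac{b}{1+\sigma_{2}^{2}}$ is put in the factored form, every gap telescopes into a single logarithm and the case analysis is routine, so I would verify that substitution carefully. The secondary point requiring attention is confirming, in the symmetric case, that the worst case of the bound is attained in the limit $a\to\infty$ (equivalently $P_{1}\to\infty$) within the first regime rather than at a finite interior configuration of the parameters.
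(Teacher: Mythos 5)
Your proof is correct; I verified the pivotal factorization $1+a+\frac{b}{1+\sigma_{2}^{2}}=\frac{2^{2C_{21}}(1+a)(1+a+b)}{2^{2C_{21}}(1+a)+b}$, the two exact gap formulas $S_{1}-T_{1}=\frac{1}{2}\log\frac{2^{2C_{21}}(1+a)+b}{1+a+b}$ and $S_{3}-T_{1}=\frac{1}{2}\log\bigl(1+\frac{b}{2^{2C_{21}}(1+a)}\bigr)$, both branches of the case split, the recombination through the minimum, and the symmetric refinement; all hold. Your route differs from the paper's in a genuinely useful way. The paper's proof (Appendix \ref{app:9}) splits into three global cases according to which of the three terms of $R_{1,out}$ is the minimum; its second and third cases carry hypotheses coupling the two conferencing links, such as $(1+P_{a})(1+\breve{C}_{21})\leq(1+P_{b})(1+\breve{C}_{12})$, and in every case both terms of $R_{1,OR}$ must be bounded below by separate inequality chains. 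You instead prove a decoupled per-term lemma, $T_{1}\geq\min\{S_{1},S_{3}\}-\frac{1}{2}$ involving only $C_{21}$ (and its mirror image for $T_{2}$ involving only $C_{12}$), and only then take minima; since $\min\bigl\{\min\{S_{1},S_{3}\},\min\{S_{2},S_{3}\}\bigr\}=R_{1,out}$, the cross-conditions disappear entirely. What this buys: the gaps are exact single logarithms rather than one-sided estimates, so each case reduces to checking that an explicit ratio is at most $2$ (or $3/2$ when $a=b$), and the tightness discussion in the symmetric case --- monotonicity in $2^{2C_{21}}(1+a)$, then in $a$, with the worst case approached only as $a\rightarrow\infty$ --- falls out of the same formulas, where the paper handles the symmetric case by reusing its Case-2 estimate at $P_{a}=P_{b}$ (where $\frac{1+3P}{1+2P}\leq\frac{3}{2}$) and treats the remaining regime only implicitly. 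One cosmetic remark: your ``WLOG $C_{21}\leq C_{12}$'' step is valid because $a=b$ makes both $R_{1,OR}$ and $R_{1,out}$ invariant under swapping $(C_{12},\sigma_{1}^{2})\leftrightarrow(C_{21},\sigma_{2}^{2})$, but it is not actually needed: your two per-term bounds with the improved constant $\frac{1}{2}\log\frac{3}{2}$ already recombine exactly as in the general case.
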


Next, we consider the symmetric Gaussian CM channel, that is, we let $R_{0}%
=0$, $\gamma_{11}^{2}=\gamma_{22}^{2}=a$, $\gamma_{12}^{2}=\gamma_{21}^{2}=b$,
and $P_{1}=P_{2}\triangleq P$. We also assume symmetric conferencing link
capacities $C_{12}=C_{21}\triangleq C$. In such a case, the outer bound and
the achievable rates with one-round conferencing are:
\begin{subequations}
\begin{align}
\mathcal{C}_{CM-out}^{\mathcal{G}}(C)  &  =\{(R_{1},R_{2}):R_{1}\geq0,\text{
}R_{2}\geq0,\nonumber\\
R_{1}  &  \leq\min\{\mathcal{C}(aP)+C,~\mathcal{C}(bP)+C,~\mathcal{C}%
((a+b)P)\},\\
R_{2}  &  \leq\min\{\mathcal{C}(bP)+C,~\mathcal{C}(aP)+C,~\mathcal{C}%
((a+b)P)\},\\
R_{1}+R_{2}  &  \leq\min\{\mathcal{C}((a+b)P)+C,\text{ }\mathcal{C}%
(2(a+b)P+(b-a)^{2}P^{2})\} \},
\end{align}
and
\end{subequations}
\begin{subequations}
\begin{align}
\mathcal{R}_{OR}^{\mathcal{G}}(C)  &  =\left\{  (R_{1},R_{2}):R_{1}%
\geq0,\text{ }R_{2}\geq0,\right. \nonumber\\
R_{1}  &  \leq\min\left\{  \mathcal{C}\left(  \left(  a+\frac{b}{1+\sigma^{2}%
}\right)  P\right)  ,\mathcal{C}\left(  \left(  b+\frac{a}{1+\sigma^{2}%
}\right)  P\right)  \right\}  ,\\
R_{2}  &  \leq\min\left\{  \mathcal{C}\left(  \left(  b+\frac{a}{1+\sigma^{2}%
}\right)  P\right)  ,\mathcal{C}\left(  \left(  a+\frac{b}{1+\sigma^{2}%
}\right)  P\right)  \right\}  ,\\
&  \left.  R_{1}+R_{2}\leq\mathcal{C}\left(  \left(  a+b+\frac{a+b}%
{1+\sigma^{2}}+\frac{(b-a)^{2}P}{1+\sigma^{2}}\right)  P\right)  \right\}  ,
\end{align}
with $\sigma^{2}\triangleq\frac{1+2(a+b)P+(b-a)^{2}P^{2}}{(1+(a+b)P)(2^{2C}%
-1)},$ respectively. The following result can be proved (see Appendix
\ref{app:10}).
\end{subequations}
\begin{prop}
\label{prop10}$\mathcal{R}_{OR}^{\mathcal{G}}\supseteq\{(R_{1},R_{2}%
):R_{1}\geq0,R_{2}\geq0,(R_{1}+\delta,R_{2}+(\Delta-\delta))\in\mathcal{C}%
_{CM-out}^{\mathcal{G}}(C)\mbox{ for all }\delta\in\lbrack0,\Delta]\}$ with
$\Delta=\frac{\log(1+\beta)}{2}$ where $\beta\triangleq\frac{\max(a,b)}%
{\min(a,b)}$. Moreover, in the special case $a=b$, the gap $\Delta$ can be
further reduced to $\Delta=\left(  \frac{\log3-1}{2}\right)  \approx0.293$ bits.
\end{prop}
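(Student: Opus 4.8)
The plan is to reduce the set-containment claim to two scalar ``gap'' inequalities and then dispatch them separately. Throughout write $m=\min(a,b)$, $M=\max(a,b)$, so that $\beta=M/m$ and $\Delta=\tfrac12\log\frac{m+M}{m}$. First I would make the right-hand set explicit. Since $\mathcal{C}_{CM-out}^{\mathcal{G}}(C)$ is the polytope cut out by nonnegativity together with the common individual bound $R_i\le R_{\mathrm{ind}}:=\min\{\mathcal{C}(mP)+C,\ \mathcal{C}((m+M)P)\}$ and the sum bound $R_1+R_2\le R_{\mathrm{sum}}:=\min\{\mathcal{C}((m+M)P)+C,\ \mathcal{C}(2(m+M)P+(M-m)^2P^2)\}$, requiring $(R_1+\delta,R_2+\Delta-\delta)$ to lie in it for \emph{every} $\delta\in[0,\Delta]$ is equivalent (using $\delta=\Delta$ and $\delta=0$ for the two individual caps, and any $\delta$ for the sum cap) to $R_1\le R_{\mathrm{ind}}-\Delta$, $R_2\le R_{\mathrm{ind}}-\Delta$ and $R_1+R_2\le R_{\mathrm{sum}}-\Delta$. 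The region of Proposition \ref{prop7} specialized to the symmetric channel is a polytope of the same shape, with individual cap $R_{\mathrm{ind}}^{OR}=\mathcal{C}\big((m+\tfrac{M}{1+\sigma^2})P\big)$ and the sum cap $R_{\mathrm{sum}}^{OR}$ read off from $\mathcal{R}_{OR}^{\mathcal{G}}(C)$. Because both regions share this structure (nonnegativity, two equal individual caps, one sum cap), containment of the shrunk outer polytope in the achievable one is equivalent to the two scalar inequalities $R_{\mathrm{ind}}^{OR}\ge R_{\mathrm{ind}}-\Delta$ and $R_{\mathrm{sum}}^{OR}\ge R_{\mathrm{sum}}-\Delta$.

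The individual inequality is the easy one and does not even use the explicit value of $\sigma^2$. I would bound the achievable rate against the full-cooperation term alone: since $m+\tfrac{M}{1+\sigma^2}\ge m$, we have $R_{\mathrm{ind}}^{OR}\ge\mathcal{C}(mP)$, so $\mathcal{C}((m+M)P)-R_{\mathrm{ind}}^{OR}\le\mathcal{C}((m+M)P)-\mathcal{C}(mP)=\tfrac12\log\frac{1+(m+M)P}{1+mP}$, and a one-line cross-multiplication shows this is $\le\tfrac12\log\frac{m+M}{m}=\Delta$. Hence $R_{\mathrm{ind}}^{OR}\ge\mathcal{C}((m+M)P)-\Delta$; since $\mathcal{C}((m+M)P)$ dominates the minimum defining $R_{\mathrm{ind}}$, this already gives $R_{\mathrm{ind}}^{OR}\ge R_{\mathrm{ind}}-\Delta$.

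The sum inequality is the crux, and here bounding against a single term fails: the full-cooperation sum cap carries the term $(M-m)^2P^2$, so its gap to $R_{\mathrm{sum}}^{OR}$ grows without bound in $P$. I would therefore split on the threshold $C^{\ast\ast}$ at which the two terms of $R_{\mathrm{sum}}$ coincide, equivalently on whether $\sigma^2$ exceeds the value $\sigma^2_{\ast\ast}=1+\tfrac{D}{Q}>1$ that it takes at $C^{\ast\ast}$, where $D=1+(m+M)P$ and $Q=(m+M)P+(M-m)^2P^2$. Substituting the explicit $\sigma^2$ and simplifying, the local bound $R_{\mathrm{sum}}^{OR}\ge\mathcal{C}((m+M)P)+C-\Delta$ reduces exactly to $\sigma^2\ge 1/\beta$, which holds on the low-$C$ side because there $\sigma^2\ge\sigma^2_{\ast\ast}>1\ge1/\beta$; whereas the full-cooperation bound $R_{\mathrm{sum}}^{OR}\ge\mathcal{C}(2(m+M)P+(M-m)^2P^2)-\Delta$ reduces to $Q\big(m-\tfrac{m+M}{1+\sigma^2}\big)\le MD$, whose left side is monotone increasing in $\sigma^2$, so it suffices to verify it at $\sigma^2=\sigma^2_{\ast\ast}$, where it collapses to a nonpositive quantity bounded by a nonnegative one. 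Since $\sigma^2_{\ast\ast}>1/\beta$, the two ranges overlap and cover all $C$, and in each the active term of the minimum is the one I have bounded. The main obstacle is precisely this bookkeeping: pinning down $\sigma^2_{\ast\ast}$, checking the handoff inequality $\sigma^2_{\ast\ast}>1/\beta$, and verifying the monotone full-cooperation bound at the threshold.

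Finally, for the symmetric-gain refinement $a=b$ (so $\beta=1$, $(M-m)^2=0$, and the crude argument only gives $\Delta=1/2$) the above estimates are loose, since both the step $\mathcal{C}((m+M)P)-\mathcal{C}(mP)\le\Delta$ and the threshold comparison have slack when $M=m$. I would instead maximize the sum-rate gap $R_{\mathrm{sum}}-R_{\mathrm{sum}}^{OR}$ directly over $C$ (equivalently over $\sigma^2$); differentiation yields an interior worst case evaluating to $\tfrac{\log3-1}{2}$, exactly the constant already obtained for the symmetric broadcast channel in Proposition \ref{prop9}, while the individual gap is no larger. This establishes the improved value $\Delta=\tfrac{\log3-1}{2}\approx0.293$.
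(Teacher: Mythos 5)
Your top-level reduction---shrinking the outer polytope by $\Delta$ and reducing containment to the three scalar gap inequalities for $R_1$, $R_2$ and $R_1+R_2$---is exactly the paper's strategy, and your case split for the sum rate at the point where the two terms of $R_{\mathrm{sum}}$ coincide is the same split the paper makes (at $\breve{C}=(B-A)/A$ in its notation, with $A=1+(m+M)P$, $Q=(m+M)P+(M-m)^2P^2$, $B=A+Q$; your $D$ equals this $A$). Two of your steps are in fact cleaner than the paper's: the individual-rate gap via $R^{OR}_{\mathrm{ind}}\geq\mathcal{C}(mP)$ needs no case analysis at all, whereas the paper runs a two-case argument with the explicit $\sigma^2$; and your exact reduction of the local sum bound is correct---multiplying the slack $(1+\beta)\bigl(A+\tfrac{Q}{1+\sigma^2}\bigr)-A(1+\breve{C})$ by $\sigma^2(1+\sigma^2)$ and using $\breve{C}=B/(A\sigma^2)$ yields $(\beta\sigma^2-1)\bigl(A(1+\sigma^2)+Q\bigr)$, so the local bound holds precisely when $\sigma^2\geq 1/\beta$.

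The one genuine flaw is your verification of the full-cooperation bound at the threshold. At $\sigma^2_{\ast\ast}=B/Q$ the left side $Q\bigl(m-\tfrac{m+M}{1+\sigma^2_{\ast\ast}}\bigr)$ evaluates to $\tfrac{Q\,(mA-(M-m)Q)}{A+2Q}$, which is \emph{not} nonpositive in general: for $a=b$ (i.e.\ $M=m$) it equals $\tfrac{QmA}{A+2Q}>0$, so the ``nonpositive bounded by nonnegative'' argument fails exactly in the symmetric case addressed by the second half of the proposition. The inequality itself is still true, and the fix is immediate: either bound $\tfrac{Q(mA-(M-m)Q)}{A+2Q}\leq\tfrac{QmA}{A+2Q}\leq\tfrac{mA}{2}\leq MA$, or---cleaner---observe that at $\sigma^2=\sigma^2_{\ast\ast}$ one has $A(1+\breve{C})=B$, so the full-cooperation constraint coincides with the local constraint there, and the latter holds because $\sigma^2_{\ast\ast}>1\geq 1/\beta$. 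With that patch your handoff argument goes through. Finally, on the $a=b$ refinement (where the paper only says the claim ``can be similarly proved''): maximizing the sum gap over $C$ does give $\tfrac{\log 3-1}{2}$, but not at an interior stationary point found by differentiation; using $A+Q=B$, the gap on the low-$C$ branch simplifies to $\tfrac12\log\bigl(1+\tfrac{A\breve{C}}{B}\bigr)$, increasing in $\breve{C}$, and on the high-$C$ branch to $\tfrac12\log\tfrac{A\breve{C}+B}{A\breve{C}+A}$, decreasing in $\breve{C}$, so the maximum sits at the kink $\breve{C}=Q/A$, equals $\tfrac12\log\bigl(1+\tfrac{Q}{B}\bigr)$, and only approaches $\tfrac12\log\tfrac32$ as $P\to\infty$; the individual-rate gap must also be checked (it is, by the same kink argument, bounded by the same constant).
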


The proposition above is equivalent to saying that the total rate loss of
using one round of conferencing relative to the sum capacity is less than
$\frac{\log(1+\beta)}{2}$, which is a constant that depends only on the
relative qualities of the direct channels and the cross channels.

Let us now consider the two-round scheme of Proposition \ref{prop8}. Since
$\mathcal{R}_{TR}^{\mathcal{G}}(C_{12},C_{21})\supseteq\mathcal{R}%
_{OR}^{\mathcal{G}}(C_{12},C_{21}),$ all the conclusions above on the
one-round scheme apply also to the two-round strategy. Alternatively, we can
interpret these results as a finite bit limit on the potential gain of going
from one round of conferencing to two rounds. Moreover, it should be noted
that the two-round approach was defined as single-session in \cite{Steiner}
and shown therein to be optimal among several classes of multi-session
protocols for a broadcast channel with cooperating decoders. Finally, we can
prove the following.

\begin{prop}
The two-round scheme is optimal in the case of \textit{unidirectional
cooperation}: $\mathcal{R}_{TR}^{\mathcal{G}}(0,C_{21})=\mathcal{C}%
_{CM-out}^{\mathcal{G}}(0,C_{21})$ and $\mathcal{R}_{TR}^{\mathcal{G}}%
(C_{12},0)=\mathcal{C}_{CM-out}^{\mathcal{G}}(C_{12},0),$ thus establishing
the capacity of the Gaussian CM channel for this special case.
\end{prop}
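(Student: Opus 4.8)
The plan is to obtain the result as an easy consequence of the converse already in hand together with a degeneration of the two-round region. The inclusion $\mathcal{R}_{TR}^{\mathcal{G}}(0,C_{21})\subseteq\mathcal{C}_{CM-out}^{\mathcal{G}}(0,C_{21})$ is immediate, since $\mathcal{R}_{TR}^{\mathcal{G}}$ is achievable and Proposition \ref{prop6} supplies the matching outer bound; hence only the reverse inclusion $\mathcal{C}_{CM-out}^{\mathcal{G}}(0,C_{21})\subseteq\mathcal{R}_{TR}^{\mathcal{G}}(0,C_{21})$ requires work. I treat $C_{12}=0$; the case $C_{21}=0$ follows by exchanging the roles of the two receivers (and of $\sigma_1^2,\sigma_2^2$), using $\mathcal{R}_{TR,12}^{\mathcal{G}}$ in place of $\mathcal{R}_{TR,21}^{\mathcal{G}}$. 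Since $\mathcal{R}_{TR}^{\mathcal{G}}\supseteq\bigcup_{P_i'}\mathcal{R}_{TR,21}^{\mathcal{G}}$, it suffices to show that, for each fixed power split $(P_1',P_2')$, the region $\mathcal{R}_{TR,21}^{\mathcal{G}}$ coincides with the corresponding constituent region of $\mathcal{C}_{CM-out}^{\mathcal{G}}(0,C_{21})$.

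First I would note that setting $C_{12}=0$ forces $\sigma_1^2\to\infty$ in $\mathcal{R}_{TR,21}^{\mathcal{G}}$, because the first constraint in (\ref{sigmas1}) has the factor $2^{2C_{12}}-1=0$ in its denominator. Letting $\sigma_1^2\to\infty$ annihilates every term of the form $\gamma^2/(1+\sigma_1^2)$ and $\mathcal{K}/(1+\sigma_1^2)$, so $\mathcal{R}_{TR,21}^{\mathcal{G}}$ collapses to the region defined by $R_1\le\min\{\mathcal{C}(\gamma_{11}^2P_1')+C_{21},\,\mathcal{C}(\gamma_{12}^2P_1')\}$, $R_2\le\min\{\mathcal{C}(\gamma_{21}^2P_2')+C_{21},\,\mathcal{C}(\gamma_{22}^2P_2')\}$, $R_1+R_2\le\min\{\mathcal{C}(\gamma_{11}^2P_1'+\gamma_{21}^2P_2')+C_{21},\,\mathcal{C}(\gamma_{12}^2P_1'+\gamma_{22}^2P_2')\}$, and $R_0+R_1+R_2\le\min\{\mathcal{C}(\gamma_{11}^2P_1'+\gamma_{21}^2P_2'+\rho_1)+C_{21},\,\mathcal{C}(\gamma_{12}^2P_1'+\gamma_{22}^2P_2'+\rho_2)\}$. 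This is exactly the rate region of a MAC with common message in which receiver $1$ is aided by $C_{21}$ bits of binning while receiver $2$ decodes from $Y_2$ alone, which is what one expects when the $1\to2$ link is switched off.

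Next I would compare this collapsed region term by term with the constituent of $\mathcal{C}_{CM-out}^{\mathcal{G}}(0,C_{21})$ at the same $(P_1',P_2')$. In each $\min$ of the outer bound the $C_{21}$-aided ($Y_1$) argument and the bare-$Y_2$ argument (where $C_{12}=0$) reproduce precisely the two arguments present in the collapsed $\mathcal{R}_{TR,21}^{\mathcal{G}}$; it only remains to show that the third, full-cooperation argument is redundant. For the $R_1$, $R_2$ and $R_1+R_2$ constraints this is trivial, since the full-cooperation argument is obtained from the bare-$Y_2$ argument by adding the nonnegative quantities $\gamma_{11}^2P_1'$, $\gamma_{21}^2P_2'$ and $\mathcal{K}$. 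The one nontrivial comparison, which I expect to be the crux, is the $R_0+R_1+R_2$ constraint: writing $A=\gamma_{12}^2P_1'+\gamma_{22}^2P_2'+\rho_2$ for the bare-$Y_2$ argument and $B$ for the full-cooperation argument of (\ref{outer bound Gaussian}), I must verify $B\ge A$. Forming $B-A$, cancelling the common $\rho_2$ and absorbing the cross term $-2\sqrt{\rho_1\rho_2}(P_1'\gamma_{11}\gamma_{12}+P_2'\gamma_{21}\gamma_{22})$ by completing the square gives $B-A=\gamma_{11}^2P_1'+\gamma_{21}^2P_2'+\mathcal{K}+\rho_1+P_1'(\sqrt{\rho_1}\gamma_{12}-\sqrt{\rho_2}\gamma_{11})^2+P_2'(\sqrt{\rho_1}\gamma_{22}-\sqrt{\rho_2}\gamma_{21})^2\ge0$, so the full-cooperation term never binds once $C_{12}=0$.

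Having established that the collapsed $\mathcal{R}_{TR,21}^{\mathcal{G}}$ equals the outer-bound constituent for every $(P_1',P_2')$, I would take the union over $0\le P_i'\le P_i$ on both sides to obtain $\bigcup_{P_i'}\mathcal{R}_{TR,21}^{\mathcal{G}}=\mathcal{C}_{CM-out}^{\mathcal{G}}(0,C_{21})$, whence $\mathcal{R}_{TR}^{\mathcal{G}}(0,C_{21})\supseteq\mathcal{C}_{CM-out}^{\mathcal{G}}(0,C_{21})$. Together with the converse this yields equality, and the mirror-image argument with $C_{21}=0$ (using $\mathcal{R}_{TR,12}^{\mathcal{G}}$ and $\sigma_2^2\to\infty$) completes the proof. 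The only genuine obstacle is the completion-of-squares verification above; everything else is bookkeeping over which of the three $\min$-arguments is active.
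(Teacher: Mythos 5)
Your proposal is correct and follows essentially the same route as the paper: the paper's proof is precisely the statement that the two-round achievable region, when compared against the outer bound (\ref{outer bound Gaussian}), coincides with it under unidirectional cooperation. You have simply carried out the comparison the paper leaves implicit --- the collapse of $\mathcal{R}_{TR,21}^{\mathcal{G}}$ as $\sigma_1^2\to\infty$ when $C_{12}=0$ and the completing-the-square verification that the full-cooperation constraint is redundant --- and both of these details check out.
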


\begin{proof}
This result follows by comparing the achievable region with the outer bound
(\ref{outer bound Gaussian}).
\end{proof}

Next, we comment on the \textit{sum-rate multiplexing gain} of the Gaussian CM
channel. Consider a symmetric system with $P_{1}=P_{2}\triangleq P,$
$\gamma_{11}=\gamma_{22},$ $\gamma_{12}=\gamma_{21},$ and $C_{12}%
=C_{21}\triangleq C.$ We are interested in studying the conditions on the
conferencing capacity $C$ such that the maximum multiplexing gain on the
sum-rate, $\lim_{P\rightarrow\infty}\sup_{(0,R_{1},R_{2})\in\mathcal{C}%
_{CM}^{\mathcal{G}}(C,C)}(R_{1}+R_{2})/(\frac{1}{2}\log P)=2,$\ corresponding
to full cooperation, can be achieved. From the outer bound in
(\ref{outer bound Gaussian}), it is clear that $C$ should scale at least as
$\frac{1}{2}\log P$ as the sum rate is limited by $\mathcal{C}(P(\gamma
_{11}^{2}+\gamma_{21}^{2}))+C.$ By considering the achievable regions with one
(\ref{one step Gaussian}) or two (\ref{two step Gaussian}) conferencing
rounds, it can be also concluded that if $C$ scales as $(1+\epsilon)\log P$
with any $\epsilon>0$, then the optimal multiplexing gain is indeed
achievable. This is because with $C=\frac{1}{2}(1+\epsilon)\log P$ the
quantization noise variances in (\ref{sigmas1}) are proportional to
$P^{-\epsilon}$ and thus tend to zero for large $P.$ It is noted that this
result would hold even if the decoders used regular compression that neglects
the side information at the other decoder, as in this case we would have
$\sigma_{i}^{2}=\frac{\gamma_{11}^{2}P+\gamma_{21}^{2}P+1}{2^{2C}-1},$ which
is still proportional to $P^{-\epsilon}$ for $C=\frac{1}{2}(1+\epsilon)\log
P.$

As a final remark, extending the achievable rates defined above for the
Gaussian channel (and assuming Gaussian channel and compression codebooks as
done above) to more than two conferencing rounds would not lead to any further
gain, since with Gaussian variables, ``conditional'' compression and
compression with side information have the same efficiency (see \cite{Draper
et al 2003} for a discussion).

\subsection{Numerical results}

Since the rate region expressions provided for the outer bound and the
one-round and two-round achievable schemes give little insight, in this
section we present numerical results to see how much gain is obtained via
decoder cooperation. In Fig. \ref{fig1}, we consider a symmetric scenario with
$P_{1}=P_{2}=5~\mathrm{dB}$, $\gamma_{12}^{2}=\gamma_{21}^{2}=-3~\mathrm{dB}$,
$\gamma_{11}^{2}=\gamma_{22}^{2}=0~\mathrm{dB}$, $C_{21}=C_{12}=0.3$, and we
plot the outer bound (\ref{outer bound Gaussian}), the rate region achievable
with one-round (\ref{one step Gaussian}) and two-round
(\ref{two step Gaussian}) conferencing as well as with no cooperation
($C_{12}=C_{21}=0$) (obtained from either (\ref{one step Gaussian}) and
(\ref{two step Gaussian})) for $R_{0}=0$ (so that selecting $P_{i}^{\prime
}=P_{i}$ is sufficient in all the capacity regions). It can be seen that
cooperation via conferencing decoders enables the achievable rate region to be
increased both in terms of sum-rate and individual rates. Moreover, the
two-step strategy provides relevant gains with respect to the one-step
approach, while still not achieving the outer bound
(\ref{outer bound Gaussian}).%

\begin{figure}
\begin{center}
\includegraphics[width=4.0421in]%
{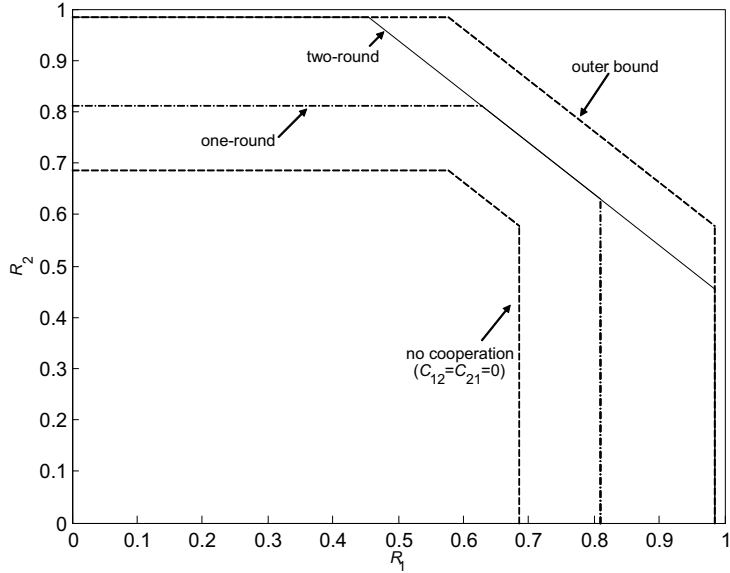}%
\caption{Outer bound (\ref{outer bound Gaussian}), rate region achievable with
one-round ( \ref{one step Gaussian}) and two-round (\ref{two step Gaussian})
strategies and with no cooperation ($C_{12}=C_{21}=0$) for $R_{0}=0,$ and a
symmetric scenario with $P_{1}=P_{2}=5dB,$ $\gamma_{12}^{2}=\gamma_{21}%
^{2}=-3dB,$ $\gamma_{11}^{2}=\gamma_{22}^{2}=0dB,$ $C_{21}=C_{12}=0.5$.}%
\label{fig1}%
\end{center}
\end{figure}
\begin{figure}
\begin{center}
\includegraphics[width=4.0421in]%
{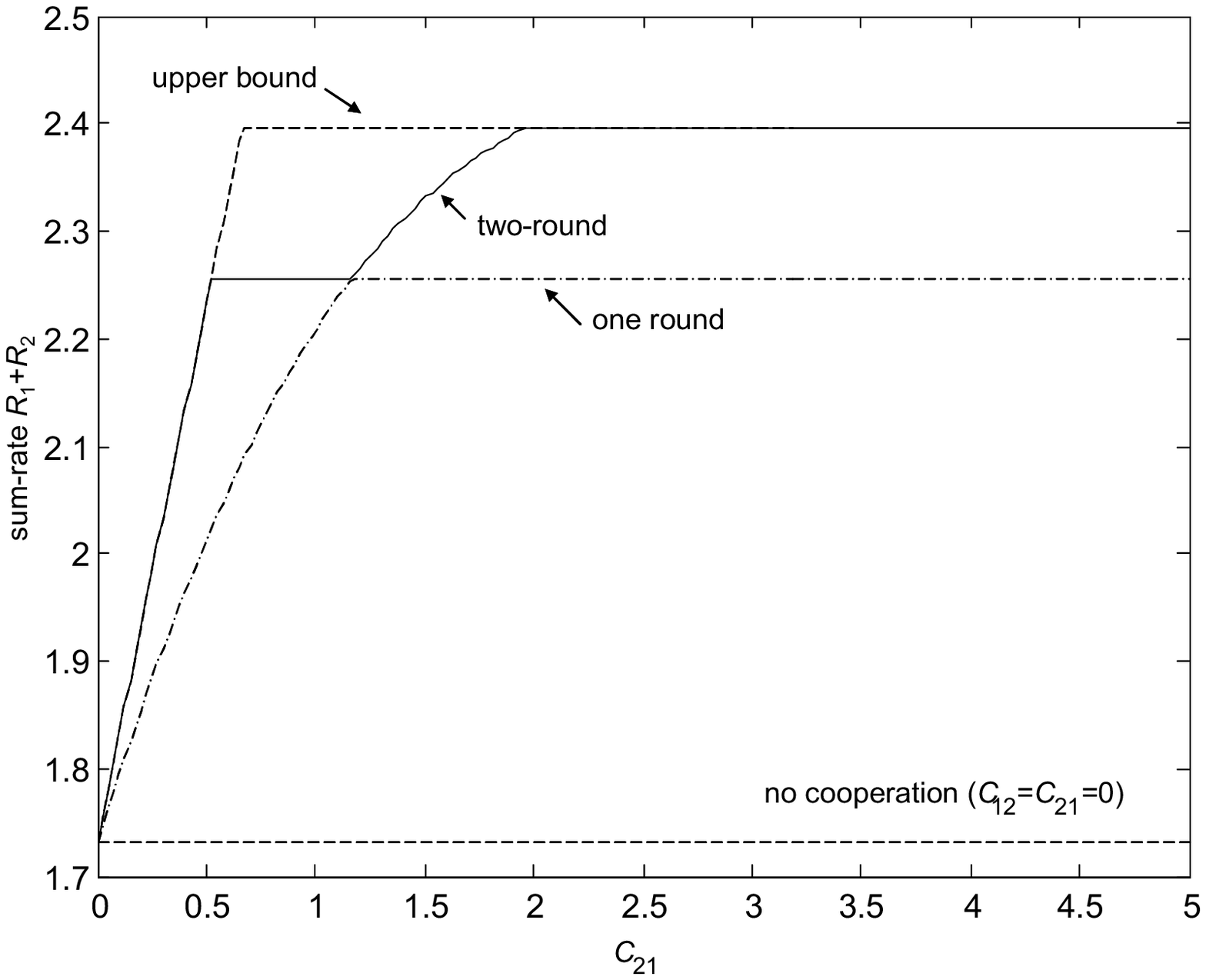}%
\caption{Sum of the private rates $R_{1}+R_{2}$ (with $R_{0}=0)$ versus the
conferencing link capacity $C_{21}$ for the outer bound
(\ref{outer bound Gaussian}), the one-round (\ref{one step Gaussian}) and
two-round (\ref{two step Gaussian}) strategies and with no cooperation
($P_{1}=P_{2}=10dB,$ $\gamma_{12}^{2}=0dB,$ $\gamma_{22}^{2}=0dB,$
$\gamma_{21}^{2}=-3dB,$ $\gamma_{11}^{2}=-3dB,$ $C_{12}=0.2).$}%
\label{fig2}%
\end{center}
\end{figure}
\begin{figure}
\begin{center}
\includegraphics[width=4.0421in]%
{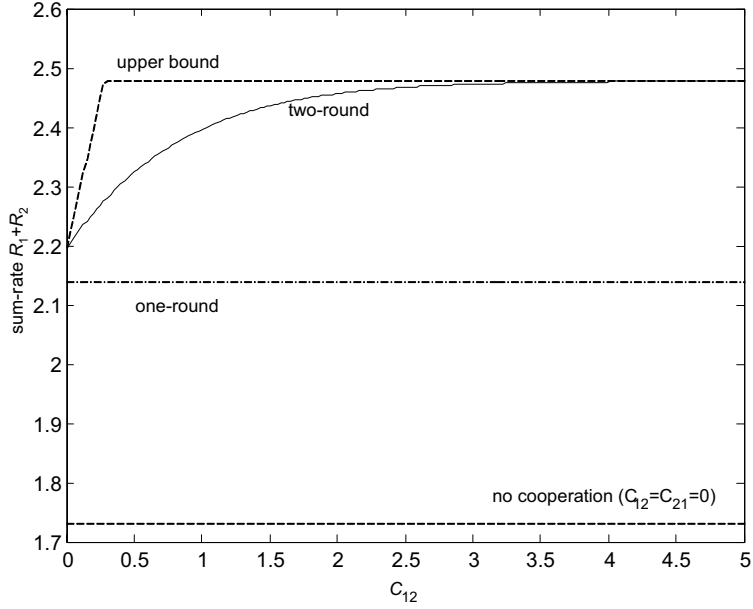}%
\caption{Sum of the private rates $R_{1}+R_{2}$ (with $R_{0}=0)$ versus the
conferencing link capacity $C_{12}$ for the outer bound
(\ref{outer bound Gaussian}), the one-round (\ref{one step Gaussian}) and
two-round (\ref{two step Gaussian}) strategies and with no cooperation
($P_{1}=P_{2}=10dB,$ $\gamma_{12}^{2}=0dB,$ $\gamma_{22}^{2}=0dB,$
$\gamma_{21}^{2}=-3dB,$ $\gamma_{11}^{2}=-3dB,$ $C_{21}=0.8).$}%
\label{fig3}%
\end{center}
\end{figure}

Fig. \ref{fig2} and Fig. \ref{fig3} show the sum of the private rates
$R_{1}+R_{2} $ (with $R_{0}=0$) versus the conferencing link capacities
$C_{21}$ and $C_{12}$, respectively, for the outer bound
(\ref{outer bound Gaussian}), the achievable schemes with one-round
(\ref{one step Gaussian}) and two-round (\ref{two step Gaussian}) conferencing
and with no cooperation. In both figures, we consider cases in which receiver
1 has a worse signal quality than receiver 2 (stochastically degraded):
$P_{1}=P_{2}=10dB,$ $\gamma_{12}^{2}=0dB,$ $\gamma_{22}^{2}=0dB,$ $\gamma
_{21}^{2}=-3dB,$ $\gamma_{11}^{2}=-3dB.$ Fig. \ref{fig2} shows the achievable
sum-rates versus $C_{21}$ for $C_{12}=0.2.$ It is seen that if $C_{21}=0$ the
upper bound coincides with the rate achievable with no cooperation, showing
that if the link from the "good" receiver to the degraded receiver is
disabled, the performance is dominated by the worse receiver and there is no
gain in having $C_{12}>0.$ Increasing $C_{21}$ enables the rate of the worse
receiver to be increased via cooperation, thus harnessing significant gains
with respect to no cooperation. In particular, it is seen that for $C_{21}$
sufficiently small (here $C_{21}\lesssim0.5$) the two-step strategy is
optimal, since in this region the performance is dominated by the worse
receiver whose achievable rate increases linearly with $C_{21}$ due to
cooperation via binning of the message set performed at the good receiver. The
one-step protocol instead lags behind and its performance saturates at
$\mathcal{C}\left(  \gamma_{22}^{2}P_{2}+\gamma_{21}^{2}P_{1}+\frac
{\gamma_{11}^{2}P_{1}+\gamma_{21}^{2}P_{2}}{1+\sigma_{1}^{2}}\right)
\simeq2.26.$ Finally, for sufficiently large $C_{21},$ the achievable sum-rate
at the worse receiver becomes larger than $2.26$ and the performance tends to
the sum-rate of the best receiver, $\mathcal{C}\left(  \gamma_{22}^{2}%
P_{2}+\gamma_{12}^{2}P_{1}\right)  +C_{12}\simeq2.4,$ unless $C_{12}$ is too large.

Further insight is shown in Fig. \ref{fig3} where the rates are plotted versus
$C_{12}$ for $C_{21}=0.8$. We notice that for $C_{12}=0$ only the two-step
protocol is able to achieve the upper bound, since in this regime it is
optimal for the good receiver to decode and bin its decision. Moreover,
similarly, increasing $C_{12}$ enhances the gain of the two-round strategy
over the one-round strategy up to the point where the perfomance is limited by
the sum-rate at the worse receiver, i.e., by $\mathcal{C}\left(  P_{1}\left(
\gamma_{11}^{2}+\gamma_{12}^{2}\right)  +P_{2}\left(  \gamma_{21}^{2}%
+\gamma_{22}^{2}\right)  +P_{1}P_{2}(\gamma_{12}\gamma_{21}-\gamma_{11}%
\gamma_{22})^{2}\right)  \simeq2.48,$ which coincides with the upper bound.

\section{Conferencing encoders and decoders\label{sec_enc dec}}

In this section, we extend the capacity results of Sec. \ref{sec_capacity_1}
to the scenario in Fig. \ref{model1} in which instead of having a common
message (as in the previous sections), the encoders are connected via
conferencing links of capacity $\bar{C}_{12}$ and $\bar{C}_{21}.$ Here, each
encoder has only one message $W_{i}$ of rate $R_{i}$\ ($i=1,2$) to deliver to
both decoders. We refer to this channel as a compound MAC with conferencing
decoders and encoders (for short, the CME channel). Definitions of encoders
and conferencing at the transmission side follows the standard reference
\cite{Willems} (see also \cite{Maric Yates Kramer 2007}). A $((2^{nR_{1}%
},2^{nR_{2}}),n,\bar{K},K)$ code for the CME\ channel consists of $2\bar{K}$
``conferencing'' functions at the encoders, where $\bar{K}$ is the number of
conferencing rounds between the transmitters $(k=1,2,...,\bar{K})$:
\begin{subequations}
\begin{align}
\bar{h}_{1,k}  &  \text{:}\text{ }\mathcal{W}_{1}\times\mathcal{\bar{V}}%
_{2,1}\times\cdots\times\mathcal{\bar{V}}_{2,k-1}\rightarrow\mathcal{\bar{V}%
}_{1,k}\\
\bar{h}_{2,k}  &  \text{:}\text{ }\mathcal{W}_{2}\times\mathcal{\bar{V}}%
_{1,1}\times\cdots\times\mathcal{\bar{V}}_{1,k-1}\rightarrow\mathcal{\bar{V}%
}_{2,k}.
\end{align}
with alphabets $\mathcal{\bar{V}}_{i,k}$ ($k=1,2,...,\bar{K}$) satisfying the
capacity budget on the conferencing links:
\end{subequations}
\begin{equation}%
{\displaystyle\sum\limits_{k=1}^{K}}
|\mathcal{\bar{V}}_{1,k}|\leq n\bar{C}_{12}\text{ and }%
{\displaystyle\sum\limits_{k=1}^{K}}
|\mathcal{\bar{V}}_{2,k}|\leq n\bar{C}_{21},
\end{equation}
and encoding functions:
\begin{subequations}
\begin{align}
f_{1}  &  \text{:}\text{ }\mathcal{W}_{1}\times\mathcal{\bar{V}}_{2}^{\bar{K}%
}\rightarrow\mathcal{X}_{1}^{n}\\
f_{2}  &  \text{:}\text{ }\mathcal{W}_{2}\times\mathcal{\bar{V}}_{1}^{\bar{K}%
}\rightarrow\mathcal{X}_{2}^{n}.\text{ }%
\end{align}
It is noted that encoding takes place after the $\bar{K}$ conferencing rounds
at the transmit side, similar to the operation at the receivers where decoding
occurs after the $K$ decoder-side conferencing rounds. Decoding and
conferencing at the receiver side are defined as in Sec. \ref{sec_model} (by
setting the common message $W_{0}$ to a constant). Achievability of a rate
pair ($R_{1},R_{2}$)\ is defined by requiring the existence of a code with
such rates and with a vanishing probability of error on the two messages
$W_{1}$ and $W_{2}$. The capacity region of the CME channel is denoted as
$\mathcal{C}_{CME}(\bar{C}_{12}^{{}},\bar{C}_{21},C_{12},C_{21}).$

An outer bound can be established similarly to Proposition \ref{prop1}.
\end{subequations}
\begin{prop}
\label{prop12} We have $\mathcal{C}_{CME}(\bar{C}_{12},\bar{C}_{21}%
,C_{12},C_{21})\subseteq\mathcal{C}_{CME-out}(\bar{C}_{12},\bar{C}_{21}%
,C_{12},C_{21})$ with
\begin{align}
\mathcal{C}_{CME-out}(\bar{C}_{12},\bar{C}_{21},C_{12},C_{21}) =\{(R_{1}%
,R_{2})\text{: }  &  ((R_{12}+R_{21}),R_{1}-R_{12},R_{2}-R_{21})\nonumber\\
&  \in\mathcal{C}_{CM-out}(C_{12},C_{21})\text{ where }R_{12}=\min\{R_{1}%
,\bar{C}_{12}\}\nonumber\\
&  \text{ and }R_{21} =\min\{R_{2},\bar{C}_{21}\}\},
\end{align}
where $\mathcal{C}_{CM-out}(C_{12},C_{21})$ is defined in (\ref{outer bound}).
It is shown in \cite{Maric Yates Kramer 2007} that with only conferencing
encoders we have $\mathcal{C}_{CME}(\bar{C}_{12},\bar{C}_{21},0,0)=\mathcal{C}%
_{CME-out}(\bar{C}_{12},\bar{C}_{21},0,0).$
\end{prop}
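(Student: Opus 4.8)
The plan is to establish the outer bound $\mathcal{C}_{CME}(\bar{C}_{12},\bar{C}_{21},C_{12},C_{21})\subseteq\mathcal{C}_{CME-out}(\bar{C}_{12},\bar{C}_{21},C_{12},C_{21})$ by reducing the CME channel to the already-analyzed CM channel. The central observation is that conferencing at the encoders, with link capacities $\bar{C}_{12}$ and $\bar{C}_{21}$, allows each transmitter to share part of its message with the other before transmission. Specifically, encoder~1 can forward a description of at most $\min\{R_1,\bar{C}_{12}\}$ bits of $W_1$ to encoder~2, and encoder~2 can forward at most $\min\{R_2,\bar{C}_{21}\}$ bits of $W_2$ to encoder~1. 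Thus I would define $R_{12}=\min\{R_1,\bar{C}_{12}\}$ and $R_{21}=\min\{R_2,\bar{C}_{21}\}$ and interpret the shared portions as creating an effective common message of rate $R_{12}+R_{21}$, while the residual private rates become $R_1-R_{12}$ and $R_2-R_{21}$.

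The key steps, in order, are as follows. First I would bound the total information the encoders can share across the conferencing links: by the cut-set/data-processing argument on the finite-capacity directed links, after any number $\bar{K}$ of rounds the information that encoder~2 learns about $W_1$ is at most $n\bar{C}_{12}$ bits and symmetrically at most $n\bar{C}_{21}$ bits in the other direction, which caps $R_{12}$ and $R_{21}$ at $\bar{C}_{12}$ and $\bar{C}_{21}$ respectively (and trivially at $R_1$, $R_2$). Second, I would argue that once these shared bits are fixed, the resulting codebook structure is exactly that of a CM channel: the jointly-known $(R_{12}+R_{21})$-bit quantity plays the role of the common message $W_0$, and the unshared bits play the roles of the two private messages of rates $R_1-R_{12}$ and $R_2-R_{21}$. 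Third, invoking the CM outer bound of Proposition~\ref{prop1}, any achievable CME rate pair must satisfy $\big((R_{12}+R_{21}),\,R_1-R_{12},\,R_2-R_{21}\big)\in\mathcal{C}_{CM-out}(C_{12},C_{21})$, which is precisely the claimed characterization. The reference to \cite{Maric Yates Kramer 2007} supplies the specialization $C_{12}=C_{21}=0$ as a consistency check and confirms tightness in that regime.

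The main obstacle will be making the reduction from conferencing encoders to a genuine common message fully rigorous at the converse level. The subtlety is that the encoders' shared information is not a clean ``common message'' selected independently; it is a deterministic function of $W_1$ and $W_2$ produced interactively over $\bar{K}$ rounds, so the induced joint distribution on the effective messages need not factor as the product form assumed in $\mathcal{C}_{CM-out}$. I would address this by applying Fano's inequality at each receiver together with the per-round capacity constraints on the encoder links, introducing auxiliary random variables to represent the exchanged conference symbols and then showing, via a standard single-letterization and the Markov structure of the conferencing protocol, that the mutual-information bounds defining $\mathcal{C}_{CM-out}$ continue to hold after the rate substitution $W_0\leftrightarrow(R_{12}+R_{21})$, $R_i\leftrightarrow R_i-R_{i,\text{shared}}$. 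Since Proposition~\ref{prop1} already carries out the analogous single-letterization for the decoder side, the encoder-side argument should parallel the converse technique of \cite{Willems} for the MAC with conferencing encoders, and the two converses can be combined because the encoder links and decoder links operate orthogonally to the main channel.
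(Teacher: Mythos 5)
Your proposal is correct and follows essentially the same route as the paper: the paper's proof in Appendix E is exactly the Fano-based converse you describe in your final paragraph, bounding the encoder-conferencing contributions by $n\bar{C}_{12}$ and $n\bar{C}_{21}$ and the decoder-conferencing contribution by $nC_{21}$ (resp.\ $nC_{12}$), then single-letterizing following Willems. In particular, your key device of introducing auxiliary random variables for the exchanged encoder-conference symbols is precisely the paper's choice $U_{i}=(\bar{V}_{1}^{\bar{K}},\bar{V}_{2}^{\bar{K}})$, which is what resolves the product-form/common-message subtlety you correctly flagged.
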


\begin{proof}
See Appendix \ref{app:12}.
\end{proof}

The following capacity results can be established similarly to Proposition
\ref{prop2} and \ref{prop3}, respectively.

\begin{prop}
\label{prop13} If the CME\ channel is physically degraded such that
$(X_{1}X_{2})-Y_{1}-Y_{2}$ forms a Markov chain, then the capacity region is
obtained as
\begin{equation}
\mathcal{C}_{CME-DEG}(\bar{C}_{12},\bar{C}_{21},C_{12},C_{21})=\mathcal{C}%
_{CME-out}(\bar{C}_{12},\bar{C}_{21},C_{12},0).
\end{equation}
Notice that here $p^{\ast}(y_{1}y_{2}|x_{1},x_{2})=p(y_{1}|x_{1},x_{2}%
)p(y_{2}|y_{1})$ due to degradedness. A symmetric result holds for the
physically degraded channel $(X_{1}X_{2})-Y_{2}-Y_{1}.$
\end{prop}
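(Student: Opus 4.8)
The plan is to establish Proposition~\ref{prop13} by combining the converse (outer bound) already available from Proposition~\ref{prop12} with a matching achievability argument that mirrors the physically degraded capacity result for the CM channel (Proposition~\ref{prop2}). The key structural observation is the reduction encoded in Proposition~\ref{prop12}: the CME channel with encoder conferencing capacities $(\bar{C}_{12},\bar{C}_{21})$ is equivalent, via a rate-splitting at the transmitters, to a CM channel in which a common message of rate $R_{12}+R_{21}$ is induced (with $R_{12}=\min\{R_{1},\bar{C}_{12}\}$, $R_{21}=\min\{R_{2},\bar{C}_{21}\}$) and the residual private rates are $R_{1}-R_{12}$ and $R_{2}-R_{21}$. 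Thus I would first invoke Proposition~\ref{prop12} to write $\mathcal{C}_{CME}\subseteq\mathcal{C}_{CME-out}(\bar{C}_{12},\bar{C}_{21},C_{12},C_{21})$, and then specialize the converse to the degraded case. Under degradedness $(X_{1}X_{2})-Y_{1}-Y_{2}$, the conferencing link in the ``backward'' direction $C_{21}$ becomes useless: receiver~2's observation $Y_{2}$ is a degraded version of $Y_{1}$, so any message receiver~1 could send to receiver~2 is already computable from $Y_{2}$ up to the relevant mutual-information budget, which is precisely the mechanism that collapsed $C_{21}$ to $0$ in Proposition~\ref{prop2}. I would therefore argue that the outer bound tightens to $\mathcal{C}_{CME-out}(\bar{C}_{12},\bar{C}_{21},C_{12},0)$.

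For achievability, the strategy is to \emph{reuse} the degraded-channel scheme from the proof of Proposition~\ref{prop2}, but prepended with Willems-style encoder conferencing. First the two encoders conference over the links $(\bar{C}_{12},\bar{C}_{21})$: encoder~1 sends $\min\{R_{1},\bar{C}_{12}\}$ bits of (a prefix of) $W_{1}$ to encoder~2 and symmetrically encoder~2 sends $\min\{R_{2},\bar{C}_{21}\}$ bits of $W_{2}$ to encoder~1, so that a portion of each user's message is known to both transmitters. This shared part plays the role of the common message $U$ in the CM construction, of total rate $R_{12}+R_{21}$, while the unshared residuals $R_{1}-R_{12}$ and $R_{2}-R_{21}$ are transmitted as private messages. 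I would then apply verbatim the coding/decoding scheme used for the degraded CM channel: receiver~1 (the stronger receiver) decodes by standard joint typicality, bins its decoded messages, and forwards the bin indices over the decoder link $C_{12}$ to receiver~2, which uses them as side information to finish decoding. Because the channel is degraded, receiver~2 needs no help in the reverse direction, so setting $C_{21}=0$ in the achievable region is without loss. Matching the resulting inner region to the specialized outer bound $\mathcal{C}_{CME-out}(\bar{C}_{12},\bar{C}_{21},C_{12},0)$ closes the proof.

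The main obstacle I anticipate is making the rate-splitting bookkeeping rigorous, i.e.\ verifying that the prefix bits shared by encoder conferencing genuinely behave as the auxiliary/common variable $U$ in the region $\mathcal{C}_{CM-out}$ and that the induced rate triplet $\big((R_{12}+R_{21}),\,R_{1}-R_{12},\,R_{2}-R_{21}\big)$ lands inside the achievable CM region under the \emph{same} input distribution $p(u)p(x_{1}|u)p(x_{2}|u)$. One must check that the conferencing capacities indeed permit transferring exactly $R_{12}$ and $R_{21}$ bits (the $\min$ operations in Proposition~\ref{prop12} are essential here, since a user cannot share more of its own message than it possesses or than its link carries), and that no spurious correlation requirements are imposed beyond those the CM construction already supports. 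A secondary delicate point is justifying the collapse $C_{21}\to 0$ on the achievability side: I would argue, exactly as in Proposition~\ref{prop2}, that the degradedness Markov chain lets the stronger decoder's forwarded bin indices subsume everything the weaker decoder could ever contribute, so the reverse decoder link carries no useful rate. Given that Proposition~\ref{prop2} and Proposition~\ref{prop12} are both in hand, these are essentially verifications rather than new ideas, so I would relegate the detailed calculations to an appendix (in the style of the paper, ``See Appendix'') and state in the body only that achievability and converse follow by combining the encoder-conferencing reduction of Proposition~\ref{prop12} with the degraded-channel scheme of Proposition~\ref{prop2}.
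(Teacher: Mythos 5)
Your proposal is correct and follows essentially the same route as the paper: the paper's converse likewise combines the reasoning of Propositions \ref{prop12} and \ref{prop2} (physical degradedness makes $V_{2}^{K}$, a function of $Y_{2}^{n}$, useless to decoder 1 given $Y_{1}^{n}$, so all $C_{21}$ terms drop), and its achievability likewise runs the Proposition \ref{prop2} scheme with transmission performed according to Willems' optimal one-round encoder-conferencing strategy, i.e., exactly your prefix-sharing construction that turns the exchanged message parts into a common message of rate $R_{12}+R_{21}$. The only blemish is your stated reason for the collapse of $C_{21}$ (``any message receiver 1 could send to receiver 2 is already computable from $Y_{2}$''), which is backwards --- the correct statement, and the one actually used in Proposition \ref{prop2}, is that anything decoder 2 can send to decoder 1 is a function of the degraded observation $Y_{2}^{n}$ and hence, by the Markov chain $(W_{1},W_{2})-Y_{1}^{n}-V_{2}^{K}$, provides no information to decoder 1 beyond what $Y_{1}^{n}$ already supplies.
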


\begin{proof}
The converse follows from the same reasoning used in Proposition \ref{prop2}
and Proposition \ref{prop8}. Achievability is obtained by using a scheme
similar to Proposition \ref{prop2} with the only difference being that here
transmission is performed according to the optimal strategy for a MAC\ with
conferencing encoders \cite{Willems} (see also Theorem 2 in \cite{Maric Yates
Kramer 2007}). It is noted that this strategy requires only one conferencing
round at the encoders, $\bar{K}=1$.
\end{proof}

\begin{prop}
\label{prop14} In the case of unidirectional cooperation at the receiver side
($C_{12}=0$ or $C_{21}=0$), the capacity region is given by, respectively,
\begin{subequations}
\begin{align}
\mathcal{C}_{CME}(\bar{C}_{12},\bar{C}_{21},0,C_{21})  &  =\mathcal{C}%
_{CME-out}(\bar{C}_{12},\bar{C}_{21},0,C_{21})
\end{align}
or
\begin{align}
\mathcal{C}_{CME}(\bar{C}_{12},\bar{C}_{21},C_{12},0)  &  =\mathcal{C}%
_{CME-out}(\bar{C}_{12},\bar{C}_{21},C_{12},0).
\end{align}

\end{subequations}
\end{prop}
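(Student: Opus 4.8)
The plan is to establish the result as the conferencing-encoder counterpart of Proposition~\ref{prop3}, reusing the outer bound of Proposition~\ref{prop12} for the converse and a two-phase coding scheme for achievability. The converse is immediate: Proposition~\ref{prop12} already gives $\mathcal{C}_{CME}(\bar C_{12},\bar C_{21},0,C_{21})\subseteq\mathcal{C}_{CME-out}(\bar C_{12},\bar C_{21},0,C_{21})$, and symmetrically for $C_{21}=0$, so only achievability of the outer bound needs to be shown. I would treat the two cases $C_{12}=0$ and $C_{21}=0$ symmetrically and argue just one.

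For achievability I would exploit the fact that, in the CME channel, the encoder-side conference is completed before any channel symbol is sent, whereas the decoder-side conference begins only after the whole block is received, so the two phases decouple. First I would run a single round of encoder conferencing ($\bar K=1$) with the optimal MAC-with-conferencing-encoders strategy of \cite{Willems}, exactly as in the proof of Proposition~\ref{prop13}: encoder~$1$ relays a portion of $W_1$ of rate $R_{12}=\min\{R_1,\bar C_{12}\}$ to encoder~$2$, and encoder~$2$ relays a portion of $W_2$ of rate $R_{21}=\min\{R_2,\bar C_{21}\}$ to encoder~$1$. After this exchange the two encoders share a common message of rate $R_0=R_{12}+R_{21}$, while encoder~$1$ (resp.~$2$) keeps a private message of rate $R_1-R_{12}$ (resp.~$R_2-R_{21}$). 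The shared bits index a common codeword $U^n$, and the private bits index $X_i^n$ conditioned on $U^n$, so that the induced transmission problem is \emph{exactly} the CM channel of Section~\ref{sec_model} with the product input law $p(u)p(x_1|u)p(x_2|u)$ and the same unidirectional decoder link. I would then invoke Proposition~\ref{prop3} on this induced CM channel: under unidirectional decoder cooperation its outer bound is achievable via the binning/relaying scheme of Proposition~\ref{prop2}, in which the unaided receiver is assisted by bin indices sent over the active conferencing link. Since the conferencing links are noiseless bit pipes, the encoder sharing is error-free once the link budgets are met, so a union bound leaves the overall error governed solely by the CM decoding error, which vanishes by Proposition~\ref{prop3}.

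The final step is to match the achievable set to the outer bound. By the very definition of $\mathcal{C}_{CME-out}$ in Proposition~\ref{prop12}, a pair $(R_1,R_2)$ lies in the outer bound precisely when the maximally-shared triplet $\big(R_{12}+R_{21},\,R_1-R_{12},\,R_2-R_{21}\big)$ lies in $\mathcal{C}_{CM-out}(0,C_{21})$; since Proposition~\ref{prop3} makes this triplet achievable for the induced CM channel, the scheme above reaches every such $(R_1,R_2)$, and the two regions coincide. The step I expect to require the most care is confirming that this \emph{fixed} maximal split is both the one used by the outer bound and rate-optimal, i.e., that converting the shared bits into common-message rate $R_0$ (rather than splitting differently or leaving the links partly idle) never falls short of the bound; this is forced by Proposition~\ref{prop12}, which already excludes any achievable point outside the maximally-shared image, so the achievability and converse descriptions must agree by construction. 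The remaining subtlety is verifying that the \cite{Willems} common-codeword construction yields exactly the conditionally independent inputs $p(x_1|u)p(x_2|u)$ demanded by $\mathcal{C}_{CM-out}$, which is precisely where the one-round sufficiency at the encoders is essential.
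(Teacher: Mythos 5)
Your proposal is correct and follows essentially the same route as the paper: the converse is immediate from the outer bound of Proposition~\ref{prop12}, and achievability combines the one-round ($\bar K=1$) Willems encoder-conferencing strategy (as in Proposition~\ref{prop13}) with the decoder-side binning scheme underlying Propositions~\ref{prop2} and~\ref{prop3}, applied to the induced CM channel with common-message rate $R_{12}+R_{21}$. The paper's proof is a one-line reference to Propositions~\ref{prop3} and~\ref{prop13}; your write-up simply makes explicit the same decomposition, including the matching of the maximal message split $R_{12}=\min\{R_1,\bar C_{12}\}$, $R_{21}=\min\{R_2,\bar C_{21}\}$ to the definition of $\mathcal{C}_{CME-out}$.
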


\textit{Proof}: The proof is similar to those of Proposition \ref{prop3} and
Proposition \ref{prop13}.

It is finally noted that the outer bound and achievable rates derived in Sec.
\ref{sec_achievability} and Section \ref{sec_Gaussian} can also be extended to
the CME\ channel and the Gaussian CME\ channel (\ref{GCM}) following the same
approach used to derive Propositions \ref{prop13} and Proposition
\ref{prop14}, that is, by considering the optimal coding strategy for the MAC
with conferencing encoders \cite{Willems} (which requires $\bar{K}=1$). In
terms of the rate regions, this simply amounts to using the same
transformation from $(R_{0},R_{1},R_{2})$ to ($R_{1},R_{2}$) discussed above
(see also \cite{Maric Yates Kramer 2007}). For instance, an outer bound on the
Gaussian capacity region $\mathcal{C}_{CME}^{\mathcal{G}}(\bar{C}_{12},\bar
{C}_{21},C_{12},C_{21})$ can be obtained as
\begin{align}
\mathcal{C}_{CME-out}^{\mathcal{G}}(\bar{C}_{12},\bar{C}_{21},C_{12},C_{21})
=\{(R_{1},R_{2})\text{: }  &  ((R_{12}+R_{21}),R_{1}-R_{12},R_{2}%
-R_{21})\nonumber\\
&  \in\mathcal{C}_{CM-out}^{\mathcal{G}}(C_{12},C_{21})\text{ where }%
R_{12}=\min\{R_{1},\bar{C}_{12}\}\nonumber\\
&  \text{and }R_{21} =\min\{R_{2},\bar{C}_{21}\}\}, \label{CME G 1}%
\end{align}
and similarly for the rate regions achievable with the one-round and two-round
receiver-side conferencing strategies ((\ref{one step Gaussian}) and
(\ref{two step Gaussian})) coupled with the optimal transmit cooperation
\cite{Willems}.

\begin{rem}
(\textit{Conferencing encoders vs. conferencing decoders}) While no general
capacity results have been derived that enable a conclusive comparison between
the performance of conferencing encoders or decoders in the compound multiple
access channel, some basic conclusions can be drawn based on the analysis
above. To start with, conferencing decoders tend to behave like a
multi-antenna receiver for large conferencing capacities and thus, as
discussed in Sec. \ref{sec_Gaussian}, have the potential for increasing the
multiplexing gain of the sum-rate up to the maximum value of two. In contrast,
it can be seen from the outer bound (\ref{CME G 1}) that conferencing at the
encoders alone does not have such a potential advantage, as the coherent power
combining afforded by cooperating encoders is not enough to increase the
multiplexing gain of the system\footnote{It is noted that this conclusion
would be significantly different for an interference channel, since in this
case conferencing at the encoders has the capability of creating an equivalent
two-antenna broadcast channel with single-antenna receivers, whose
multiplexing gain is known to be two.}. However, this does not necessarily
mean that decoder conferencing is always to be preferred to encoder
conferencing. Consider for instance the case of unidirectional links, where
say $\bar{C}_{21}= C_{21}=0,$ so that conferencing links exist only from
encoder 1 to encoder 2 on the transmit side and from decoder 1 to decoder 2 on
the receive side. In this case, the capacity region is given in Proposition
\ref{prop14}, and one can see that, e.g., for a symmetric system ($\gamma
_{11}^{2}=\gamma_{22}^{2}$, $\gamma_{21}^{2}=\gamma_{12}^{2}$ and $P_{1}%
=P_{2}$), the conferencing link at the decoders alone never helps increase the
achievable rates, while the conferencing link at the transmit side can always
enlarge the achievable rate region. Further performance comparison is carried
out numerically below.
\end{rem}

%

\begin{figure}
\begin{center}
\includegraphics[width=4.0421in]%
{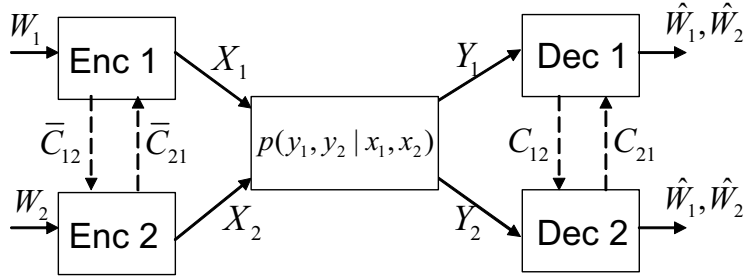}%
\caption{A discrete-memoryless compound MAC channel with conferencing decoders
and encoders (for short, CME).}%
\label{model1}%
\end{center}
\end{figure}

\subsection{Numerical results}

In this section, we present a numerical example related to the scenario in
Fig. \ref{model1} for the Gaussian CME channel (\ref{GCM}). Fig.
\ref{region 2} shows the outer bound (\ref{CME G 1}) evaluated for
encoder-side ($\bar{C}_{12}=\bar{C}_{21}=0$), decoder-side ($C_{12}=C_{21}=0$)
or both-side conferencing, along with the rate regions achievable with
one-round and two-round strategies and with no cooperation for $P_{1}%
=P_{2}=5dB,$ $\gamma_{12}^{2}=\gamma_{21}^{2}=-3dB,$ $\gamma_{11}^{2}%
=\gamma_{22}^{2}=0dB,$ and conferencing capacities (when non-zero) $\bar
{C}_{21}=\bar{C}_{12}=C_{21}=C_{12}=0.3$. Considering first the outer bounds,
it can be seen that both conferencing at the encoders and decoders have the
same potential in terms of increasing the rates $R_{1}$ and $R_{2},$ whereas
for this example the outer bound corresponding to decoder-side cooperation
leads to a larger sum-rate $R_{1}+R_{2}.$ Comparison of achievable rates via
one or two rounds of conferencing at the receiver side (recall that one round
of encoder conferencing is enough to achieve all the rate points discussed
here) is similar to that seen in Fig. \ref{fig1}.%

\begin{figure}
\begin{center}
\includegraphics[width=4.2384in]%
{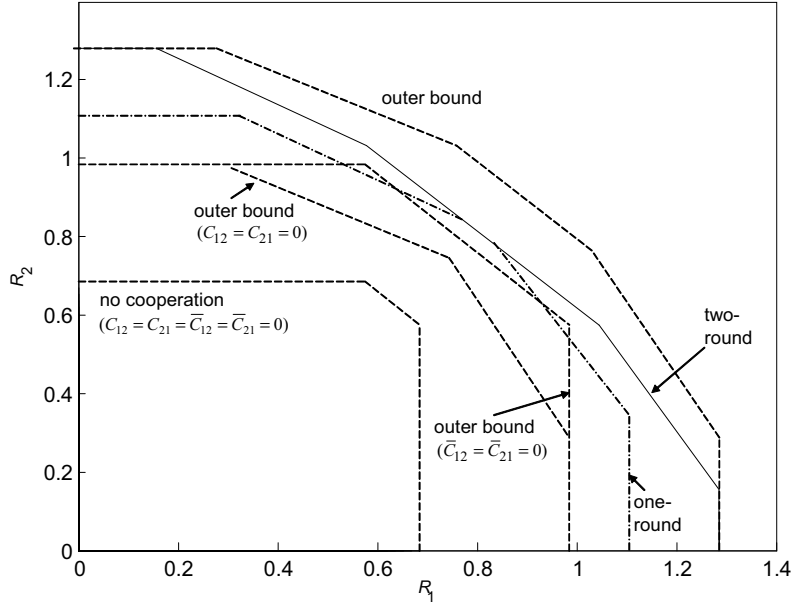}%
\caption{Outer bound (\ref{CME G 1}) evaluated for encoder-side ($\bar{C}%
_{12}=\bar{C}_{21}=0$), decoder-side ($C_{12}=C_{21}=0)$ or both-side
conferencing, along with the rate regions achievable with one-round and
two-round strategies and with no cooperation for $P_{1}=P_{2}=5dB,$
$\gamma_{12}^{2}=\gamma_{21}^{2}=-3dB,$ $\gamma_{11}^{2}=\gamma_{22}^{2}=0dB,$
and conferencing capacities (when non-zero) $\bar{C}_{21}=\bar{C}_{12}%
=C_{21}=C_{12}=0.3$. }%
\label{region 2}%
\end{center}
\end{figure}

\section{Conclusions}

The model of conferencing encoders and/or decoders is a convenient framework
that allows evaluation of the potential gains arising from cooperation at the
transmitter or receiver side in a wireless network. From a practical
standpoint, it accounts for scenarios where out-of-band signal paths exist at
the two ends of a communication link, as is the case in wireless communication
systems where nodes are endowed with multiple radio interfaces. In this work,
we have contributed to the state of knowledge in this area by investigating a
compound MAC with conferencing decoders and, possibly, encoders. The compound
MAC can be seen as a combination of two single-message broadcast (multicast)
channels from the standpoint of the transmitters, or two multiple access
channels as seen by the receivers. The scenario at hand generalizes a number
of previously studied setups, such as MAC\ or compound MAC with common message
or conferencing encoders and single-message broadcast channel with two
conferencing decoders. A number of capacity results have been derived that
have shed light on the impact of decoder and encoder conferencing on the
capacity of the compound MAC. Among the conclusions, we have shown that in a
compound Gaussian MAC, one round of conferencing at the decoders achieves the
entire capacity region within a constant number of bits/s/Hz in several
special cases. One round of conferencing at the transmitters is also optimal
in all the cases where the capacity region is known. Moreover, comparing the
performance of conferencing at the encoders and decoders, it has been pointed
out that examples can be constructed where either one outperforms the other.
However, in the Gaussian case, while conferencing at the decoders has the
potential of increasing the sum-rate multiplexing gain to the optimal value of
two by mimicking a multiantenna receiver, the same is not true of conferencing
encoders, since coherent power combining afforded by cooperating encoders is
not enough to increase the multiplexing gain beyond one (recall that the two
decoders must estimate both messages).

As a possible extension of this work we mention the study of an interference
channel, rather than the compound MAC, with conferencing decoders. As already
pointed out in the paper, some of the conclusions here would be significantly
different in this case, and the analysis could benefit from the techniques
used in \cite{Avestimehr} \cite{Avestimehr 1} to study interference channels
with no cooperation.

\appendices

\section{Proof of Proposition \ref{prop1}}

\label{app:1}

In order for rates ($R_{0},R_{1},R_{2}$) to be achievable, the probability of
error $P_{e}$ needs to satisfy (\ref{Pe1}) which, by the union bound, is
implied by $P_{e,i}\leq\varepsilon/2$ for $i=1,2$ with \[P_{e,1}=\frac
{1}{2^{n(R_{0}+R_{1}+R_{2})}}\sum_{\bar{w}\in\mathcal{W}_{0}\times
\mathcal{W}_{1}\times\mathcal{W}_{2}}\Pr[h_{1}(Y_{1}^{n},V_{2}^{k})\neq\bar
{w}|\bar{w}\mbox{ sent}] \] and similarly for $P_{e,2}.$ Consider the first
receiver. By Fano's inequality, we have
\begin{equation}
H(W_{0},W_{1},W_{2}|Y_{1}^{n},V_{2}^{K})\leq H(P_{e,1})+n(R_{0}+R_{1}%
+R_{2})P_{e,1}\triangleq n\delta_{n} \label{ineq1}%
\end{equation}
with $\delta_{n}\rightarrow0$ as $n\rightarrow\infty.$ It also follows that
\begin{subequations}
\label{ineq3}%
\begin{align}
H(W_{1}W_{2}|Y_{1}^{n},V_{2}^{K},W_{0})  &  \leq n\delta_{n},\label{ineq2}\\
H(W_{1}|Y_{1}^{n},V_{2}^{K},W_{0},W_{2})  &  \leq n\delta_{n} \mbox{ and }\\
H(W_{2}|Y_{1}^{n},V_{2}^{K},W_{0},W_{1})  &  \leq n\delta_{n}.\text{ }%
\end{align}
Now, from (\ref{ineq1}), we have
\end{subequations}
\begin{align*}
n(R_{0}+R_{1}+R_{2})  &  \leq I(W_{0},W_{1},W_{2};Y_{1}^{n},V_{2}^{K}%
)+n\delta_{n}\\
&  \leq I(W_{0},W_{1},W_{2};Y_{1}^{n})+I(W_{0},W_{1},W_{2};V_{2}^{K}|Y_{1}%
^{n})+n\delta_{n}\\
&  \overset{(a)}{\leq}I(W_{0},W_{1},W_{2};Y_{1}^{n})+nC_{21}+n\delta_{n}\\
&  \overset{(b)}{\leq}^{{}}%
{\displaystyle\sum\limits_{i=1}^{n}}
I(X_{1,i},X_{2,i};Y_{1,i})+nC_{21}+n\delta_{n},
\end{align*}
where (a) follows from the fact that $I(W_{0},W_{1},W_{2};V_{2}^{K}|Y_{1}%
^{n})\leq H(V_{2}^{K})\leq nC_{21}$ and (b) is obtained similarly to
\cite{Willems}, Sec. 3.4$.$ From (\ref{ineq3}), using similar arguments as in
the above chain of inequalities, one can also obtain
\begin{align*}
n(R_{1}+R_{2})  &  \leq%
{\displaystyle\sum\limits_{i=1}^{n}}
I(X_{1,i},X_{2,i};Y_{1,i}|W_{0})+nC_{21}+n\delta_{n},\\
nR_{1}  &  \leq%
{\displaystyle\sum\limits_{i=1}^{n}}
I(X_{1,i};Y_{1,i}|X_{2,i},W_{0})+nC_{21}+n\delta_{n} \mbox{ and }\\
nR_{2}  &  \leq%
{\displaystyle\sum\limits_{i=1}^{n}}
I(X_{2,i};Y_{1,i}|X_{1,i},W_{0})+nC_{21}+n\delta_{n}.
\end{align*}
Now defining $U_{i}=W_{0}$, the proof is completed as in \cite{Maric Yates
Kramer 2007}. We can repeat the same arguments for receiver 2. Also the
condition that $(R_{0},R_{1},R_{2})\in\mathcal{R}_{MAC,FC}$ follows similarly
considering full cooperation between the receivers.

\section{Proof of Proposition \ref{prop2}}

\label{app:2}

\textit{Converse}: The converse follows immediately from Proposition
\ref{prop1} and the data processing theorem. In fact, it is easy to see that,
because of physical degradedness, receiver 1 cannot benefit from $V_{2}^{K},$
which is a function of $Y_{2}^{n}$ and $Y_{1}^{n}$ via $V_{1}^{k}.$ For
instance, condition (\ref{ineq1}) now becomes%
\[
H(W_{0},W_{1},W_{2}|Y_{1}^{n})=H(W_{0},W_{1},W_{2}|Y_{1}^{n},V_{2}^{K})\leq
H(P_{e,1})+n(R_{1}+R_{2})P_{e,1}\triangleq n\delta_{n},
\]
due to the Markov chain $(W_{0},W_{1},W_{2})-Y_{1}^{n}-V_{2}^{K}$. Repeating
the same arguments for the other conditions (\ref{ineq3}), the converse is
then completed as in Proposition \ref{prop1}.

\textit{Achievability}: Codeword generation at the transmitters is performed
as for the MAC with common information \cite{Willems} \cite{Slepian Wolf}:

Generate $2^{nR_{0}}$ sequences $u^{n}(w_{0})$ of length $n$, with the
elements of each being chosen independent identically distributed (i.i.d.)
according to the distribution $p(u),$ $w_{0}\in\mathcal{W}_{0}.$ For any
sequence $u^{n}(w_{0}),$ generate $2^{nR_{i}}$ independent sequences
$x_{i}^{n}(w_{0},w_{i}),$ $w_{i}\in\mathcal{W}_{i},$ again i.i.d. according to
$p(x_{i}|u_{i}(w_{0})),$ for $i=1,2$.

At receiver 1, the message sets $\mathcal{W}_{0},$ $\mathcal{W}_{1}$ and
$\mathcal{W}_{2}$ are partitioned into $2^{n\alpha_{0}C_{12}}$, $2^{n\alpha
_{1}C_{12}}$ and $2^{n\alpha_{2}C_{12}}$ subsets$,$ respectively, for given
$0\leq\alpha_{i}\leq1$ and $\sum_{i=0}^{2}\alpha_{i}=1$.\ This is done by
assigning each codeword in the message sets $\mathcal{W}_{0},$ $\mathcal{W}%
_{1}$ and $\mathcal{W}_{2}$ independently and randomly to the index sets
$\{1,2,...,2^{n\alpha_{0}C_{12}}\}$, $\{1,2,...,2^{n\alpha_{1}C_{12}}\}$ and
$\{1,2,...,2^{n\alpha_{2}C_{12}}\},$\ respectively.

\textit{Encoding} at transmitter $i$ is performed by sending codeword
$x_{i}^{n}(w_{0},w_{i})$ corresponding to the common message $w_{0}%
\in\mathcal{W}_{0}$ and local message $w_{i}\in\mathcal{W}_{i}$ $(i=1,2).$
Encoding at decoder 1 takes place after detection of the two messages $W_{0},
$ $W_{1}$ and $W_{2}$ (see description of decoding below). In particular,
decoder 1 sends over the conferencing link 1-2 the indices of the subsets in
which the estimated messages $W_{0},$ $W_{1}$ and $W_{2}$ lie. Notice that
this requires $nC_{12}$ bits and $K=1$ conferencing rounds (i.e.,
$|\mathcal{V}_{1,1}|=nC_{12}$). Also we emphasize again that the conferencing
link 2-1 is not used ($|\mathcal{V}_{2,k}|=0$).

\textit{Decoding} at the first decoder is carried out by finding jointly
typical sequences $(u^{n}(w_{0}),$ $x_{1}^{n}(w_{0},w_{1}),$ $x_{2}^{n}%
(w_{0},w_{2}),$ $y_{1}^{n})$ with $w_{i}\in\mathcal{W}_{i}$ \cite{Cover}$.$ As
discussed above, once the first decoder has obtained the messages $W_{0},$
$W_{1}$ and $W_{2}$, it sends the corresponding subset indices to receiver 2
over the conferencing channels. Decoding at receiver 2 then takes place again
based on a standard MAC joint-typicality encoder with the caveat that the
messages $W_{0},$ $W_{1}$ and $W_{2}$ are now known to belong to the reduced
set given by the subsets mentioned above.

The\textit{\ analysis of the probability of error} follows immediately from
\cite{Willems} \cite{Slepian Wolf}. In particular, as far as receiver 1 is
concerned, it can be seen from \cite{Willems} \cite{Slepian Wolf} that a
sufficient condition for the probability of error to approach zero as
$n\rightarrow\infty$ is given by $(R_{0},R_{1},R_{2})\in\mathcal{R}%
_{MAC,1}(p(u),p(x_{1}|u),p(x_{2}|u)).$ Considering receiver 2, a sufficient
condition is that the rates belong to the region
\begin{subequations}
\label{alphas}%
\begin{align}
\{(R_{0},R_{1},R_{2})  &  \text{:}\text{ }R_{0}\geq0,\text{ }R_{1}\geq0,\text{
}R_{2}\geq0,\\
R_{1}  &  \leq I(X_{1};Y_{2}|X_{2}U)+\alpha_{1}C_{12}\\
R_{2}  &  \leq I(X_{2};Y_{2}|X_{1}U)+\alpha_{2}C_{12}\\
R_{1}+R_{2}  &  \leq I(X_{1}X_{2};Y_{2}|U)+(\alpha_{1}+\alpha_{2})C_{12}\\
R_{0}+R_{1}+R_{2}  &  \leq I(X_{1}X_{2};Y_{2})+C_{12}\},
\end{align}
for the given $\alpha_{i}.$ Taking the union over all allowed $\alpha_{i}$ in
(\ref{alphas}) concludes the proof.

\section{Proof of Proposition \ref{prop9}}

\label{app:9}

We first prove that $R_{1,OR}\geq R_{1,out}-\frac{1}{2}.$ We consider three
separate cases and show that the statement of the theorem holds for each case
separately. We define $P_{a}\triangleq\gamma_{11}^{2}P_{1}$, $P_{b}%
\triangleq\gamma_{12}^{2}P_{1},$ $\breve{C}_{12}\triangleq2^{2C_{12}}-1$ and
$\breve{C}_{21}\triangleq2^{2C_{21}}-1$ for simplicity of notation. We remark
that using this notation the compression noises (\ref{sigmas1}) can be written
for the case at hand as $\sigma_{1}^{2}=\frac{1+P_{a}+P_{b}}{(1+P_{b}%
)\breve{C}_{12}}$ and $\sigma_{2}^{2}-\frac{1+P_{a}+P_{b}}{(1+P_{a})\tilde
{P}_{21}}.$

\textit{Case 1}: Let
\end{subequations}
\begin{align}
\breve{C}_{21}  &  \geq\frac{P_{b}}{1+P_{a}} \label{c1_c1}%
\end{align}
and
\begin{align}
\breve{C}_{12}  &  \geq\frac{P_{a}}{1+P_{b}}. \label{c1_c2}%
\end{align}
In this case, the upper bound (\ref{R1out}) is $R_{1,out}=\frac{1}{2}%
\log(1+P_{a}+P_{b})$ and for the achievable rate with one-round conferencing
(\ref{R1or}) we have
\begin{align}
\mathcal{C}\left(  P_{a}+\frac{P_{b}}{1+\sigma_{2}^{2}}\right)   &  =\frac
{1}{2}\log\left(  1+P_{a}+\frac{P_{b}}{1+\frac{1+P_{a}+P_{b}}{\tilde{P}%
_{21}(1+P_{a})}}\right) \nonumber\\
&  \geq\frac{1}{2}\log\left(  1+P_{a}+\frac{P_{b}^{2}}{1+P_{a}+2P_{b}}\right)
\label{c1_e1}
\end{align}
\begin{align}
&  =\frac{1}{2}\log\left(  \frac{(1+P_{a}+P_{b})^{2}}{1+P_{a}+2P_{b}}\right)
\nonumber\\
&  =\frac{1}{2}\log\left(  1+P_{a}+P_{b}\right)  +\frac{1}{2}\log\left(
\frac{1+P_{a}+P_{b}}{1+P_{a}+2P_{b}}\right) \nonumber\\
&  \geq R_{out}-\frac{1}{2}.\nonumber
\end{align}
where (\ref{c1_e1}) follows from (\ref{c1_c1}). Similarly, using
(\ref{c1_c2}), we can also show that $\mathcal{C}\left(  P_{b}+\frac{P_{a}%
}{1+\sigma_{1}^{2}}\right)  \geq R_{out}-\frac{1}{2}.$ It then follows that
$R_{1,out}\geq R_{out}-\frac{1}{2}.$

\textit{Case 2}: Now,\emph{\ }let
\begin{align}
\breve{C}_{21}  &  \leq\frac{P_{b}}{1+P_{a}} \label{c2_c1}%
\end{align}
and
\begin{align}
(1+P_{a})(1+\breve{C}_{21})  &  \leq(1+P_{b})(1+\breve{C}_{12}). \label{c2_c2}%
\end{align}
In this case, we have $R_{1,out}=\frac{1}{2}\log(1+P_{a})(1+\breve{C}_{21})$
and
\begin{align}
\mathcal{C}\left(  P_{a}+\frac{P_{b}}{1+\sigma_{2}^{2}}\right)   &  =\frac
{1}{2}\log\left(  1+P_{a}+\frac{P_{b}}{1+\frac{1+P_{a}+P_{b}}{\tilde{P}%
_{21}(1+P_{a})}}\right) \nonumber\\
&  =\frac{1}{2}\log\left(  1+P_{a}\right)  +\frac{1}{2}\log\left(
1+\frac{P_{b}\breve{C}_{21}}{(1+P_{a})\breve{C}_{21}+(1+P_{a}+P_{b})}\right)
\nonumber\\
&  =\frac{1}{2}\log\left(  1+P_{a}\right)  +\frac{1}{2}\log\left(
\frac{(1+P_{a}+P_{b})(1+\breve{C}_{21})}{(1+P_{a})\breve{C}_{21}%
+(1+P_{a}+P_{b})}\right) \nonumber\\
&  =\frac{1}{2}\log\left(  1+P_{a}\right)  \left(  1+\breve{C}_{21}\right)
+\frac{1}{2}\log\left(  \frac{1+P_{a}+P_{b}}{(1+P_{a})\breve{C}_{21}%
+(1+P_{a}+P_{b})}\right) \nonumber\\
&  \geq R_{out}+\frac{1}{2}\log\left(  \frac{1+P_{a}+P_{b}}{1+P_{a}+2P_{b}%
}\right) \label{c2_e1}\\
&  \geq R_{out}-\frac{1}{2}, \label{c2_e1b}%
\end{align}
where (\ref{c2_e1}) follows from (\ref{c2_c1}). On the other hand, we also
have
\begin{align}
\mathcal{C}\left(  P_{b}+\frac{P_{a}}{1+\sigma_{1}^{2}}\right)   &  =\frac
{1}{2}\log\left(  1+P_{b}+\frac{P_{a}}{1+\frac{1+P_{a}+P_{b}}{\tilde{P}%
_{12}(1+P_{b})}}\right) \nonumber\\
&  =\frac{1}{2}\log\left[  (1+P_{b})\left(  1+\frac{P_{a}}{1+P_{b}%
+\frac{1+P_{a}+P_{b}}{\breve{C}_{21}}}\right)  \right] \nonumber
\end{align}
\begin{align}
&  \geq\frac{1}{2}\log\left[  (1+P_{b})\left(  1+\frac{P_{a}[(1+P_{a}%
)(1+\breve{C}_{21})-(1+P_{b})]}{(1+P_{b})[(1+P_{a})(1+\breve{C}_{21})+P_{a}%
]}\right)  \right] \label{c2_e2}\\
&  =\frac{1}{2}\log\left[  \frac{(1+P_{a})(1+\breve{C}_{21})(1+P_{a}+P_{b}%
)}{(1+P_{a})(1+\breve{C}_{21})+P_{a}}\right] \nonumber\\
&  =R_{out}+\frac{1}{2}\log\left[  \frac{1+P_{a}+P_{b}}{(1+P_{a})(1+\tilde
{P}_{21})+P_{a}}\right] \nonumber\\
&  \geq R_{out}+\frac{1}{2}\log\left[  \frac{1+P_{a}+P_{b}}{1+2P_{a}+P_{b}%
}\right] \label{c2_e3}\\
&  \geq R_{out}-\frac{1}{2}, \label{c2_e4}%
\end{align}
where (\ref{c2_e2}) follows from (\ref{c2_c2}); and (\ref{c2_e3}) follows from
(\ref{c2_c1}). From (\ref{c2_e1b}) and (\ref{c2_e4}), we see that the theorem
holds for Case 2 as well.

\textit{Case 3}: Let
\begin{align}
\breve{C}_{12}  &  \leq\frac{P_{a}}{1+P_{b}} \label{c3_c1}%
\end{align}
and
\begin{align}
(1+P_{b})(1+\breve{C}_{12})  &  \leq(1+P_{a})(1+\breve{C}_{21}). \label{c3_c2}%
\end{align}
In this case, $R_{out}=\frac{1}{2}\log(1+P_{b})(1+\breve{C}_{12}).$ Case 3
follows similarly to Case 2.

Now, for the symmetric channel case, i.e., $\gamma_{11}^{2}=\gamma_{12}^{2} $,
that is, if $P_{a}=P_{b}\triangleq P$, we have to prove that $R_{OR}\geq
R_{out}-0.29.$ This follows similarly to the treatment above as%
\begin{align}
R_{OR}  &  \geq R_{out}-\frac{1}{2}\log\left[  \frac{1+P_{a}+P_{b}}%
{1+2P_{a}+P_{b}}\right] \nonumber\\
&  =R_{out}-\frac{1}{2}\log\left[  \frac{1+2P}{1+3P}\right] \nonumber\\
&  \geq R_{out}-\frac{1}{2}(\log3-1).
\end{align}

\section{Proof of Proposition \ref{prop10}}

\label{app:10}

To prove the theorem, we show that the bounds for $R_{1}$, $R_{2}$ and
$R_{1}+R_{2}$ in $\mathcal{C}_{OR}^{\mathcal{G}}(C)$ are all within
$\frac{\log(1+\beta)}{2}$ bits of the corresponding bounds in $\mathcal{C}%
_{CM-out}^{\mathcal{G}}(C)$. We define $\breve{C}\triangleq2^{2C}-1$. Without
loss of generality, we assume $b\geq a$, and define $x\triangleq aP$. Then
from the definition of $\beta$, we get $bP=\beta x$. The outer bound and the
achievable rates with one-round conferencing can now be written as
\begin{subequations}
\begin{align}
\mathcal{C}_{CM-out}^{\mathcal{G}}(C)  &  =\{(R_{1},R_{2}):R_{1}\geq
0,\text{{}}R_{2}\geq0,\nonumber\\
R_{1}  &  \leq\min\{\mathcal{C}(x)+C,~\mathcal{C}((1+\beta)x)\},\\
R_{2}  &  \leq\min\{\mathcal{C}(x)+C,~\mathcal{C}((1+\beta)x)\},\\
R_{1}+R_{2}  &  \leq\min\{\mathcal{C}((1+\beta)x)+C,\text{ }\mathcal{C}%
(2(1+\beta)x+(\beta-1)^{2}x^{2})\}\},
\end{align}
and
\end{subequations}
\begin{subequations}
\begin{align}
\mathcal{R}_{OR}^{\mathcal{G}}(C)  &  =\left\{  (R_{1},R_{2}):R_{1}%
\geq0,\text{ }R_{2}\geq0,\right. \nonumber\\
R_{1}  &  \leq\mathcal{C}\left(  \left(  1+\frac{\beta}{1+\sigma^{2}}\right)
x\right)  ,\\
R_{2}  &  \leq\mathcal{C}\left(  \left(  1+\frac{\beta}{1+\sigma^{2}}\right)
x\right)  ,\\
&  \left.  R_{1}+R_{2}\leq\mathcal{C}\left(  \left(  1+\beta+\frac{1+\beta
}{1+\sigma^{2}}+\frac{(\beta-1)^{2}x}{1+\sigma^{2}}\right)  x\right)
\right\}
\end{align}
with $\sigma^{2}\triangleq\frac{1+2(1+\beta)x+(\beta-1)^{2}x^{2}}%
{(1+(1+\beta)x)\breve{C}}$, respectively.

We first define functions $A$ and $B$ as
\end{subequations}
\begin{align}
A(x)  &  \triangleq1+(1+\beta)x\nonumber
\end{align}
and
\begin{align}
B(x)  &  \triangleq1+2(1+\beta)x+(\beta-1)^{2}x^{2}.
\end{align}

Consider the bound on $R_{1}$. We analyze two cases separately. If $\breve
{C}\geq\frac{\beta x}{1+x}$, then the outer bound is equivalent to $R_{1}%
\leq\frac{1}{2}\log A$. On the other hand, the bound on the achievable $R_{1}$
is found as
\begin{align}
\frac{1}{2}\log\left(  1+x+\frac{\beta x}{1+\frac{1+2(1+\beta)x+(\beta
-1)^{2}x^{2}}{(1+(1+\beta)x)\breve{C}}}\right)   &  =\frac{1}{2}\log\left(
1+x+\frac{\beta x}{1+\frac{B}{A\breve{C}}}\right) \nonumber\\
&  \geq\frac{1}{2}\log\left(  1+x+\frac{\beta^{2}Ax^{2}}{\beta Ax+(1+x)B}%
\right) \nonumber\\
&  \geq\frac{1}{2}\log\left[  A\frac{(\beta^{2}+\beta)x^{2}+\beta
x+(1+x)^{2}\left(  1+\frac{(\beta-1)^{2}}{\beta+1}x\right)  }{\beta
Ax+(1+x)B}\right] \nonumber
\end{align}
\begin{align}
&  \geq\frac{1}{2}\log\frac{A}{1+\beta}\nonumber\\
&  =\frac{1}{2}\log A-\frac{1}{2}\log(1+\beta). \label{cMAC_b1}%
\end{align}
If $\breve{C}\leq\frac{\beta x}{1+x}$, then the outer bound is equivalent to
$R_{1}\leq\frac{1}{2}\log((1+\breve{C})(1+x))$. The achievable rate bound can
be written as
\begin{align}
\frac{1}{2}\log\left(  1+x+\frac{\beta x}{1+\frac{B}{A\breve{C}}}\right)   &
=\frac{1}{2}\log\left(  \frac{(1+x)(A\breve{C}+B)+\beta A\breve{C}x}%
{A\breve{C}+B}\right) \nonumber\\
&  \geq\frac{1}{2}\log(1+x)\left(  \frac{(1+x)B+A^{2}\breve{C}}{\beta
xA+(1+x)B}\right) \nonumber\\
&  \geq\frac{1}{2}\log(1+x)\frac{1+\breve{C}}{1+\beta}\nonumber\\
&  =\frac{1}{2}\log(1+x)(1+\breve{C})-\frac{1}{2}\log(1+\beta).
\label{cMAC_b2}%
\end{align}
Combining (\ref{cMAC_b1}) and (\ref{cMAC_b2}), we conclude that the difference
between the achievable rate bound and the outer bound on $R_{1}$ is not more
than $\frac{1}{2}\log(1+\beta)$ bits. The same result applies for the bounds
on $R_{2}$ in the same way.

Next, we consider the bounds on the sum-rate. If $\breve{C}\geq\frac{B-A}{A}$,
then the outer bound on the sum-rate is equivalent to $R_{1}+R_{2}\leq\frac
{1}{2}\log B$. On the other hand, the bound on achievable sum-rate is
\begin{align}
&  \frac{1}{2}\log\left[  1+(1+\beta)x\left(  1+\frac{1}{1+\sigma^{2}}\right)
+\frac{(\beta-1)^{2}x^{2}}{1+\sigma^{2}}\right] \nonumber\\
&  \geq\frac{1}{2}\log\left[  1+(1+\beta)x\left(  1+\frac{1}{2+\frac{A}{B-A}%
}\right)  +\frac{(\beta-1)^{2}x^{2}}{2+\frac{A}{B-A}}\right] \nonumber\\
&  =\frac{1}{2}\log\left(  A+\frac{B-A}{2+\frac{A}{B-A}}\right) \nonumber\\
&  =\frac{1}{2}\log\left(  \frac{B^{2}}{2B-A}\right) \nonumber\\
&  \geq\frac{1}{2}\log B-\frac{1}{2}. \label{cMAC_sb1}%
\end{align}
If $\breve{C}\leq\frac{B-A}{A}$, then the sum-rate outer bound is equivalent
to $R_{1}+R_{2}\leq\frac{1}{2}\log(1+\breve{C})A$. The achievable sum-rate
bound is
\begin{align}
&  \frac{1}{2}\log\left[  1+(1+\beta)x\left(  1+\frac{1}{1+\sigma^{2}}\right)
+\frac{(\beta-1)^{2}x^{2}}{1+\sigma^{2}}\right] \nonumber\\
&  \geq\frac{1}{2}\log\left[  1+(1+\beta)x\left(  \frac{2\breve{C}A+B}%
{\breve{C}A+B}\right)  +\frac{(\beta-1)^{2}x^{2}\breve{C}A}{\breve{C}%
A+B}\right] \nonumber\\
&  =\frac{1}{2}\log\left[  A+\frac{A\breve{C}(B-A)}{A\breve{C}+B}\right]
\nonumber\\
&  =\frac{1}{2}\log\left[  \frac{AB(1+\breve{C})}{A\breve{C}+B}\right]
\nonumber\\
&  \geq\frac{1}{2}\log\left[  A(1+C)\frac{B}{2B-A}\right] \nonumber\\
&  \geq\frac{1}{2}\log A(1+\breve{C})-\frac{1}{2}. \label{cMAC_sb2}%
\end{align}
From (\ref{cMAC_sb1}) and (\ref{cMAC_sb2}), we see that the difference between
the achievable sum-rate bound and the sum-rate outer bound is always within
half a bit. The claim for the case $a=b$ can be similarly proved. This
concludes the proof.

\section{Proof of Proposition \ref{prop12}}

\label{app:12}

In order for rates ($R_{1},R_{2}$) to be achievable, by Fano's inequality, we
have for the first receiver (see also proof of Proposition \ref{prop1}):
\begin{equation}
H(W_{1},W_{2}|Y_{1}^{n},V_{2}^{K})\leq n\delta_{n} \label{cme1}%
\end{equation}
with $\delta_{n}\rightarrow0$ as $n\rightarrow\infty$. From the previous
inequality, it also follows that ($i=1,2$)
\begin{subequations}
\label{cme2}%
\begin{align}
H(W_{1},W_{2}|Y_{1}^{n},\bar{V}_{1}^{\bar{K}},\bar{V}_{2}^{\bar{K}},V_{2}%
^{K})  &  \leq n\delta_{n},\label{first cme}\\
H(W_{1}|Y_{1}^{n},\bar{V}_{1}^{\bar{K}},\bar{V}_{2}^{\bar{K}},V_{2}^{K}%
,W_{2})  &  \leq n\delta_{n} \mbox{ and }\\
H(W_{2}|Y_{1}^{n},\bar{V}_{1}^{\bar{K}},\bar{V}_{2}^{\bar{K}},V_{2}^{K}%
,W_{1})  &  \leq n\delta_{n},\text{ }%
\end{align}
where $\bar{V}_{1}^{\bar{K}},\bar{V}_{2}^{\bar{K}}$ represent the signals
exchanged during the $\bar{K}$ encoder-side conferencing rounds$.$ Now, we can
treat (\ref{cme1})-(\ref{cme2}) similarly to the proof of Proposition
\ref{prop1} and using the approach in \cite{Willems}. For instance, from
(\ref{first cme}), we have
\end{subequations}
\begin{align*}
n(R_{1}+R_{2})  &  \leq I(W_{1},W_{2};Y_{1}^{n},\bar{V}_{1}^{\bar{K}},\bar
{V}_{2}^{\bar{K}},V_{2}^{K})+n\delta_{n}\\
&  \leq I(W_{1},W_{2};Y_{1}^{n}|\bar{V}_{1}^{\bar{K}},\bar{V}_{2}^{\bar{K}%
})+I(W_{1},W_{2};\bar{V}_{1}^{\bar{K}},\bar{V}_{2}^{\bar{K}})\\
&  ~~~~ +I(W_{1},W_{2};V_{2}^{K}|Y_{1}^{n},\bar{V}_{1}^{\bar{K}},\bar{V}%
_{2}^{\bar{K}})+n\delta_{n}\\
&  \overset{(a)}{\leq}I(W_{1},W_{2};Y_{1}^{n}|\bar{V}_{1}^{\bar{K}},\bar
{V}_{2}^{\bar{K}})+n(\bar{C}_{12}+\bar{C}_{21})+nC_{21}+n\delta_{n}\\
&  \overset{(b)}{\leq}%
{\displaystyle\sum\limits_{i=1}^{n}}
I(X_{1,i},X_{2,i};Y_{1,i}|\bar{V}_{1}^{\bar{K}},\bar{V}_{2}^{\bar{K}}%
)+n(\bar{C}_{12}+\bar{C}_{21})+nC_{21}+n\delta_{n},
\end{align*}
where (a) follows from the fact that $I(W_{1},W_{2};\bar{V}_{1}^{\bar{K}}%
,\bar{V}_{2}^{\bar{K}})\leq H(\bar{V}_{1}^{\bar{K}},\bar{V}_{2}^{\bar{K}})\leq
n(\bar{C}_{12}+\bar{C}_{21})$ and $I(W_{1},W_{2};V_{2}^{K}|Y_{1}^{n},\bar
{V}_{1}^{\bar{K}},\bar{V}_{2}^{\bar{K}})\leq H(V_{2}^{K})\leq nC_{21}$, and
(b) is obtained similarly to \cite{Willems}, Sec. 3.4$.$ From (\ref{cme1}) and
the remaining inequalities in (\ref{cme2}), using similar arguments as in the
above chain of inequalities, one obtains, respectively,
\begin{align*}
n(R_{1}+R_{2})  &  \leq%
{\displaystyle\sum\limits_{i=1}^{n}}
I(X_{1,i},X_{2,i};Y_{1,i})+nC_{21}+n\delta_{n},\\
nR_{1}  &  \leq%
{\displaystyle\sum\limits_{i=1}^{n}}
I(X_{1,i};Y_{1,i}|X_{2,i},\bar{V}_{1}^{\bar{K}},\bar{V}_{2}^{\bar{K}}%
)+n\bar{C}_{12}+nC_{21}+n\delta_{n} \mbox{ and }\\
nR_{2}  &  \leq%
{\displaystyle\sum\limits_{i=1}^{n}}
I(X_{2,i};Y_{1,i}|X_{1,i},\bar{V}_{1}^{\bar{K}},\bar{V}_{2}^{\bar{K}}%
)+n\bar{C}_{21}+nC_{21}+n\delta_{n}.
\end{align*}
Now defining $U_{i}=(\bar{V}_{1}^{\bar{K}},\bar{V}_{2}^{\bar{K}})$, the proof
is completed similarly to Proposition \ref{prop1}, and by repeating the same
arguments for receiver 2.

\end{document}